\begin{document}
\frontmatter      % Begin Roman style (i, ii, iii, iv...) page numbering

% Set up the Title Page
%\title  {MagController: Touchless Magnetic Computing Devices Controller}
\title{Shortest Path Algorithms between Theory and Practice}
\authors  {\texorpdfstring
            {\href{your web site or email address}{Author Name}}
            {Author Name}
            }
\addresses  {\groupname\\\deptname\\\univname}  % Do not change this here, instead these must be set in the "Thesis.cls" file, please look through it instead
\date       {\today}
\subject    {}
\keywords   {}

\maketitle
%% ----------------------------------------------------------------

\setstretch{1.3}  % It is better to have smaller font and larger line spacing than the other way round

% Define the page headers using the FancyHdr package and set up for one-sided printing
\fancyhead{}  % Clears all page headers and footers
\rhead{\thepage}  % Sets the right side header to show the page number
%\lhead{}  % Clears the left side page header

\pagestyle{fancy}  % Finally, use the "fancy" page style to implement the FancyHdr headers

%% ----------------------------------------------------------------

\setstretch{1.3}  % Reset the line-spacing to 1.3 for body text (if it has changed)

% The Acknowledgements page, for thanking everyone
\acknowledgements{
\addtocontents{toc}{\vspace{1em}}  % Add a gap in the Contents, for aesthetics

All praise be to Allah, the lord of the worlds, the most gracious, the most merciful.\\

\textbf{Dedication} : This thesis is dedicated to my friend \textit{Safia Salem} may Allah have mercy on her.\\ 

Foremost, I would like to express my sincere gratitude to my supervisors : Prof.Amr Elmasry and Dr.Ayman Khalafalla for guiding me through my MSc. study, introducing me to many exciting research problems, and working by my side over the past couple of years.\\

I am most indebted to my friends who supported me in many ways. I have to thank Christine Basta, Riham Salah, Rania Sherif, Sarah Ahmed, Doaa Nasser, Ahmed Elamrawy, Ahmed Ezzet, Ahmed Azaz, Ahmed Zaitoon, Ziad Ragab, Mohamed Youssef, Mostafa Gad, Walaa Wanis and Eman Ahmed  
for their good company along this journey.\\

Last but not least; I would like to thank my family..\ldots
}
\clearpage  % End of the Acknowledgements
%% ----------------------------------------------------------------
% Declaration Page required for the Thesis, your institution may give you a different text to place here
\Declaration{

\addtocontents{toc}{\vspace{1em}}  % Add a gap in the Contents, for aesthetics

I declare that no part of the work referred to in this thesis has been submitted in support of an application for another degree or qualification from this or any other University or Institution.

%I, AUTHOR NAME, declare that this thesis titled, `THESIS TITLE' and the work presented in it are my own. I confirm that:
%
%\begin{itemize} 
%\item[\tiny{$\blacksquare$}] This work was done wholly or mainly while in candidature for a research degree at this University.
% 
%\item[\tiny{$\blacksquare$}] Where any part of this thesis has previously been submitted for a degree or any other qualification at this University or any other institution, this has been clearly stated.
% 
%\item[\tiny{$\blacksquare$}] Where I have consulted the published work of others, this is always clearly attributed.
% 
%\item[\tiny{$\blacksquare$}] Where I have quoted from the work of others, the source is always given. With the exception of such quotations, this thesis is entirely my own work.
% 
%\item[\tiny{$\blacksquare$}] I have acknowledged all main sources of help.
% 
%\item[\tiny{$\blacksquare$}] Where the thesis is based on work done by myself jointly with others, I have made clear exactly what was done by others and what I have contributed myself.
%\\
%\end{itemize}
% 
% 
%Signed:\\
%\rule[1em]{25em}{0.5pt}  % This prints a line for the signature
% 
%Date:\\
%\rule[1em]{25em}{0.5pt}  % This prints a line to write the date
}
\clearpage  % Declaration ended, now start a new page

%% ----------------------------------------------------------------
% The "Funny Quote Page"
\pagestyle{empty}  % No headers or footers for the following pages

\null\vfill
%% Now comes the "Funny Quote", written in italics
%\textit{``Write a funny quote here.''}
%
%\begin{flushright}
%If the quote is taken from someone, their name goes here
%\end{flushright}
%
%\vfill\vfill\vfill\vfill\vfill\vfill\null
\clearpage  % Funny Quote page ended, start a new page
%% ----------------------------------------------------------------

% The Abstract Page
\addtotoc{Abstract}  % Add the "Abstract" page entry to the Contents
\abstract{

Utilizing graph algorithms is a common activity in computer science. 
Algorithms that perform computations on large graphs are not always efficient. 
This work investigates the \textit{Single-Source Shortest Path (SSSP)} problem, which is considered to be one of the most important and most studied graph problems. \\

This thesis contains a review about the SSSP problem in both theory and practice. In addition, it discusses a new single-source shortest-path algorithm that achieves the same $O(n \cdot m)$ time bound as the traditional Bellman-Ford-Moore algorithm but outperforms it and other state-of-the-art algorithms. \\

The work is comprised of three parts. The first, it discusses some basic shortest-path and negative-cycle-detection algorithms in literature from the theoretical and practical point of view.
The second, it contains a discussion of a new algorithm for the single-source shortest-path problem that outperforms most state-of-the-art algorithms for several well-known families of graphs. 
The main idea behind the proposed algorithm is to select the fewest most-effective vertices to scan. We also propose a discussion of correctness, termination, and the proof of the worst-case time bound of the proposed algorithm. 
This section also suggests two different implementations for the proposed algorithm, the first runs faster while the second performs a fewer number of operations. 
Finally, an extensive computational study of the different shortest paths algorithms is conducted. The results are proposed using a new evaluation metric for shortest-path algorithms. A discussion of the outcomes and the strengths and the weaknesses of various shortest path algorithms are also included in this work.

}

\clearpage  % Abstract ended, start a new page

%% ----------------------------------------------------------------

\pagestyle{fancy}  %The page style headers have been "empty" all this time, now use the "fancy" headers as defined before to bring them back

%% ----------------------------------------------------------------
%\lhead{\emph{Contents}}  % Set the left side page header to "Contents"
\tableofcontents  % Write out the Table of Contents

%% ----------------------------------------------------------------
%\lhead{\emph{List of Figures}}  % Set the left side page header to "List if Figures"
\listoffigures  % Write out the List of Figures

%% ----------------------------------------------------------------
%\lhead{\emph{List of Tables}}  % Set the left side page header to "List of Tables"
\listoftables  % Write out the List of Tables

%% ----------------------------------------------------------------
\setstretch{1.5}  % Set the line spacing to 1.5, this makes the following tables easier to read
\clearpage  % Start a new page
%\lhead{\emph{Abbreviations}}  % Set the left side page header to "Abbreviations"
\listofsymbols{ll}  % Include a list of Abbreviations (a table of two columns)
{
% \textbf{Acronym} & \textbf{W}hat (it) \textbf{S}tands \textbf{F}or \\
%\textbf{LAH} & \textbf{L}ist \textbf{A}bbreviations \textbf{H}ere \\
\textbf{SSSP} & \textbf{S}ingle \textbf{S}ource \textbf{S}hortest \textbf{P}ath. \\

\textbf{ZDO} & \textbf{Z}ero \textbf{D}egree \textbf{O}nly algorithm. \\

\textbf{ZDO-Bits} & \textbf{Z}ero \textbf{D}egree \textbf{O}nly Bits algorithm. \\

\textbf{GoR} & \textbf{G}oldberg and \textbf{R}adzik algorithm. \\

\textbf{Tar} & \textbf{T}arjan algorithm. \\

\textbf{Pal} & \textbf{P}allotino algorithm. \\

\textbf{S-grids} & \textbf{S}quare grids family. \\

\textbf{W-grids} & \textbf{W}ide grids family. \\

\textbf{L-grids} & \textbf{L}ong grids family. \\

\textbf{PH-grids} & \textbf{P}ositive \textbf{H}ard grids family. \\

\textbf{NH-grids} & \textbf{N}egative \textbf{H}ard grids family. 
\\
\textbf{S-rand} & \textbf{S}parse random family. \\

\textbf{D-rand} & \textbf{D}ense random family. \\

\textbf{P-rand} & \textbf{P}otential random family. \\

\textbf{PD2S-rand} & \textbf{P}otential \textbf{D}ense to \textbf{S}parse random family. \\

\textbf{PS-rand} & \textbf{P}otential with artificial \textbf{S}ource random family. \\

\textbf{PC-rand} & \textbf{P}otential with Hamiltonian  \textbf{C}ycle random family. \\

\textbf{SPACYC} & \textbf{S}hortest \textbf{P}ath acyclic generator. \\

\textbf{FP-acyc} & \textbf{F}ully \textbf{P}ositive acyclic family. \\

\textbf{FN-acyc} & \textbf{F}ully \textbf{N}egative acyclic family. \\

\textbf{P2N-acyc} & \textbf{P}ositive to \textbf{N}egative acyclic family. \\

\textbf{SQNC} & \textbf{S}quare grids with \textbf{N}egative \textbf{C}ycles family. \\

}

%% ----------------------------------------------------------------
%\clearpage  % Start a new page
%\lhead{\emph{Physical Constants}}  % Set the left side page header to "Physical Constants"
%\listofconstants{lrcl}  % Include a list of Physical Constants (a four column table)
%{
%% Constant Name & Symbol & = & Constant Value (with units) \\
%Speed of Light & $c$ & $=$ & $2.997\ 924\ 58\times10^{8}\ \mbox{ms}^{-\mbox{s}}$ (exact)\\
%}

%% ----------------------------------------------------------------
%\clearpage  %Start a new page
%\lhead{\emph{Symbols}}  % Set the left side page header to "Symbols"
%\listofnomenclature{lll}  % Include a list of Symbols (a three column table)
%{
%% symbol & name & unit \\
%$a$ & distance & m \\
%$P$ & power & W (Js$^{-1}$) \\
%& & \\ % Gap to separate the Roman symbols from the Greek
%$\omega$ & angular frequency & rads$^{-1}$ \\
%}
%% ----------------------------------------------------------------
% End of the pre-able, contents and lists of things
% Begin the Dedication page

\setstretch{1.3}  % Return the line spacing back to 1.3

\pagestyle{empty}  % Page style needs to be empty for this page
%\dedicatory{For/Dedicated to/To my\ldots}

\addtocontents{toc}{\vspace{2em}}  % Add a gap in the Contents, for aesthetics

%% ----------------------------------------------------------------
\mainmatter	  % Begin normal, numeric (1,2,3...) page numbering
\pagestyle{fancy}  % Return the page headers back to the "fancy" style

% Include the chapters of the thesis, as separate files
% Just uncomment the lines as you write the chapters

\chapter{Introduction}

The single-source shortest-path problem can be defined by $(G,s,l)$, where $G= (V,A)$ is a directed weighted graph, $V$
is the set of $n$ vertices, $A$ is the set of $m$ arcs, $s$ is the source vertex, and $l : A \to \mathbb{R}$ is a length function, where $l(u,v)$ is the length of the arc $(u,v)$. The shortest path is a path of arcs with the minimum total length. The shortest path is undefined if $G$ has a cycle with negative total length. The target is to get the shortest-path tree from $s$ to all vertices in $G$ according to the length function, or to alert that $G$ has a negative cycle. 

Since Bellman \cite{bellman1958routing}, Ford \cite{ford1962flows}, and Moore \cite{moore1959shortest} have developed their $O(n \cdot m)$ shortest-path algorithm, several attempts were unsuccessful to break this worst-case bound (except for some special cases \cite{ahuja1990faster,cohen2017negative,dijkstra1959note,gabow1989faster,goldberg1995scaling,klein2010,sedgewick1986,thorup1999,thorup2000,wagner2003}). Most notable is the well-known Dijkstra algorithm that only works for graphs with non-negative arc lengths \cite{dijkstra1959note}. 
On the other hand, several heuristics were developed to outperform the Bellman-Ford-Moore algorithm in practice, including: the deque algorithm of Levit and Livshits \cite{levit1972neleneinye} and Pape \cite{pape1974implementation}, the two-queue algorithm of Pallottino \cite{pallottino1984shortest}, the topological-scan algorithm of Goldberg and Radzik \cite{goldberg1993heuristic}, and the subtree-disassembly heuristic of Tarjan \cite{Tarjan81}.

If the graph contains cycles of negative length, all the aforementioned algorithms would report it but most likely not as fast as possible.
In the literature there are several algorithms with the primary objective of promptly detecting if a negative cycle exists \cite{cherkassky2009shortest,cherkassky1999negative,goldfarb1991shortest,lawler2001combinatorial,schwiegelshohn1987shortest,Tarjan81,wong2005negative}. 

Several shortest-path algorithms are based on the general \textit{label-correcting} method \cite{Bertsekas1993,cormen1998introduction,gallo1988shortest,shier1981properties,tarjan1983data}. A \textit{potential function}, with values in $\mathbb{R}$, is defined on the set of vertices and updated throughout the algorithm. For every vertex $v$, a parent pointer $p(v)$ is defined and aims to point to the parent of $v$ forming a parent graph $G_p$. When the algorithm terminates, if $G$ has no negative cycles, $G_p$ is indeed the shortest-path tree.

For every vertex $v$, the method maintains its potential $d(v)$, parent pointer $p(v)$, and status $S(v) \in \{unreached,$ $labeled, scanned\}$. The method starts by setting $d(s)=0$ and $S(s)$ = {\em labeled};
for every other vertex: $d(v)=\infty$, $p(v)=nil$, and $S(v)=$ {\em unreached}. 
Given a potential function $d$, the reduced-cost function $l_d: A \to \mathbb{R}$ for an arc $(u,v)$ is defined
\[l_d(u,v) = l(u,v) + d(u) - d(v).\]
An arc $(u,v)$ is \textit{admissible} if it has a non-positive reduced-cost function ($l_d(u,v) \leq 0$).
The \textit{admissible graph} $G_d = (V,A_d)$ has $A$, the set of admissible arcs.
The \textit{scan} operation, defined on a labeled vertex $u$, checks all outgoing arcs from $u$ for \textit{relaxation}.
An arc $(u,v)$ is relaxed if $l_d(u,v) < 0$ by setting $d(v) \leftarrow d(u) + l(u,v)$, making $S(v)$ {\em labeled} if it is not, and setting $p(v) \leftarrow u$.
After scanning a vertex $u$, $S(u)$ becomes {\em scanned}.
The method works in rounds until no more arcs can be relaxed. For each round, the scan operation is applied to some and possibly all the labeled vertices. Different strategies for selecting labeled vertices to be scanned and their scanning order lead to different algorithms.
The method terminates if and only if $G$ does not have negative cycles. 
In this case, the parent pointers define a shortest-path tree and, for any $v\in V$, the final value for $d(v)$ is
the shortest-path distance from $s$ to $v$. 
If $G$ has negative cycles, the label-correcting method can be easily modified to find such a cycle and terminate.

This work introduces a new algorithm for the single-source shortest-path problem that runs in $O(n \cdot m)$ time. 
The proposed algorithm outperforms most state-of-the-art algorithms for several well-known families of graphs.
A description for two implementations for the algorithm is proposed, the first runs faster while the second performs fewer relaxation checks.
The thesis is organized as follows: Chapter \ref{related-work} briefly sketches some basic competitor shortest-path and negative-cycle-detection algorithms from the literature. Chapter \ref{our-algorithm} includes the proposed algorithm. Chapter \ref{our-imp} includes two implementations for the proposed algorithm and analysis on the performance. In Chapter \ref{experiments} we present a new evaluation metric for shortest-path algorithms, the experimental results, and a discussion of these outcomes. Finally, the conclusion is in Chapter \ref{conclusions}.  
%-----------------------------------------------------------------------------------------------------
 % Introduction

\chapter{Literature Review}
\label{related-work}
There has been much work on the shortest-path problem in the literature. In this chapter, A brief description is proposed about some basic single-source shortest-path algorithms (Section \ref{SP}) and negative-cycle detection algorithms (Section \ref{NCD}).

\section{Shortest-path algorithms}
\label{SP}

\subsection{The Bellman-Ford-Moore algorithm}
The Bellman-Ford-Moore algorithm, due to Bellman \cite{bellman1958routing}, Ford \cite{ford1962flows}, and Moore \cite{moore1959shortest}, maintains the set of labeled vertices in a FIFO queue. A vertex that becomes labeled is inserted at the tail of the queue. Vertices are removed from the head of the queue to be scanned. See Figure \ref{e1}.  

\begin{theorem}
The Bellman-Ford-Moore algorithm runs in $O(n \cdot m)$ time \cite{bellman1958routing}.
\end{theorem}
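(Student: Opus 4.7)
The plan is to analyze the algorithm in \emph{phases}. Define phase $0$ to consist of the single scan of the source $s$, and phase $i+1$ to consist of scanning all vertices that were placed into the FIFO queue during phase $i$. The implementation maintains an ``in-queue'' flag so that a vertex is never enqueued more than once simultaneously; combined with the FIFO discipline, this ensures that all vertices inserted during phase $i$ are drained before any vertex inserted during phase $i+1$ is processed, so the partition is well defined.

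The heart of the argument is the invariant: \emph{after phase $i$ has completed, $d(v)$ is at most the length of a shortest walk from $s$ to $v$ using at most $i+1$ arcs, for every $v\in V$}. I would prove this by induction on $i$. The base case $i=0$ holds because scanning $s$ relaxes every outgoing arc of $s$, giving $d(v)\le l(s,v)$ for each out-neighbor and a trivial bound otherwise. For the inductive step, fix a shortest walk $s=v_0,v_1,\dots,v_k=v$ with $k\le i+2$ arcs. By the hypothesis applied to $v_{k-1}$, after phase $i$ we already have $d(v_{k-1})$ bounded by the length of the prefix walk $v_0,\dots,v_{k-1}$. The last time $d(v_{k-1})$ was decreased, $v_{k-1}$ was enqueued, hence scanned no later than phase $i+1$; that scan relaxes the arc $(v_{k-1},v_k)$ and yields the desired bound on $d(v)$.

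Since $G$ is assumed free of negative cycles, every shortest path uses at most $n-1$ arcs, so the invariant with $i=n-2$ shows that correct distances are obtained after at most $n-1$ phases. It remains to bound the cost of a single phase. Within one phase each vertex is scanned at most once: the in-queue flag rules out duplicate enqueueing, and any re-insertion of $v$ requires a relaxation that itself occurs inside the current phase, which by the partition pushes the re-insertion into the next phase. A scan of $u$ costs time proportional to its out-degree, so one phase costs $O\!\left(\sum_{u\in V}\deg^{+}(u)\right)=O(m)$. Multiplying by $n-1$ phases gives the claimed $O(n\cdot m)$ bound.

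The step I would treat most carefully is the bridge between the abstract phase decomposition and the concrete FIFO execution, in particular the fact that phase-$i$ insertions are fully drained before any phase-$(i+1)$ insertion is processed. This is the single combinatorial fact that both underpins the ``at most one scan per vertex per phase'' claim and forces the invariant to advance exactly one arc per phase; the rest of the proof is then a routine handshake argument on out-degrees.
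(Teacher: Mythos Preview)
Your argument is the standard and correct proof of the $O(n\cdot m)$ bound for the FIFO Bellman--Ford--Moore algorithm: the phase decomposition is well defined by the FIFO discipline plus the in-queue flag, the invariant you state is exactly the right one, and the per-phase $O(m)$ cost follows from the one-scan-per-vertex-per-phase observation.

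There is nothing to compare against, however, because the paper does not actually prove this theorem. It appears in the literature-review chapter as a cited result (with reference to Bellman's original paper) and no proof is supplied; the theorem is simply stated as background. So your write-up is not redundant with anything in the paper, but it also cannot be said to match or differ from the paper's own argument, since none is given. If anything, your proof is more detailed than what the paper offers for the analogous bound on its own ZDO algorithm, where the $O(m)$-per-round and at-most-$n-1$-rounds structure is invoked with considerably less care about the queue mechanics.
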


\begin{figure}[!tbh]
\centering
\includegraphics[width=0.5\textwidth]{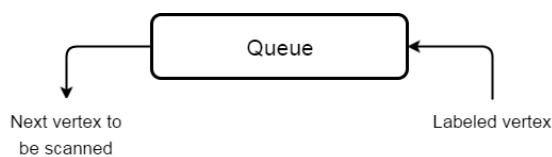}
\caption{The Bellman-Ford-Moore queue}\label{e1}
\end{figure}

\subsection{The D'Esopo-Pape algorithm}
Pape \cite{pape1974implementation} exploited a suggestion of D'Esopo \cite{pollack1960solutions} and proposed to use a deque to maintain the labeled vertices as shown in Figure \ref{e2}. A labeled vertex is inserted at the tail if it is the first time to be labeled and to the head otherwise. Vertices at the head are scanned first. 

\begin{figure}[!tbh]
\centering
\includegraphics[width=0.5\textwidth]{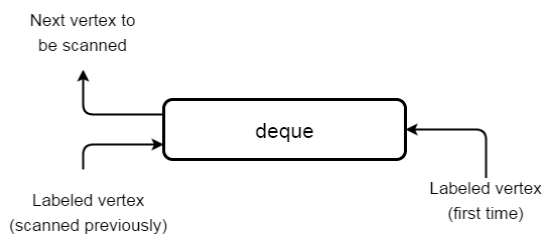}
\caption{The D'Esopo-Pape deque}\label{e2}
\end{figure}

\begin{theorem}
The D'Esopo-Pape algorithm runs in $O(n \cdot 2^n)$ time \cite{kershenbaum1981note,shier1981properties}.
\end{theorem}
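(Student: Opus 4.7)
The plan is to bound the total number of \emph{scan} operations performed over the entire execution, and then multiply by the $O(n)$ per-scan cost (since each scan of a vertex $u$ examines the at most $n-1$ arcs leaving $u$). Reducing the claim to a bound of $O(2^n)$ on the total number of scans would directly yield the claimed $O(n \cdot 2^n)$ time bound.

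The central step is a structural lemma on a single vertex: I would show that the maximum number of times any fixed vertex $v$ can be scanned during a D'Esopo-Pape execution on a $k$-vertex graph, call it $T(k)$, satisfies a \emph{doubling} recurrence of the form $T(k) \le 2\, T(k-1)$, proved by induction on $k$. The intuition is that between two successive scans of $v$, the vertex is absent from the deque, so the intervening deque activity involves only the remaining $k-1$ vertices. Because the D'Esopo-Pape policy places a relabeled vertex at the \emph{head} of the deque (rather than the tail, as in Bellman-Ford-Moore), this intervening activity behaves like a nested D'Esopo-Pape sub-execution on fewer vertices, which must run to completion before $v$ re-enters the deque through an outside chain of relaxations. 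Applying the inductive hypothesis to the sub-execution yields the recurrence, whose solution is $T(k) \le 2^{k-1}$.

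With the per-vertex bound $T(n) \le 2^{n-1}$ in hand, I would sum over all $n$ vertices and combine with the $O(n)$ per-scan cost to obtain a running time of $O(n \cdot 2^n)$. The hard part is making the inductive step rigorous: one must argue that the sequence of events between two consecutive scans of $v$ is faithfully modeled by a D'Esopo-Pape execution on a strictly smaller graph, so that the inductive hypothesis truly applies. This requires a careful analysis of how the label $d(v)$ can drop --- via which chain of ancestor relaxations --- and how the head-insertion rule forces those relaxation chains to be processed to completion before $v$ can be popped again. As a complement to the upper bound, one could exhibit an explicit family of $n$-vertex graphs along the lines of the construction of \cite{kershenbaum1981note}, on which the algorithm actually performs $\Omega(2^n)$ scans, confirming that the exponential behaviour is not merely an artifact of the analysis.
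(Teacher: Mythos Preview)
The paper does not actually prove this theorem; it merely states the bound and cites \cite{kershenbaum1981note,shier1981properties}. So there is no in-paper argument to compare against, only the classical proofs in those references.

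Your overall strategy --- bound the number of scans, then multiply by the per-scan cost --- is the right one and matches the spirit of Shier--Witzgall. However, the arithmetic at the end of your plan does not land on the stated bound. If every vertex may be scanned up to $2^{n-1}$ times, summing over $n$ vertices gives $n\cdot 2^{n-1}$ total scans, and multiplying by the $O(n)$ per-scan cost yields $O(n^2\cdot 2^n)$, not $O(n\cdot 2^n)$. The classical argument avoids this extra factor by proving a \emph{nonuniform} per-vertex bound: order the vertices $v_1,\ldots,v_n$ by the time of their first removal from the deque; then one shows by induction that $v_i$ is scanned at most $2^{i-1}$ times (the key point being that a re-insertion of $v_i$ at the head can only be triggered by a scan of some $v_j$ with $j<i$). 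Summing gives $\sum_{i=1}^n 2^{i-1}=2^n-1$ total scans, and the $O(n)$ per-scan cost then yields $O(n\cdot 2^n)$.

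Your doubling recurrence $T(k)\le 2\,T(k-1)$ is essentially the right mechanism, but it should be indexed by this first-removal ordering rather than by the size of an abstract subgraph. The step you correctly flag as delicate --- arguing that the deque activity between two consecutive scans of $v$ behaves like a D'Esopo--Pape execution on fewer vertices --- is exactly where this ordering is used: between two scans of $v_i$, only vertices $v_1,\ldots,v_{i-1}$ can cause $v_i$ to be relabeled and head-inserted, and the head-insertion rule ensures this nested activity is confined to those earlier-indexed vertices. With that refinement your plan would recover the Shier--Witzgall bound; the Kershenbaum lower-bound family you mention confirms tightness.
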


\subsection{Pallottino's algorithm}
The exponential worst-case behavior for Pape's algorithm is because he uses a stack for the relabeled vertices. 
Pallottino \cite{pallottino1984shortest} suggested using a queue instead of a stack, so the data structure is composed of two connected FIFO queues $Q_1$ and $Q_2$ as shown in Figure \ref{e3}. The next vertex to be scanned is removed from the head of $Q_1$ as long as it is not empty and from the head of $Q_2$ otherwise.  A vertex that becomes labeled is inserted at the tail of $Q_1$ if it has been scanned before and at the tail of $Q_2$ otherwise. 

\begin{theorem}
Pallottino's algorithm runs in $O(n^2 \cdot m)$ time \cite{pallottino1984shortest}.
\end{theorem}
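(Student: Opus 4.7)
The plan is to bound the total number of scan operations, since each scan of $v$ costs $O(\deg(v))$ and the total running time is $\sum_v s(v)\cdot\deg(v)$, where $s(v)$ is the number of times $v$ is scanned. The strategy combines a coarse partition of the execution into phases with a Bellman-Ford-Moore-style layered analysis inside each phase.

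First I would note that a vertex is inserted into $Q_2$ only when its status transitions from \emph{unreached} to \emph{labeled}, which happens at most once per vertex; hence at most $n$ insertions into $Q_2$ occur over the entire execution. I then define a \emph{phase} to be the maximal stretch of execution between two consecutive extractions from $Q_2$. Since $Q_2$ is consulted only when $Q_1$ is empty, each phase begins with $Q_1$ empty (other than the single seeding extraction from $Q_2$) and ends when $Q_1$ empties again, so the algorithm executes at most $n$ phases.

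Within a single phase, once the seeding vertex has been scanned, the algorithm processes $Q_1$ in strict FIFO order, which is exactly the Bellman-Ford-Moore FIFO procedure seeded with the current distance labels. I would then adapt the classical layered argument: group the $Q_1$ scans into rounds, where round $r$ scans precisely those vertices that were inserted into $Q_1$ due to relaxations performed during round $r-1$. By induction on $r$, after $r$ rounds the distance label of every vertex reflects the shortest walk from the seeding vertex that uses at most $r$ arcs. Assuming no negative cycles, shortest paths are simple and have at most $n-1$ arcs, so $Q_1$ drains in at most $n-1$ rounds. Each round scans any given vertex at most once and thus costs $O(m)$, so a phase costs $O(nm)$. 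Summing over the at most $n$ phases yields the claimed $O(n^2\cdot m)$ bound.

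The delicate step is verifying the FIFO round invariant for $Q_1$ inside a phase. One must check that after the initial $Q_2$-extraction, every insertion into $Q_1$ within the phase originates from a $Q_1$-scan performed earlier in the same phase, so no external event disturbs the FIFO layering; the rest should follow by a routine induction on the round index, mirroring the standard Bellman-Ford-Moore correctness proof, together with the observation that the preference given to $Q_1$ guarantees the phase boundary truly coincides with $Q_1$ becoming empty.
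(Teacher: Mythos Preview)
The paper does not actually prove this theorem; it is stated with a citation to Pallottino's original paper and no proof follows. So there is nothing in the paper to compare against, and your attempt must be judged on its own. Your overall architecture is the standard one and is correct: each vertex enters $Q_2$ at most once, hence at most $n$ phases, and an $O(nm)$ bound per phase yields $O(n^2 m)$ total.

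The gap is in how you justify the $O(nm)$ bound for a single phase. The invariant you state --- ``after $r$ rounds the distance label of every vertex reflects the shortest walk from the seeding vertex that uses at most $r$ arcs'' --- is not the right one. The labels $d(\cdot)$ are tentative distances from the global source $s$, not from the phase seed $u_0$, and $u_0$ enters the phase with an arbitrary finite label $d(u_0)$ rather than $0$ while the other vertices carry labels inherited from earlier phases. Even if you read the claim as $d(v)\le d(u_0)+(\text{shortest }\le r\text{-arc walk from }u_0\text{ to }v)$, this does not force $Q_1$ to drain after $n-1$ rounds, because it says nothing about whether the labels have reached their final $s$-distances; the Bellman--Ford layered invariant relies on starting from the clean initial condition $d(s)=0$, $d(v)=\infty$ otherwise, which you do not have here.

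What does work is a relaxation-chain argument. If some vertex is scanned in round $r$, trace the scan that inserted it back to a vertex scanned in round $r-1$, and so on down to $u_0$ in round $0$; this gives a walk $u_0=w_0,w_1,\dots,w_r$ with $d(w_i)^{(i)}\le d(w_{i-1})^{(i-1)}+l(w_{i-1},w_i)$, where $d(w_i)^{(i)}$ is the label of $w_i$ at the moment of its round-$i$ scan. If $r\ge n$ some vertex repeats, and since a vertex's label strictly decreases between two of its scans, telescoping over the repeated segment forces a negative-length cycle, contradicting the hypothesis. Hence each phase has at most $n$ rounds, each costing $O(m)$. Replacing your Bellman--Ford-style invariant by this chain argument closes the gap; the rest of your proof then goes through. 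Two minor remarks: the first insertions into $Q_1$ in a phase come from the $Q_2$-scan of the seed, not from a prior $Q_1$-scan, and some relaxations during the phase route vertices to $Q_2$ rather than $Q_1$; neither disturbs the round bound.
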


\begin{figure}[!tbh]
\centering
\includegraphics[width=1.0\textwidth]{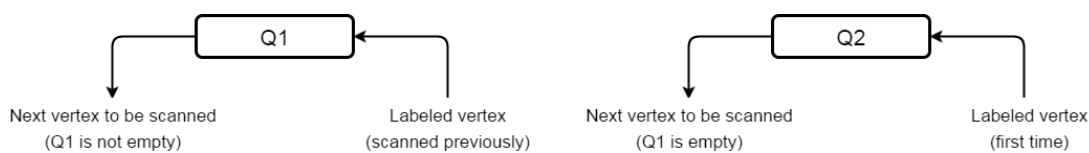}
\caption{Pallottino's queues}\label{e3}
\end{figure}

\subsection{The Goldfarb-Hao-Kai algorithm}
Goldfarb et al. \cite{goldfarb1991shortest} introduced the \textit{dynamic breadth-first search} algorithm based on maintaining levels. The algorithm maintains the label depth for each vertex; that is the number of arcs from the vertex to the root in the parent graph $G_p$. After removing a vertex from the queue, the algorithm scans it if its label depth is equal to the number of the current round. Otherwise, the vertex is put back on the queue. 

\begin{theorem}
The Goldfarb-Hao-Kai algorithm runs in $O(n \cdot m)$ time \cite{goldfarb1991shortest}.
\end{theorem}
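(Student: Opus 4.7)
The plan is to adapt the classical Bellman--Ford--Moore analysis, exploiting the fact that the Goldfarb--Hao--Kai algorithm explicitly partitions the computation into rounds indexed by label depth. I would first argue correctness and round-count, then bound the per-round work, and finally multiply.

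First I would set notation: let ``round $k$'' denote the phase during which the algorithm only scans vertices whose current label depth (the number of arcs from the vertex to $s$ in $G_p$) equals $k$. I would show by induction on $k$ that at the \emph{end} of round $k$, for every vertex $v$ reachable from $s$ by a path of at most $k$ arcs, $d(v)$ is no larger than the length of the shortest such path, and that $d(v)$ never exceeds this bound thereafter. The base case $k=0$ is immediate from the initialization $d(s)=0$. For the inductive step, any shortest path of $\leq k$ arcs decomposes as a path of $\leq k-1$ arcs to the predecessor $u$ of $v$ followed by $(u,v)$; by the hypothesis $d(u)$ is already correct before round $k$, and by the time round $k$ begins $u$ has label depth $\leq k-1$ and has been scanned, so the arc $(u,v)$ has been relaxed. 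Since any shortest simple path has at most $n-1$ arcs, at most $n-1$ rounds can reduce any potential, so in the absence of a negative cycle the algorithm terminates after at most $n-1$ productive rounds.

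Next I would bound the work performed during a single round $k$. The key structural observation is that a vertex $v$ is actually \emph{scanned} in round $k$ only when its label depth equals $k$; and because a scan of $u$ can set $p(v)\leftarrow u$ only if the new depth of $v$ would be depth$(u)+1>k$, no relaxation performed during round $k$ can create a new vertex of depth $k$. Hence the set $L_k$ of vertices that receive label depth exactly $k$ is determined \emph{before} round $k$ begins, and each $v\in L_k$ is scanned at most once during round $k$, contributing $O(\deg(v))$ work. Summed over the round this gives $O(m)$.

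The main obstacle is accounting for the ``put-back'' step, which, if mishandled, could in principle cause a vertex to be repeatedly removed from and reinserted into the queue without progress; the subtle point is showing that this does not inflate the per-round cost beyond $O(m)$. I would handle this by a potential argument: charge each queue removal either to a genuine scan of a depth-$k$ vertex (at most $|L_k|$ such events, amortized against the $O(m)$ scan cost) or to the one-time ``graduation'' of a vertex from one round to a later one, so that across the whole execution the total number of removal/reinsertion pairs is $O(n)$ per round and hence $O(n^2)$ overall, which is dominated by the scanning cost. Combining the $O(m)$ per-round bound with the $n-1$ round cap yields the total $O(n\cdot m)$ running time; detection of a negative cycle follows by observing that a round numbered $\geq n$ can only begin if some $G_p$-cycle exists, which is checked in $O(n)$ time per round.
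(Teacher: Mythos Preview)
First, note that the paper itself does not prove this theorem: it appears in the literature-review chapter and is simply attributed to the original source, so there is no in-paper argument to compare against. What follows is therefore an assessment of your proposal on its own merits.

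Your inductive step contains a genuine gap. You assert that ``by the time round $k$ begins $u$ has label depth $\le k-1$ and has been scanned'', but this need not hold in the dynamic-BFS algorithm. Take $s\to w_1$ (length $0$), $w_1\to w_2$ and $w_1\to u$ (length $0$), $w_2\to u$ (length $-1$), $u\to v$ (length $0$), with $w_1$'s adjacency list ordered so that $w_2$ is enqueued before $u$. Then both $w_2$ and $u$ enter round~$2$ with depth~$2$; $w_2$ is dequeued first, its scan relaxes $u$ and bumps $\mathrm{depth}(u)$ to $3$, and $u$ is then dequeued with depth $3$ and put back without being scanned. Hence at the start of round~$3$ we have $\mathrm{depth}(u)=3$ and $u$ has \emph{not} been scanned since $d(u)$ fell to or below $\delta_{2}(u)=0$. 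Your justification that $(u,v)$ has already been relaxed with the good value of $d(u)$ therefore fails. The Bellman--Ford--style invariant $d(v)\le\delta_k(v)$ may in fact still hold for this algorithm, but it does not follow from the step you wrote; establishing it would require a more delicate argument tracking how depth-bumps interact with the FIFO order, and you have not supplied one.

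The standard route to the $O(n\cdot m)$ bound for Goldfarb--Hao--Kai avoids this invariant entirely. The label depth of a vertex is precisely its depth in the parent graph $G_p$; as long as $G$ has no negative cycle, $G_p$ remains acyclic, so every label depth is at most $n-1$ and the round counter can never exceed $n-1$. With a marker or two-queue implementation each vertex is dequeued $O(1)$ times per round and scanned at most once, giving $O(n+m)$ work per round and $O(n\cdot m)$ overall; correctness at termination is immediate because an empty queue means no labeled vertices and hence no relaxable arcs. Your per-round cost analysis is essentially fine; it is the round-count argument that should be replaced by the depth bound rather than by the Bellman--Ford invariant.
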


\subsection{The Goldberg-Radzik algorithm}
Goldberg and Radzik \cite{goldberg1993heuristic} proposed the \textit{topological-scan} algorithm that achieves the same worst-case bound of the Bellman-Ford-Moore algorithm. 
Their algorithm maintains the set of labeled vertices in two queues $A$ and $B$. Each labeled vertex is in exactly one set. Initially $A=\phi$ and $B=\{s\}$.
At the beginning of each round, the algorithm uses the set $B$ to compute the set of vertices $A$ to be scanned during
this round, and resets $B$ to the empty set.  
The algorithm essentially assumes that all the vertices are unlabeled after each round.
During the round, vertices are removed according to the ordering of $A$ and scanned. 
The newly created labeled vertices (these vertices called
\textit{touched vertices}) are added to $B$. A round ends when $A$ becomes empty. The algorithm terminates when $B$ becomes empty
at the end of a round.
The algorithm computes $A$ from $B$ as follows.
(1) For every $v\in B$ that has no outgoing arc with negative reduced-cost, delete $v$ from $B$
and mark it as scanned.
(2) Let $A$ be the set of vertices reachable from $B$ in the admissible graph $G_d$. Mark all vertices in $A$ as labeled.
(3) Apply topological sorting to order $A$ so that for every pair of vertices $v$ and $w$ in $A$ 
where $(v,w) \in G_d$, $v$ precedes $w$ and will be scanned before it. 
 
\begin{theorem}
The Goldberg-Radzik algorithm runs in $O(n \cdot m)$ time \cite{goldberg1993heuristic}.
\end{theorem}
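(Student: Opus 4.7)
The plan is to establish the bound by multiplying two separate estimates: each round runs in $O(m)$ time, and the algorithm performs at most $n$ rounds.

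For the per-round cost, I would dissect the three phases that turn $B$ into $A$ and then scan $A$. Filtering $B$ to discard every $v$ whose outgoing arcs all have non-negative reduced cost costs $O(\sum_{v\in B}\mathrm{outdeg}(v))$, which is at most $O(m)$. Computing the set reachable from the survivors in the admissible graph $G_d$ is a depth-first search on $G_d$, costing $O(|V|+|A_d|)=O(m)$; the topological order of $A$ can be produced as a byproduct of the same DFS using its post-order. The scan phase then examines every outgoing arc of every vertex in $A$ at most once, adding another $O(m)$. Summing these, each round costs $O(m)$.

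For the round count I would establish the invariant that after $k$ completed rounds, $d(v)$ equals the true shortest-path distance for every vertex $v$ whose optimal $s$-$v$ walk uses at most $k$ arcs, and that such a $v$ is never scanned again in a later round. The inductive step relies on the topological scanning order: inside round $k{+}1$, if the optimal distance to $v$ is witnessed by a path $s=u_0\to u_1\to\cdots\to u_j=v$ with $u_0,\ldots,u_{j-1}$ already correctly labelled (by induction) and $j\le k{+}1$, then at the start of the round $u_{j-1}\in B$ and the arc $(u_{j-1},v)$ is admissible; because topological order scans $u_{j-1}$ before any vertex that could subsequently alter $d(v)$ within this round, the relaxation through $(u_{j-1},v)$ fixes $d(v)$ to its optimum. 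Once $d(v)$ is optimal, all reduced costs out of $v$ are non-negative, so $v$ is filtered out during the next computation of $A$ and never returns. Since every shortest path uses at most $n{-}1$ arcs, at most $n$ rounds occur before $B$ empties.

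The main technical obstacle is making the inductive progress claim airtight, because $d$ and $G_d$ both evolve during a round as vertices are scanned. The delicate point is to argue that the in-round topological order is consistent enough that no useful chain of relaxations is broken by an intermediate update, so every not-yet-completed shortest path sees at least one additional arc locked in per round; care is also needed when several admissible paths to the same vertex coexist, since the DFS choice of topological order must not preempt a still-to-arrive improvement. Once this progress lemma is secured, the per-round $O(m)$ bound is routine and the theorem follows by multiplication.
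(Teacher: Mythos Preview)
The paper does not supply a proof of this theorem; it appears in the literature-review chapter and is simply stated with a citation to the original Goldberg--Radzik paper. Your overall plan---$O(m)$ work per round times at most $n$ rounds, via the Bellman--Ford-style invariant that after round $k$ every vertex whose shortest path has at most $k$ arcs is correctly labelled---is the standard one and is sound in outline.

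Two points in your sketch need repair. First, the sentence ``Once $d(v)$ is optimal, all reduced costs out of $v$ are non-negative'' is false: if $(v,w)$ is an arc and $d(w)$ is still larger than the true distance to $w$, then $l_d(v,w)=l(v,w)+d(v)-d(w)$ can well be negative even though $d(v)$ is optimal. Fortunately you do not need this claim; termination follows because after round $n{-}1$ \emph{all} potentials are correct, so no arc anywhere has negative reduced cost and step~(1) empties $B$. Second, your identification of the ``main technical obstacle'' is off target. Topological order is not what secures the one-arc-per-round progress: since $d(u_{j-1})$ is already optimal at the start of round $k{+}1$ and can never decrease further, scanning $u_{j-1}$ at \emph{any} position in the round fixes $d(v)$. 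The genuine missing step is the assertion you make without argument, namely that $u_{j-1}\in B$ (hence in $A$) at the start of round $k{+}1$. The reason is this: if $u_{j-1}$ had received its correct potential in some earlier round $r<k$, it would have entered $B$ for round $r{+}1$, survived step~(1) (because $(u_{j-1},v)$ already had negative reduced cost then, as $d(v)$ was at least as large as it is now), been scanned, and relaxed $(u_{j-1},v)$---making $d(v)$ correct already, a contradiction. Hence $u_{j-1}$ was first correctly labelled precisely in round $k$ and therefore sits in $B$ for round $k{+}1$. That lemma is what you should isolate and prove; once you have it, the rest is routine and the topological-order worries dissolve.
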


\subsection{Dijkstra-based algorithms}
 Dijkstra's algorithm \cite{dijkstra1959note} works only for graphs with non-negative arc lengths. Each round, the algorithm selects a labeled vertex with the minimum potential to be scanned next. Once a vertex is scanned, it will never be scanned again. The worst-case complexity of Dijkstra's algorithm depends on the data structure of finding the labeled vertex with the minimum potential. Suggested implementations use one-level R-heaps \cite{ahuja1990faster}, k-ary heaps \cite{cormen1998introduction}, and Fibonacci heaps\cite{fredman1987fibonacci}.

\begin{theorem}
Classical implementations for Dijkstra's algorithm run in $O(n^2)$ time or $O(m \cdot \lg n)$ time \cite{dijkstra1959note}. Using Fibonacci heaps, Dijkstra's algorithm runs in $O(m + n \cdot \lg n)$ \cite{fredman1987fibonacci}.
\end{theorem}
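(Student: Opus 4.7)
The plan is to reduce all three bounds to a count of priority-queue operations, and then to plug in the cost of those operations under each data structure. First I would argue that on a graph with non-negative arc lengths, Dijkstra's strategy of always scanning the labeled vertex of minimum potential guarantees that, at the moment such a vertex $u$ is selected, $d(u)$ equals the true shortest-path distance from $s$ to $u$. The key inductive step is: any alternative $s$-to-$u$ path must leave the already-scanned region through some currently-labeled vertex $v$, whose potential is at least $d(u)$; since all remaining arc lengths are non-negative, the alternative path has length at least $d(u)$. As a corollary, once a vertex is scanned, it is never relabeled and hence never scanned again.

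From this correctness fact I would extract the operation count on which every variant rests: over the whole execution there are exactly $n$ ExtractMin operations (one per vertex) and at most $m$ DecreaseKey operations (one per arc relaxation, since the tail of each arc is scanned only once). All other bookkeeping per arc is $O(1)$, so the total running time is
\[T(n,m)\;=\;n\cdot T_{\mathrm{EM}}\;+\;m\cdot T_{\mathrm{DK}}\;+\;O(n+m),\]
where $T_{\mathrm{EM}}$ and $T_{\mathrm{DK}}$ denote the (possibly amortized) costs of ExtractMin and DecreaseKey in the chosen priority queue.

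Plugging in the three data structures finishes each bound. With a plain array of potentials indexed by vertex, ExtractMin costs $O(n)$ by a linear scan and DecreaseKey costs $O(1)$, yielding $n\cdot O(n)+m\cdot O(1)=O(n^2)$. With a binary (or $k$-ary) heap keyed by potential, both ExtractMin and DecreaseKey cost $O(\lg n)$, yielding $O((n+m)\lg n)=O(m\lg n)$ under the usual assumption that every vertex is reachable so $m\ge n-1$. With a Fibonacci heap, the amortized costs are $O(\lg n)$ for ExtractMin and $O(1)$ for DecreaseKey, giving $O(m+n\lg n)$.

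The only genuinely delicate step is the correctness argument in the first paragraph, because it is where non-negativity of the arc lengths is essential and where the ``scanned once'' property is established; everything downstream is then a mechanical substitution into the operation-count formula. I would also note that the $O(m\lg n)$ form hides the mild assumption $m=\Omega(n)$; for very sparse graphs the binary-heap bound is more accurately written $O((n+m)\lg n)$, which for the purposes of this theorem is subsumed by the Fibonacci-heap bound anyway.
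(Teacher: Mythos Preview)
Your argument is correct and is the standard textbook derivation: establish the ``each vertex is scanned once'' property from non-negativity, count $n$ ExtractMin and at most $m$ DecreaseKey operations, and substitute the per-operation costs for an array, a binary heap, and a Fibonacci heap. The note about $m=\Omega(n)$ being implicit in writing $O(m\lg n)$ rather than $O((n+m)\lg n)$ is a fair caveat.

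However, there is nothing in the paper to compare it against. This theorem appears in the literature-review chapter and is stated without proof; the paper simply cites \cite{dijkstra1959note} and \cite{fredman1987fibonacci} for the respective bounds. All of the theorems in Section~\ref{SP} (Bellman--Ford--Moore, D'Esopo--Pape, Pallottino, Goldfarb--Hao--Kai, Goldberg--Radzik, Dijkstra) are handled the same way: stated with a citation, no proof given. So your write-up is not an alternative to the paper's proof but rather a proof the paper chose to omit entirely.
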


If there are arcs with negative lengths, some suggestions \cite{Dinitz2010,fuji1981,nakayama2013} are to handle the problem by consecutive applications of Dijkstra's method to a serious of sub-problems. Such algorithms are referred to as Dijkstra-based algorithms. The running time of the Dijkstra-based algorithms depend on the number of negative arcs and their distribution within the graph. When the number of negative arcs is large, the performance of these algorithms degrade and they cannot compete with other algorithms in practice. 

%------------------------------------------------------------------------------------------------
\section{Negative-cycle detection algorithms}
\label{NCD}

This section discusses some cycle-detection algorithms that can be used with the shortest-path algorithms (more details are in \cite{cherkassky1999negative}). 
%------------------------------------------------------------------------------------------------
\subsection{Time out}
Every labeling algorithm terminates after a certain number of labeling operations in the
absence of negative cycles. If this number is exceeded, the algorithm can stop and declare that the
graph has a negative cycle.
A major disadvantage of this method is that the number
of labeling operations used to report a negative cycle is equal to the worst-case bound.
%------------------------------------------------------------------------------------------------
\subsection{Admissible-graph search}
Admissible-graph search \cite{goldberg1995scaling} is based on the fact that the graph $G$ has a negative cycle if and only if 
the admissible graph $G_d$ will have a cycle.
Using depth-first search, one can periodically check if $G_d$ has a cycle in $O(n+m)$ time.
Admissible-graph search is a natural cycle-detection strategy for the Goldberg-Radzik algorithm, 
which anyhow executes a depth-first search of $G_d$ at each round to perform the topological sorting. 
%------------------------------------------------------------------------------------------------
\subsection{Walk to the root}
If $G$ contains a negative cycle reachable from $s$, then after a finite number of labeling operations the parent graph $G_p$ will have a cycle \cite{cherkassky1999negative}.
Suppose a relaxation operation applies to an arc $(u,v)$, 
this operation will create a cycle in $G_p$ if and only if $v$ is an ancestor of $u$ in
the current tree. Before applying the labeling operation, the algorithm follows the parent pointers
from $u$ until it reaches $v$ or $s$. If it stops at $v$, a negative cycle is found; otherwise,
the labeling operation does not create a cycle.
This method gives immediate cycle detection and can be easily combined with any
labeling algorithm. However, since paths to the root can be long, the cost of the labeling
operation becomes $O(n)$.
%------------------------------------------------------------------------------------------------
\subsection{Subtree traversal}
After a relaxation is applied to an arc $(u,v)$, instead of walking upwards to the root of the parent graph $G_p$ starting from $u$ 
looking for $v$, this method traverses the subtree rooted at $v$ looking for $u$ \cite{cherkassky2009shortest}.
In general, subtree traversal also increases the cost of the labeling operation to $O(n)$.
A good way to implement this  method is by using standard techniques from the network simplex method for minimum-cost flows.
%------------------------------------------------------------------------------------------------
\subsection{Subtree disassembly}
Subtree disassembly amortizes the subtree traversal over the work of building the subtree. 
When a relaxation is applied to an arc $(u,v)$, the subtree rooted at $v$ in $G_p$ is
traversed to find if it contains $u$ (in which case there is a negative cycle). If $u$ is not in
the subtree, all vertices of the subtree except $v$ are 
marked as \textit{unreached}. The \textit{scan} operation does not apply to these vertices until they
are labeled. Because this strategy changes some
labeled vertices to unreached, it changes the way the underlying scanning algorithm
works. A combination of the FIFO selection rule and subtree disassembly yields Tarjan's
negative-cycle detection algorithm \cite{Tarjan81}.

%-------------------------------------------------------------------------------------------------

\chapter{The New Methodology}
\label{our-algorithm}
This chapter describes a new algorithm for the Shortest Path Problem, discusses its correctness, and proves the worst-case time bound.

\section{Main ideas}

The proposed algorithm applies the label-correcting method and as well executes in rounds. 

Recall that an arc $(u,v)$ is relaxable if it has negative reduced cost, i.e., $l_d(u,v) < 0$.
Let $A'_d$ be the set of relaxable arcs with respect to the potential function $d$.
The \textit{relaxable graph} is defined as the graph $G'_d = (V,A'_d)$.
The relaxable graph dynamically changes with every scan operation.
A vertex is declared as \textit{touched} in a round if its potential is decreased during this round 
and it has not been scanned afterward.
Initially, the source is tagged as the only touched vertex.

The objective is to use the relaxable graph to decide, with least effort, 
the most effective vertices to scan.
In contrast to the Goldberg-Radzik algorithm, the proposed algorithm considers the relaxable graph instead of the admissible graph, and only works with a subset of the vertices. Instead of generating the whole admissible graph and topologically sorting its vertices, in each round, this algorithm only looks for the touched vertices of the previous round and scans those among them having zero in-degrees within the relaxable graph of the current round. 
This work refers to the proposed algorithm as the \textit{zero-degrees-only (ZDO) algorithm}.

\begin{algorithm}[!htb]
\caption*{\bf{Algorithm} ZDO}\label{CHmain}
\begin{algorithmic} 

\State $T \gets \{s\}$
\While {$length(T) \neq 0$}
        \For {\textbf{each} vertex $v$ in T}
             \If {no arc $(-,v)$ is relaxable}
                  \State $scan(v)$
              \EndIf
				\EndFor			
         \State $T \gets$ set of touched vertices of the current iteration
\EndWhile
\end{algorithmic}
\end{algorithm}

As the next lemmas show, at least one of these vertices is scanned once and for good.

\begin{lemma}
\label{lem1}  
Consider arc $(x,y)$ on the shortest path tree. If vertex $x$ is scanned for the last time during round $r \geq 1$, then vertex $y$ will be scanned for the last time in round $r$ or round $r+1$.
\end{lemma}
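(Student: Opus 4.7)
The plan exploits two structural facts. First, because potentials are monotone non-increasing and $x$ is never scanned after round $r$, we must have $d(x) = d^*(x)$ at the moment of $x$'s final scan: otherwise a later relaxation of $x$ would touch it and force another scan, contradicting the maximality of $r$. Second, since $(x,y)$ lies on the shortest-path tree $G_p$ computed by the algorithm, $d^*(y) = d^*(x) + l(x,y)$, so at $x$'s final scan the reduced cost satisfies
\[
l_d(x,y) = l(x,y) + d(x) - d(y) = d^*(y) - d(y) \leq 0.
\]

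I would then split on whether this inequality is strict. In the main case $l_d(x,y) < 0$, the scan of $x$ in round $r$ relaxes $(x,y)$, immediately setting $d(y) := d^*(y)$ and marking $y$ as touched in round $r$, so $y \in T$ at the start of round $r+1$. At that moment every incoming arc $(u,y)$ satisfies
\[
l_d(u,y) = l(u,y) + d(u) - d^*(y) \geq l(u,y) + d^*(u) - d^*(y) \geq 0,
\]
using $d(u) \geq d^*(u)$ and the triangle inequality $d^*(y) \leq d^*(u) + l(u,y)$. Hence no incoming arc of $y$ is relaxable, ZDO scans $y$ in round $r+1$, and since $d(y) = d^*(y)$ is minimal it can never be touched or scanned again; the last scan of $y$ is round $r+1$.

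The delicate case is $l_d(x,y) = 0$, in which $d(y) = d^*(y)$ already when $x$ is scanned. My plan is to show this case collapses back to the main one. Using $p(y) = x$ at termination, the most recent relaxation of $y$ was through arc $(x,y)$ at some scan of $x$ in a round $r^* \leq r$, and no further relaxation of $y$ occurs after $r^*$. Hence $d(y)$ stays at the value produced by that relaxation, namely $d(x) + l(x,y)$ with $d(x)$ evaluated at round $r^*$, which equals $d^*(y) = d^*(x) + l(x,y)$. This forces $d(x) = d^*(x)$ already at round $r^*$, so by monotonicity $d(x)$ remains at $d^*(x)$ from round $r^*$ onward, and $x$ cannot be touched or scanned thereafter; maximality of $r$ gives $r^* = r$, so $(x,y)$ is relaxed at $x$'s last scan. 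The main obstacle is exactly this second case: one must leverage the tree-edge hypothesis to tie $y$'s final stabilization to $x$'s, and the cleanest tool is monotonicity of $d(x)$ combined with the requirement that $p(y) = x$ persists until termination.
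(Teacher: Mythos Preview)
Your plan follows the same line as the paper: once $x$ is scanned for the last time it already carries $d(x)=d^*(x)$, so scanning $x$ forces $d(y)=d^*(y)$ via the tree arc, after which $y$ has no relaxable in-arc and will be scanned by round $r+1$ and never again. You are in fact more thorough than the paper on two points---the explicit triangle-inequality computation showing every $(u,y)$ satisfies $l_d(u,y)\ge 0$ once $d(y)=d^*(y)$, and the dispatch of the boundary case $l_d(x,y)=0$ via the parent pointer $p(y)=x$---both of which the paper simply asserts or passes over.

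There is one small slip in your main case. You assert ``$y\in T$ at the start of round $r+1$'' and conclude the last scan of $y$ is in round $r+1$. But if $y$ is already among the round-$r$ candidates and happens to be processed \emph{after} $x$ in that round, then by the time $y$ is reached it has $d(y)=d^*(y)$ and zero in-degree, so ZDO scans $y$ in round $r$; by the paper's definition of ``touched'' (potential decreased during the round \emph{and not scanned afterward}) $y$ is then \emph{not} carried into the next $T$, so your claim fails in this sub-case. The paper handles exactly this: ``Vertex $y$ will then be considered for scanning during round $r$, if it is already among round-$r$ vertices, or during round $r+1$ otherwise.'' The repair is a one-line case split on whether $y$ is still pending in round $r$ when $x$ is scanned, and it does not disturb the rest of your argument.
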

\begin{proof}
After scanning vertex $x$ in round $r$, the arc $(x,y)$ is relaxed causing the potential of vertex $y$ to decrease. Vertex $y$ will then be considered for scanning during round $r$, if it is already among round-$r$ vertices, or during round $r+1$ otherwise. 
As arc $(x,y)$ is on the shortest path tree and vertex $x$ had its final potential by round $r$, then after scanning $x$ vertex $y$ must also have its final potential (unless there is a negative cycle). This implies that vertex $y$ will, by then, have zero in-degree in the relaxable graph and hence will be scanned when it is considered.  
\end{proof}

\begin{lemma}
\label{lem2}  
In the absence of negative cycles, for each round of the algorithm, at least one of the vertices is scanned for the last time during this round.
\end{lemma}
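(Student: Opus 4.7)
My plan is to recast Lemma~\ref{lem2} as an invariant on $T$: whenever the algorithm is still running at the start of round $r$, meaning the current touched set $T$ is non-empty, it contains some vertex $v$ with $d(v)$ already equal to its true shortest-path distance $d^*(v)$. Once this is established the lemma follows easily: for such a vertex any in-arc $(u,v)$ satisfies $d(u)+l(u,v) \geq d^*(u)+l(u,v) \geq d^*(v) = d(v)$ by optimality of $d^*$, so $l_d(u,v) \geq 0$; hence $v$ has no relaxable in-arc, is scanned in round $r$, and since $d(v) = d^*(v)$ cannot decrease further, that scan is the last one of $v$.

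To locate such a vertex I would walk up the shortest-path tree. The base case is immediate: at round 1 the only touched vertex is $s$, and $d(s)=0=d^*(s)$ gives the last scan of $s$. For $r \geq 2$ I would pick any vertex $v$ in the current $T$, consider its ancestor chain $s=v_0,v_1,\dots,v_k=v$ in the shortest-path tree, and let $i$ be the largest index such that $v_i$ has already been scanned at least once after reaching its final potential $d^*(v_i)$. Since $v_0=s$ qualifies by the base case and $v_k=v$ is currently labeled rather than scanned, we have $0 \leq i < k$. The plan is then to establish two claims about $v_{i+1}$. First, $d(v_{i+1})=d^*(v_{i+1})$: this is essentially the content of Lemma~\ref{lem1}, because $v_i$'s last scan, taken while $d(v_i)=d^*(v_i)$, relaxed the tree arc and forced $d(v_{i+1}) \leq d^*(v_i)+l(v_i,v_{i+1}) = d^*(v_{i+1})$, after which potentials cannot re-increase. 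Second, $v_{i+1}$ is currently labeled: the relaxation that first pushed $v_{i+1}$ to $d^*(v_{i+1})$ switched its status to labeled, and maximality of $i$ forbids any subsequent scan of $v_{i+1}$.

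The step I expect to be the main obstacle is pinning down this status claim across all the cases for $v_{i+1}$'s history. In particular, $v_{i+1}$ may have been scanned one or more times while still at a suboptimal potential, and it may have first reached $d^*(v_{i+1})$ via an incoming arc other than the tree arc $(v_i,v_{i+1})$. Maximality of $i$ must be used not just to forbid a post-settling scan, but also to force the settling event itself to have happened in round $r-1$: otherwise an earlier settling would have triggered a scan-while-settled in an intermediate round, which would contradict the choice of $i$. Once this bookkeeping is handled, the reduction from the first paragraph finishes the proof.
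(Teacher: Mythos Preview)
Your approach is correct, and the bookkeeping you flag in the last paragraph does go through. One terminological point to tighten: you write ``$v_{i+1}$ is currently labeled'', but what you need---and what your final paragraph actually argues---is the stronger statement $v_{i+1}\in T$; a vertex can be labeled yet absent from $T$ if it was not touched in round $r-1$, so the settling-in-round-$r{-}1$ argument is doing essential work, not just cleanup. Your route differs from the paper's. The paper runs a \emph{forward} induction on rounds: given a vertex $x$ last-scanned in round $r$, it invokes Lemma~\ref{lem1} to pass to a shortest-path-tree child $y$ that is last-scanned by round $r+1$. You instead fix a round $r$, pick an arbitrary $v\in T$, and walk \emph{backward} up the shortest-path tree to the frontier vertex $v_{i+1}$. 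Your version makes the key mechanism---a settled vertex in $T$ has no relaxable in-arc, hence is scanned and never again---fully explicit, and it cleanly handles the situation the paper's sketch leaves implicit (what to do when the child $y$ is already finished in round $r$ rather than $r+1$, which would require iterating down the tree); the paper's argument is shorter but leans on Lemma~\ref{lem1} as a black box.
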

\begin{proof}
This lemma is proved by induction.  
Initially, the source is the only touched vertex and has zero in-degree before the first round,  and is indeed scanned then for the last time. 
Consider a specific round $r \geq 1$. Following the way the algorithm works,
one should have scanned all the vertices of zero in-degree in the relaxable graph of round $r$. 
By the induction hypothesis, let vertex $x$ by one of those vertices that have been scanned for the last time in round $r$. 
Since there are no negative cycles, and as long as the algorithm has not terminated, there exists an arc $(x,y)$ on the shortest path tree.
Following lemma \ref{lem1}, at least one more vertex, namely vertex $y$, is scanned once and for good before the end of round $r+1$.

\end{proof}

The next idea is to further reduce the candidates for scanning and still guarantee that at least one of the remaining vertices will not be scanned again. One can resort to the subtree disassembly idea. When an arc $(u,v)$ is relaxed, one can drop from the list of vertices to be scanned those in the subtree of $v$ in $G_p$. 
Note that the vertices of the subtree of $v$ in $G_p$ will have positive in-degrees in the relaxable graph and hence will not be scanned at this round anyhow, but dropping them from the list at this moment expedites this decision without checking the condition for each such vertex later on.
The next lemma ensures that these vertices will be touched again either way.

\begin{lemma}
Consider a vertex $v$ whose potential now drops. 
Any vertex that is in the subtree of $v$ in $G_p$ has to be touched, and hence scanned, again later. 
\end{lemma}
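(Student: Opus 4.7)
My plan is to show that once $v$'s potential drops from its current value to some strictly smaller value, the new potential of every descendant $w$ of $v$ in $G_p$ strictly exceeds the length of some walk from $s$ to $w$ routed through $v$. Consequently the true shortest-path distance to $w$ is strictly less than the current $d(w)$, so the label-correcting framework, which maintains $d(w)$ as the length of a known $s$-to-$w$ walk and terminates with $d(w)$ equal to the shortest-path distance, must decrease $d(w)$ at some later moment, which by definition touches $w$.

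The first step would be a telescoping bound along the tree path $v = u_0, u_1, \ldots, u_k = w$ in $G_p$. The last relaxation that set $p(u_{i+1}) \leftarrow u_i$ also set $d(u_{i+1}) = d(u_i) + l(u_i, u_{i+1})$ at that moment. Since then $u_i$ has remained the parent of $u_{i+1}$, so $d(u_{i+1})$ is unchanged, while $d(u_i)$ can only have decreased because potentials are monotonically non-increasing. Hence immediately before $v$'s drop one has $d(u_{i+1}) \geq d(u_i) + l(u_i, u_{i+1})$, and summing telescopically yields $\sum_{i=0}^{k-1} l(u_i, u_{i+1}) \leq d(w) - d(v)_{\mathrm{old}}$.

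The second step combines this bound with the label-correcting invariant that the new value $d(v)_{\mathrm{new}}$ is the length of an actual walk from $s$ to $v$. Concatenating that walk with the tree path $v \to u_1 \to \cdots \to w$ gives a walk from $s$ to $w$ of length at most $d(v)_{\mathrm{new}} + (d(w) - d(v)_{\mathrm{old}})$, which is strictly less than $d(w)$ because $d(v)_{\mathrm{new}} < d(v)_{\mathrm{old}}$. So the shortest-path distance to $w$ is strictly below $d(w)$, and $w$ must be touched before the algorithm terminates.

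The step I expect to be the main obstacle is the case in which $G$ contains a negative cycle, because then the label-correcting method need not terminate with correct shortest-path distances and the clean ``termination forces a decrease'' argument breaks. For that case I would fall back on a direct propagation argument: the telescoping inequality still shows that the reduced cost of the tree arc $(v, u_1)$ drops by the full amount of $v$'s potential decrease from a value that was already non-positive, so $(v, u_1)$ becomes strictly relaxable; then I would induct on the depth of $w$ in the subtree, arguing that once $u_1$ is touched the same reasoning applies to its own subtree, so every descendant is eventually touched before the algorithm could possibly stop or detect the cycle.
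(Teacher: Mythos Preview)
Your proposal is correct and follows the same idea as the paper: both argue that after $v$'s potential drops, relaxing the arcs along the tree path from $v$ to any descendant $t$ in $G_p$ would strictly decrease $d(t)$, so $t$'s current potential is not final and $t$ must be touched again. The paper's proof is a single terse sentence to this effect and does not separately treat the negative-cycle case, so your telescoping formalization and your fallback argument for that case are in fact more careful than what the paper provides.
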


\begin{proof}
Consider any vertex $t$ that is in the subtree of $v$ in $G_p$. As the potential of $v$ now decreases, this indicates that the
potential of $t$ is not final, for if one relaxes all the arcs along the path from $v$ to $t$ in $G_p$ the potential of $t$ will decrease.
It follows that $t$ will be touched and scanned later.
\end{proof}

The time bound for the proposed algorithm is indicated in the next theorem.

\begin{theorem}
The proposed algorithm runs in $O(n \cdot m)$ time.
\end{theorem}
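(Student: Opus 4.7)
The plan is to combine Lemma \ref{lem2} (which controls the number of rounds) with a per-round work bound, and then multiply. Assuming no negative cycle is present (otherwise the bound is achieved through the detection mechanism of Section \ref{NCD}), I first note that by Lemma \ref{lem2} at least one vertex is scanned for the last time in each round. Since only $n$ vertices exist, this immediately caps the total number of rounds executed by the main while-loop at $n$.

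Next I would show that each round costs $O(m)$. A round touches only vertices currently in $T$, the touched set from the previous round. For each such vertex $v$, the algorithm must (i) decide whether any incoming arc is relaxable, which amounts to inspecting the in-neighbors of $v$ and comparing reduced costs, and (ii) if the decision is affirmative, perform $\text{scan}(v)$, which inspects all out-neighbors of $v$. Both operations are linear in the degree of $v$. Because each vertex appears at most once in any given round's $T$, summing over $v \in T$ yields work bounded by $\sum_{v \in T}\bigl(\text{in-deg}(v) + \text{out-deg}(v)\bigr) = O(m)$. Building the next round's $T$ is a by-product of the relaxations performed inside $\text{scan}$, so it adds no extra asymptotic cost; similarly, the optional subtree-disassembly pruning only removes vertices from the candidate list and is amortized against the work of building those subtrees, so it does not break the per-round bound.

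Multiplying the two bounds gives $O(n \cdot m)$ overall, which is the claim. The main obstacle I anticipate is making the per-round bound rigorous: one must argue that a vertex cannot be "processed" more than once within a single round despite the fact that potentials are changing dynamically and new vertices are becoming touched mid-round. I would handle this by being explicit that the \textbf{for}-loop iterates over the snapshot of $T$ taken at the start of the round (as written in the pseudocode) and that any vertex newly touched during the round is placed into the set for the \emph{next} round, not the current one. With that invariant in place, the degree-sum argument cleanly yields the $O(m)$ per-round charge and the theorem follows.
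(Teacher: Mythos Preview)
Your proposal is correct and mirrors the paper's own proof: both bound the number of rounds by $n$ via Lemma~\ref{lem2}, bound the work per round by $O(m)$ using the fact that each vertex is processed at most once per round (with subtree disassembly amortized against tree construction), and multiply. Your write-up is in fact more explicit than the paper's about why the per-round cost is $O(m)$---the paper asserts it in one sentence, whereas you spell out the in-degree check plus out-degree scan decomposition and the snapshot-of-$T$ invariant---but the underlying argument is the same.
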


\begin{proof}
Each round of the algorithm takes $O(m)$ time to identify the candidate vertices and scan them,
as each vertex is scanned at most once per round.
During the scanning process, the algorithm disassembles the vertices rooted at each touched vertex from $G_p$. The disassembled vertices have been added before to $G_p$ as part of the scanning process. Hence, the time for the subtree disassembly is amortized over the work to scan and build $G_p$. Since at least one vertex is scanned for the last time in each round (except the sink vertex), unless there are negative cycles, the number of rounds of the algorithm is at most $n-1$. The overall running time follows.
\end{proof}

Suppose that at some round the relaxable graph is as shown in Figure \ref{graph1}. Vertices $X$, $Y$, and $Z$ have in-degree zero, while vertices $B$ and $C$ have in-degree two. Following Lemma \ref{lem2}, we are sure that at least one vertex among $X$, $Y$ or $Z$ will be scanned at this round and will never be scanned again. Intuitively, vertices $X$, $Y$ and $Z$ are the most effective vertices to be scanned in this round. Meanwhile, if we look at $G_p$ in Figure \ref{graph2} and consider a vertex $A$ whose potential now drops. Then, we will discard all vertices rooted at $A$ in $G_p$ including $X$ and $Z$ from being scanned at this round. Combining both procedures makes $Y$ the most effective vertex to scan. 

\begin{figure*}[!hpbt]
\minipage{0.4\textwidth}
  \includegraphics[width=\linewidth]{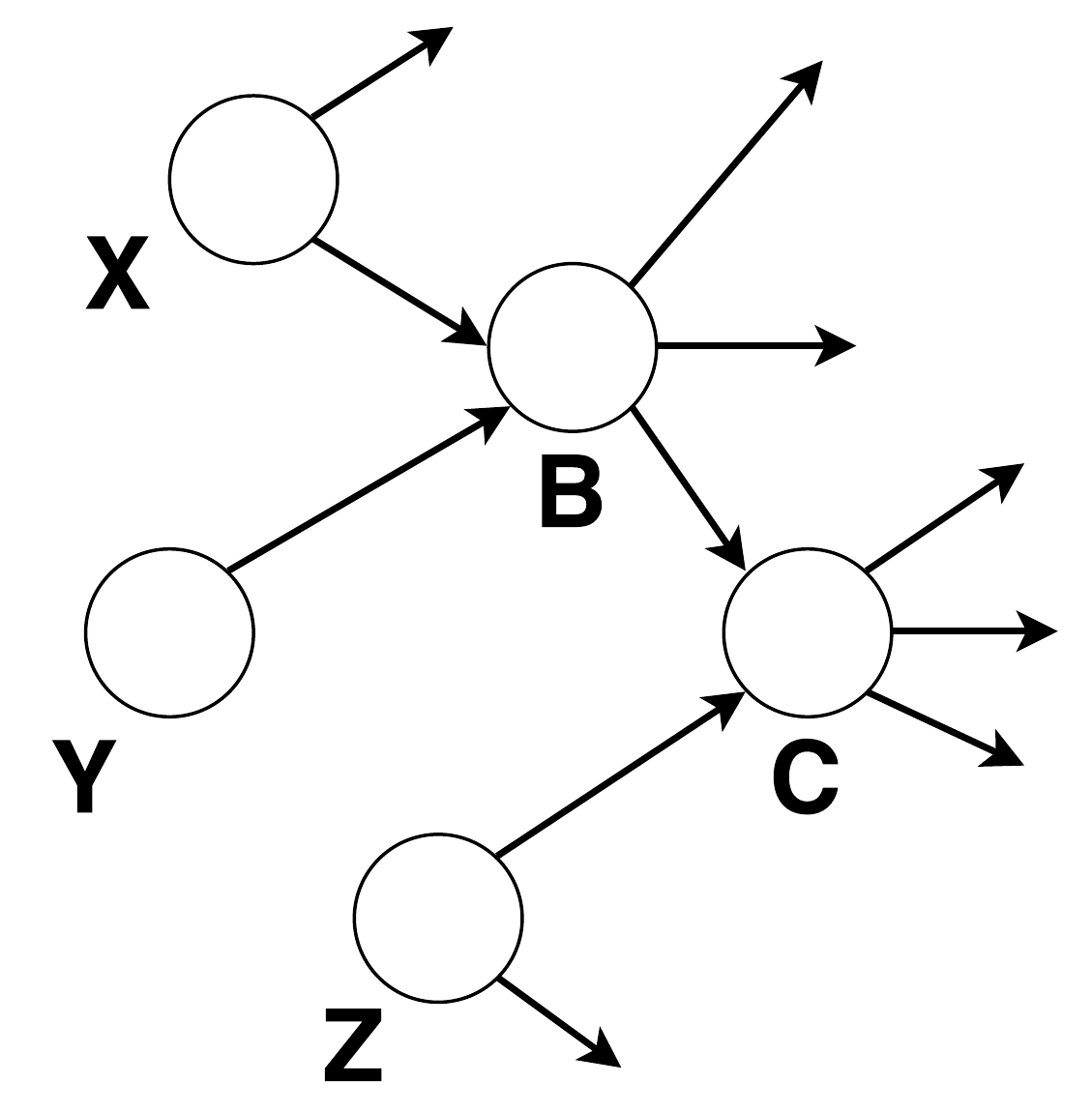}
  \caption{The relaxable graph $G'_d$}\label{graph1}
\endminipage\hfill
\minipage{0.48\textwidth}
  \includegraphics[width=\linewidth]{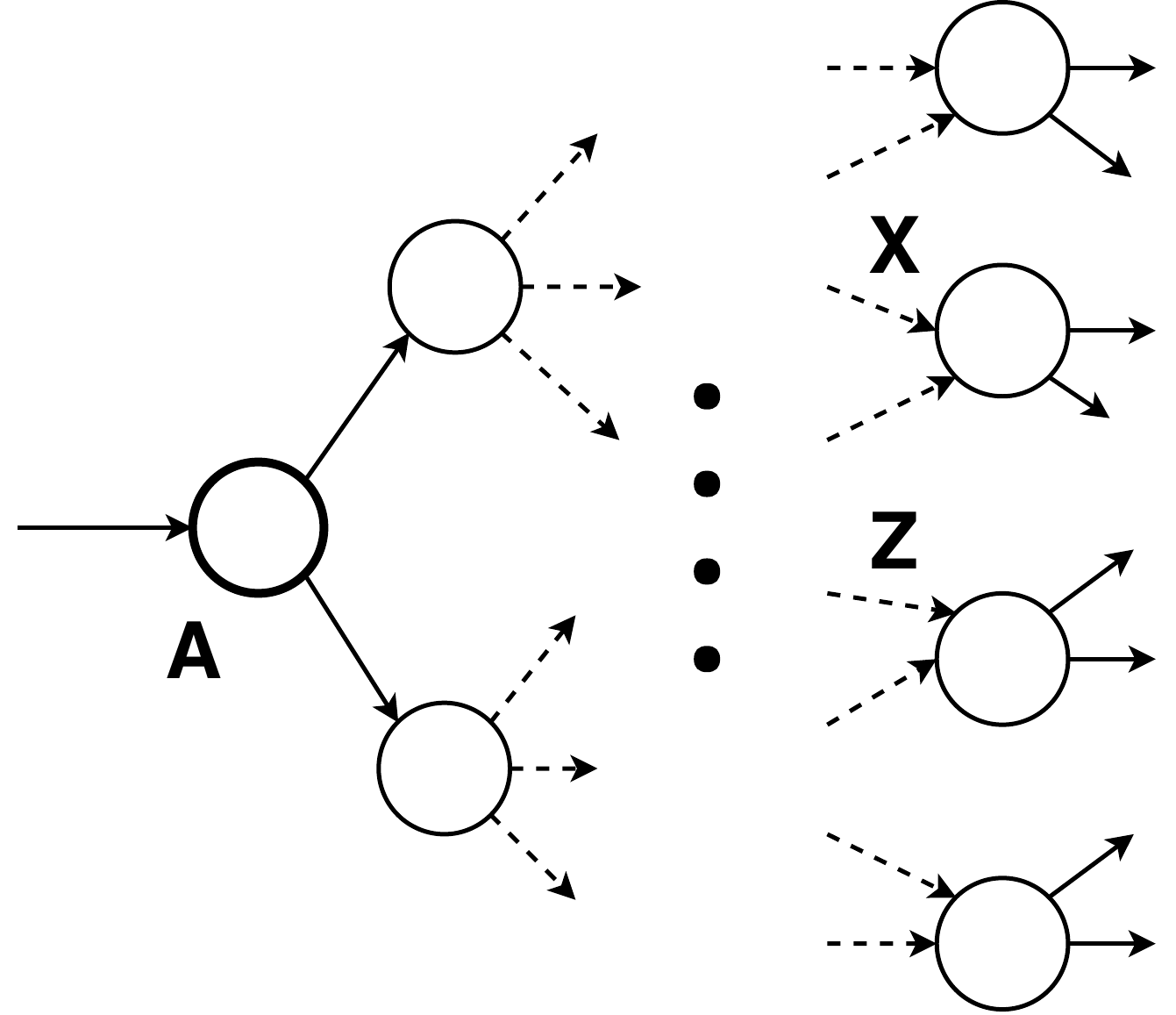}
  \caption{The current parent graph $G_p$ }\label{graph2}
\endminipage\hfill
\end{figure*}

Next chapter presents two implementations for the proposed algorithm. The two implementations differ in how to select the vertices with zero in-degree and how to scan them. The graph $G$ is presented as two arrays of adjacency lists, where \textit{v.adj\_out}[~] represents the list of outgoing arcs from vertex $v$ and \textit{v.adj\_in}[~] represents the list of incoming arcs to vertex $v$.

\chapter{Implementation}
\label{our-imp}
This chapter gives two different implementations for the proposed algorithm, the first runs faster while the second performs fewer relaxation checks.

\section{The basic implementation}

A queue $Q$ is employed to hold a set of touched vertices. 
Each vertex $v$ has an attribute $v.status$ that tells whether the vertex should not be scanned within this round even if it is in $Q$ ($v.status = inactive$), is active in $Q$ ($v.status = active$), or is physically outside $Q$ ($v.status = out$).
Initially, $Q$ contains the vertex $s$ (the only supposedly touched vertex).
The algorithm processes the vertices of $Q$ one by one, 
checks each vertex and immediately scans it if it is active and has zero in-degree in the relaxable graph $G'_d$.  
The newly touched vertices following the scan are appended to $Q$. 
To avoid duplicates, one needs to make sure that a vertex $v$ to be added to $Q$ is not already there ($v.status = out$) 
before appending it to $Q$. The algorithm terminates when $Q$ becomes empty.

To decide whether a vertex $v$ has zero in-degree in the admissible graph, the algorithm checks if any of the arcs $(u,v) \in v.adj\_in$[~] has $l_d(u,v)<0$. If it finds such an arc, the check stops realizing that $v$ does not have zero in-degree.

Within the scanning procedure, after relaxing an arc $(u,v)$, the parent graph $G_p$ is checked to disassemble the vertices of the subtree of $v$ one by one so that they would not be scanned within this round. 
This is realized by the \textit{subtree-disassembly(u,v)} procedure. For that, the vertices of $G_p$ are kept in a doubly linked list $L$ ordered in tree preorder starting from the root. The descendants of a vertex are thus consecutive in $L$.
Starting from $v$, one can traverse $L$ and the encountered vertices are removed from it and marked as inactive.
One can maintain an integer per vertex that equals its children's count in $G_p$ minus one. 
Adding these integers starting from that of $v$, one can know that the scan for $v$'s subtree is over once this sum equals $-1$.  
In the meantime, checking a negative cycle is indicated by the existence of $u$ among these vertices.

Algorithm \ref{CHalgorithm} shows the pseudo-code for this implementation (\textit{ZDO}).

\begin{algorithm}[!htb]
\caption{ZDO}\label{CHalgorithm}
\begin{algorithmic} 

\State $enqueue(Q,s)$
\State $s.status \gets active$
\While {$length(Q) \neq 0$}
        \State $v \gets  dequeue(Q)$
				
        \If {$v.status = active~\&\&~is$-$zero$-$indegree(v)$}
                  \State $scan(v)$
        \EndIf
				\State $v.status \gets out$ 
\EndWhile
\\
\Procedure{$is$-$zero$-$indegree$}{$v: vertex$}
    \For{all $u  \in  v.adj\_in[~]$}
        \If {$d(u) + l(u,v) < d(v)$} 
            \State return false   
        \EndIf
    \EndFor
    \State return true   
\EndProcedure
\end{algorithmic}
\end{algorithm}

\begin{algorithm}[!htb]
\begin{algorithmic} 
\Procedure{$scan$}{$u: vertex$}
    \For{all $v \in u.adj\_out[~]$}
        \If {$d(u) + l(u,v) < d(v)$}
            \State $d(v) \gets d(u) + l(u,v)$
						\State $p(v) \gets u$ 
           % \State $S(v) \gets LABELED$
            \If {$v.status = out$}
						  \State $enqueue(Q,v)$
						\EndIf	
						\State $v.status \gets active$
						\State $subtree$-$disassembly(u,v)$
        \EndIf
    \EndFor
    %\State $S(u) \gets SCANNED$
\EndProcedure
\end{algorithmic}
\end{algorithm}

%-----------------------------------------------------------------------------------------------------

\section{Bit-vectors implementation}

One says that an arc is \textit{candidate-for-relaxation} if it may possibly have a negative reduced-cost function. 
In this implementation, each arc is accompanied with a bit to indicate if the arc is candidate-for-relaxation or not.
This implementation adopts a lazy strategy where those bits are not necessarily up to date.  
If the bit accompanying an arc $(u,v)$ is one, we still need to check the reduced-cost function to make sure if it is negative. Alternatively, if the bit is zero, we know for sure that $l_d(u,v) \geq 0$ and can safely skip checking this arc.

To store and handle those candidacy bits efficiently, we augment each vertex $v \in V$ with two bit vectors (stored in computer words) one for the incoming arcs to $v$, $v.in\_bits$[~], and one for the outgoing arcs from $v$, $v.out\_bits$[~]. 
The bit in position $j$ of \textit{$v.in\_bits$}[~] is affiliated with the incoming arc number $j$ in \textit{v.adj\_in}[~], while the bit in position $j$ of \textit{$v.out\_bits$}[~] is affiliated with the outgoing arc number $j$ in \textit{v.adj\_out}[~].
Also, each arc $(u,v)\in A$ has two indices $pos_u$ and $pos_v$. The $pos_u$ index refers to the arc position in the source-vertex bit vector $u.out\_bits$[~], while the $pos_v$ index refers to the arc position in the destination-vertex bit vector $v.in\_bits$[~].

This implementation relies on the following routines \cite{ffs} that enable us to deal with bit vectors:
  \begin{itemize}
  \item $set(pos, vec)$: set the bit at position $pos$ in bit vector $vec$ to one.
  \item $clear(pos, vec)$: reset the bit at position $pos$ in bit vector $vec$ to zero.
  \item {\it ffs}$(vec)$: {\it find-first-set} one bit in bit vector $vec$, or $\phi$ if all the bits are zeros.
  \end{itemize}

Theoretically, it is possible to implement these commands to run in worst-case constant time \cite{FredmanW93}, but this work does not use these implementations. Alternatively, many architectures include instructions to rapidly perform the find-first-set operation, and a number of compilers supply efficient built-in routines to utilize these hardware instructions \cite{ffs}.

Using bit vectors enables us to check whether a vertex has zero in-degree in the admissible graph by only checking the incoming arcs candidates for relaxation. Also, while scanning a vertex, it is useful to jump over outgoing arcs and skip those not candidates for relaxation.
When the potential of a vertex is dropped, its outgoing arcs with zero candidacy bits are checked. 
Among those, the candidacy bits for the arcs with negative reduced costs are set to one, 
and the target vertex $v$ of the arc is marked for not to be scanned within this round ($v.status \gets inactive$).  

Actually, ZDO-Bits trades the number of relaxation checks by answering the question whether $d(u) + l(u,v) < d(v)$ through setting and checking the bit words inside the vertices. So, the overhead of fetching the words from the vertices before the setting and clearing operations would result in this implementation becoming slow in practice.

Algorithm \ref{CHalgorithm2} shows the pseudo-code for this implementation (\textit{ZDO-Bits}).  

\begin{algorithm}[!htb]
\begin{algorithmic}
\caption{ZDO-Bits}\label{CHalgorithm2}
\State $enqueue(Q, s)$
\State $s.status \gets active$
\State $s.out\_bits[~] \gets {\bf \overline{1}}$
\While {$length(Q) \neq 0$}
    \State $v \gets  dequeue(Q)$
    \If {$v.status = active~\&\&~is$-$zero$-$indegree(v)$}
        \State $scan(v)$
    \EndIf
		\State $v.status \gets out$
\EndWhile
\\
\Procedure{$is$-$zero$-$indegree$}{$v: vertex$}
    \While {$v.in\_bits[~] \neq 0$}
            \State $j \gets$ {\it ffs}$(v.in\_bits[~])$
					  \State $u \gets v.adj\_in[j]$
            \If {$d(u) + l(u,v) < d(v)$}
                \State return {\it false}
            \EndIf
            \State $clear(pos_u(u,v) ,u.out\_bits[~])$
            \State $clear(j ,v.in\_bits[~])$
     \EndWhile
     \State return {\it true}
\EndProcedure
\\
\Procedure{$scan$}{$u: vertex$}
    \While {$u.out\_bits[~] \neq 0$}
        \State $j \gets$ {\it ffs}$(u.out\_bits[~])$
				\State $v \gets u.adj\_out[j]$ 
        \State $clear(j, u.out\_bits[~])$
        \State $clear(pos_v(u,v), v.in\_bits[~])$
        \If {$d(u) + l(u,v) < d(v)$}
            \State $d(v) \gets d(u) + l(u,v)$
						\State $p(v) \gets u$
						\If {$v.status = out$}
						  \State $enqueue(Q,v)$ 
						\EndIf	
						\State $v.status \gets active$
						\State $update$-$bit$-$vectors(v)$  
            \State $subtree$-$disassembly(u,v)$
        \EndIf
    \EndWhile
\EndProcedure
\end{algorithmic}
\end{algorithm}
\begin{algorithm}[!htb]
\begin{algorithmic}
\Procedure{$update$-$bit$-$vectors$}{$u: vertex$}
      \State $vec \gets u.out\_bits[~] \oplus {\bf \overline{1}}$ 
			\While {$vec \neq 0$}
            \State $j \gets$ {\it ffs}$(vec)$
					 	\State $clear(j,vec)$
						 \State $v \gets u.adj\_out[j]$ 
             \If {$d(u) + l(u,v) < d(v)$} 
                 \State $set(j, u.out\_bits[~])$
                 \State $set(pos_v(u,v), v.in\_bits[~])$
								 \State $v.status \gets inactive$
             \EndIf
        \EndWhile
\EndProcedure				
\end{algorithmic}
\end{algorithm}

\chapter{Experimental Results}
\label{experiments}
This chapter presents experimental setup, a new evaluation metric for shortest-path algorithms, results for the state-of-the-art shortest-path algorithms (including ours) on several families of graphs, and discuss those results and comment on them.
 
\section{Experimental setup}

This work refers to the Bellman-Ford-Moore algorithm with the subtree-disassembly heuristic of Tarjan as \textit{Tar}, the Pallotino algorithm with the subtree-disassembly heuristic as \textit{Pal}, and the Goldberg-Radzik algorithm with the admissible-graph search as \textit{GoR}. 
A comparison is performed between the new proposed algorithm with these state-of-the-art implementations.
This work does not consider Dijkstra-based algorithms, which perform poorly when the number of negative arcs is large.
Our algorithm, in both the ZDO and ZDO-Bits implementations, detects negative cycles using the subtree-disassembly heuristic.

Our experiments are conducted on a windows 7 machine with core i7 2GHz processor and 8GB memory. Our code is written in C and compiled with gcc compiler using O1 optimization.
The bit vectors in ZDO-Bits are implemented as 64-bits integers. This implementation depends on the standard bitwise operations in the implementation of the \textit{set()} and \textit{clear()} routines. For the \textit{ffs()} routine, gcc compiler has a fast built-in implementation called \textit{\_\_builtin\_ffs}\cite{ffs}.
The problems generators and the implementation for the algorithms (Tar, Pal, GoR) were developed by the author of \cite{goldberg1995scaling}. The problem generator for the worst case for GoR is written by us in java. 

\clearpage
\section{Evaluation metrics}
One can define the \textit{number of checks per arc} as a follows:
\begin{equation}
  \frac{total\ number\ of\ relaxation\ checks}{m}  
\end{equation}
where the total number of relaxation checks is a counter for checking that $d(u) + l(u,v) < d(v)$ for every arc $(u,v) \in A$ and $m$ is the number of arcs. This new performance metric is more realistic than the number of scans per vertex.
As a supportive argument, the new proposed algorithm does not need to check all in-coming arcs to decide whether the vertex has zero in-degree in $G'_d$. Instead, it stops once finding the first arc that fulfills the condition. 

The number of relaxation checks are split into two categories: auxiliary checks and main checks.
The auxiliary checks are the extra checks to determine which vertices are to be scanned first.
In our algorithm, the checks done to identify if a vertex has zero in-degree and those done to update the bit vectors are auxiliary checks.
The main checks are the checks done when a vertex is scanned, and each is possibly followed by an update of the potential value. 
As will be illustrated later by the experimental results, 
the main checks are more influential on the running times than the auxiliary checks.
We also plot the running time for different algorithms. The running time is the user CPU time excluding the input and output times. Following \cite{cherkassky2009shortest}, each data point (in tables and plots) represents the average over five runs with the same generator parameters except for the pseudo-random generator seed.  We did not present the data points with running times greater than $100$ seconds, as we consider the corresponding algorithms slow in these cases.
\clearpage
\section{Experimental results}
This section investigates the different families of graphs. For each family, a comparison is performed between the proposed algorithm and the \textit{state-of-the-art} algorithms; mainly: Tar, Pal, and GoR.

\subsection{Star structure/Bad-GoR}
The Star family is composed of a central vertex that has a large number of incoming and outgoing arcs. The incoming arcs are coming from a chain of vertices. All arcs in the graph have a weight of -1. Figure \ref{bestzdobits} shows the Star structure graph. Assuming that the in-degree of the central vertex is equal to its out-degree ($k$). Table \ref{t2best} shows the number of relaxation checks for different $k$ values. Both GoR and ZDO-Bits achieve the minimum number of checks compared to other algorithms. Figure \ref{fbest} presents the running time for different algorithms on this family. Pal and Tar keep scanning the central vertex $k$ times, so they lose the competition. Since the graphs are fully negative, GoR solves the  problem in one DFS pass and performs very few relaxation checks. Both ZDO and ZDO-Bits scan the central vertex once (after scanning the chain). In addition, ZDO-Bits encodes the in-degree of the central vertex using the bit vectors. So, it performs the fewest number of relaxation checks, that is two orders of magnitude faster than ZDO.

\begin{figure}[!hpbt]
\centering
\includegraphics[width=0.45\textwidth]{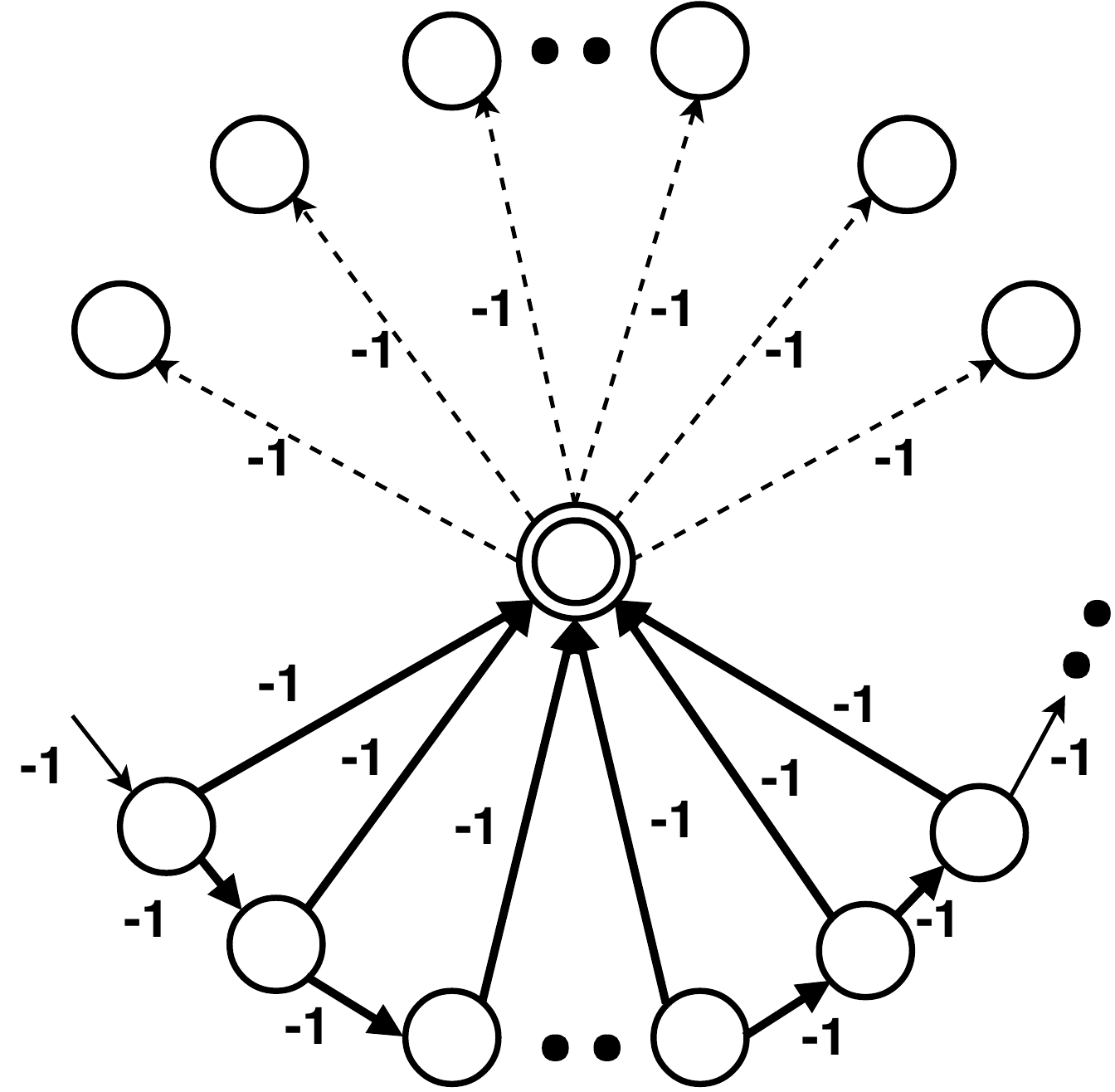}
\caption{Star structure.}\label{bestzdobits}
\end{figure}

\begin{table}[!ht]
\centering
\caption{\\ Number of relaxation checks per arc for different algorithms on Star family\label{t2best}}{%
\begin{tabular}{|c||c|c|c|c|c|c|c|c|} \hline 
\multirow{2}{*}{$K/10^3$}  & \multirow{2}{*}{Pal} & \multicolumn{2}{c|} {GoR} & \multirow{2}{*}{Tar} & \multicolumn{2}{c|}  {ZDO} & \multicolumn{2}{c|}  {ZDO-Bits} \\ \cline{3-4} \cline{6-7} \cline{8-9}
 &  & aux	& main &   &  aux	& main &  aux	& main \\ \hline \hline

10 & 3333.78 & 2	& 1 & 1667.389  & \cellcolor{lightgray} 834.361	& \cellcolor{lightgray} 1 & \cellcolor{lightgray} 1.167	& \cellcolor{lightgray} 1\\ \hline

20 & 6667.111 & 2	& 1 & 3334.056 & \cellcolor{lightgray} 1667.694	& \cellcolor{lightgray} 1 & \cellcolor{lightgray} 1.167	& \cellcolor{lightgray} 1\\ \hline

30 & 10000.444 & 2	& 1 & 5000.722 & \cellcolor{lightgray} 2501.028	& \cellcolor{lightgray} 1 & \cellcolor{lightgray} 1.167	& \cellcolor{lightgray} 1\\ \hline

40 & 13333.778 & 2	& 1 & 6667.389 & \cellcolor{lightgray} 3334.361	& \cellcolor{lightgray} 1 & \cellcolor{lightgray} 1.167	& \cellcolor{lightgray} 1\\ \hline

50 & & 2	& 1 &  8334.06 & \cellcolor{lightgray} 4167.694	& \cellcolor{lightgray} 1 & \cellcolor{lightgray} 1.167	& \cellcolor{lightgray} 1\\ \hline

60 & & 2	& 1 &  10000.72 & \cellcolor{lightgray} 5001.028	& \cellcolor{lightgray} 1 & \cellcolor{lightgray} 1.167	& \cellcolor{lightgray} 1\\ \hline

70 & & 2 & 1 &   & \cellcolor{lightgray} 5834.361	& \cellcolor{lightgray} 1 & \cellcolor{lightgray} 1.167	& \cellcolor{lightgray} 1\\ \hline

80 & & 2 & 1 &  & \cellcolor{lightgray} 6667.694	& \cellcolor{lightgray} 1  & \cellcolor{lightgray} 1.167	& \cellcolor{lightgray} 1 \\ \hline

90 & & 2 & 1 &  & \cellcolor{lightgray} 7501.028	& \cellcolor{lightgray} 1 & \cellcolor{lightgray} 1.167	& \cellcolor{lightgray} 1\\ \hline

100 & & 2 & 1 &  & \cellcolor{lightgray} 8334.361	& \cellcolor{lightgray} 1 & \cellcolor{lightgray} 1.167	& \cellcolor{lightgray} 1 \\ \hline
\end{tabular}}
\end{table}

\begin{figure}[!hpbt]
\centering
\includegraphics[width=0.9\textwidth]{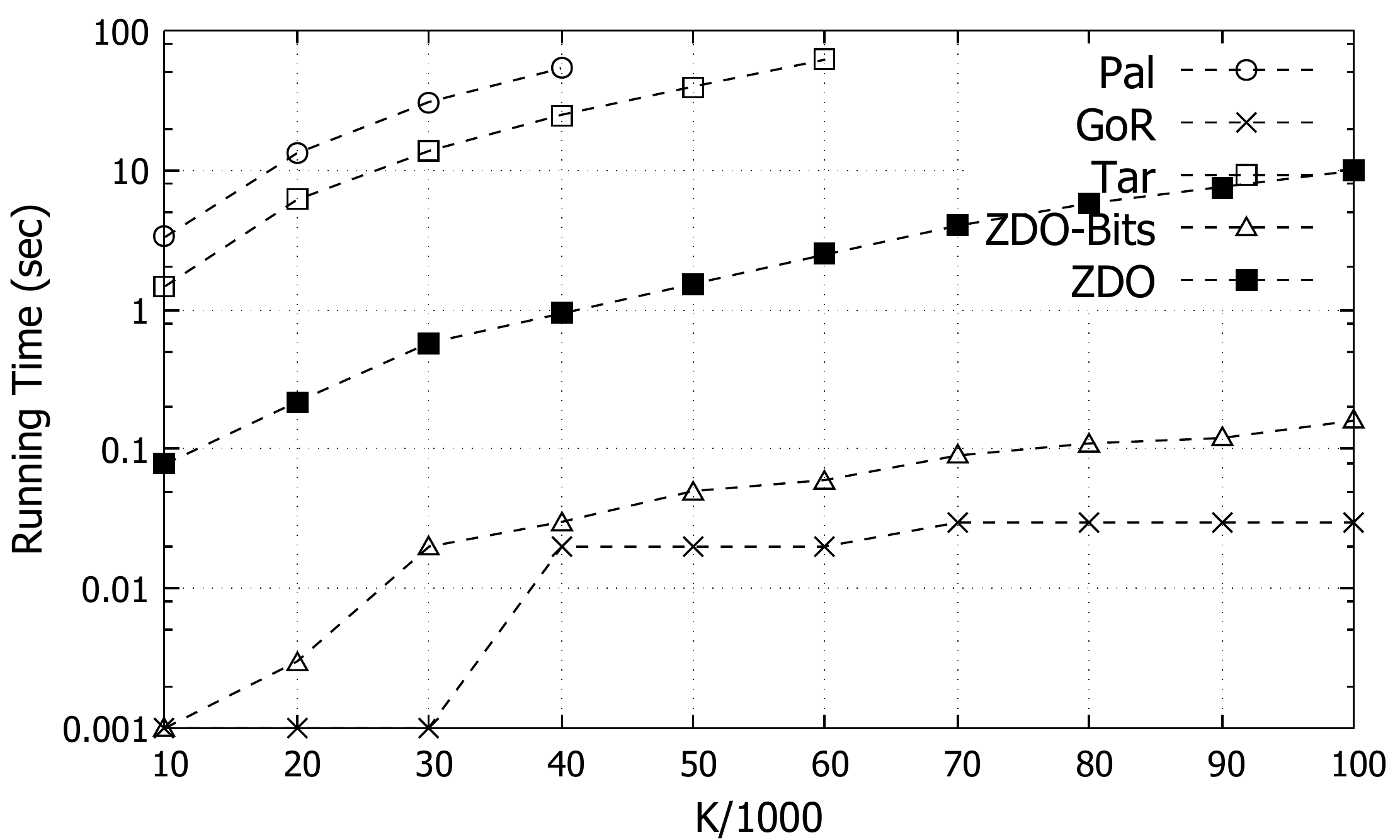}
\caption{Running time for different algorithms on Star family.}\label{fbest}
\end{figure}

\par
Now, if one plugs positive weights to the chain, the problem becomes also harder for GoR.
As per \cite{cherkassky2009shortest}, the worst-case behavior of GoR is illustrated
when applied to a family of graphs ({\it Bad-GoR}). 
For this family, the number of vertices and arcs are functions of a parameter $k$ that indicates how many times a given gadget is repeated.
The graph contains $2k+1$ vertices and $3k-1$ arcs. Vertices from 1 to $k$ are connected with a path that contains arcs $(i, i+1)$ for $1 \leq i < k$. Vertex $k+1$ has $k$ incoming arcs from the first $k$ vertices and $k$ outgoing arcs to the other $k$. Arc lengths have values as follows: $l(1,2)= -3k, l(1,k+1) = -1$, $l(i,k+1) = 2(k-i)$ for 
$2 \leq i \leq k$, $l(i,i+1) = 1$ for $2 \leq i < k$, and $l(k+1,i)=-1$ for $k+2 \leq i \leq 2k+1$. Figure \ref{f3} gives an example for $k = 7$.
GoR keeps scanning vertex $k+1$ every round until the path with vertices $i$ ($1 \leq i \leq k$) is scanned completely. As the ZDO and ZDO-Bits algorithms scan only the zero in-degree vertices in $G'_d$, they keep skipping vertex $k+1$ as it does not have zero in-degree.

\begin{figure}[!hpbt]
\centering
\includegraphics[width=0.45\textwidth]{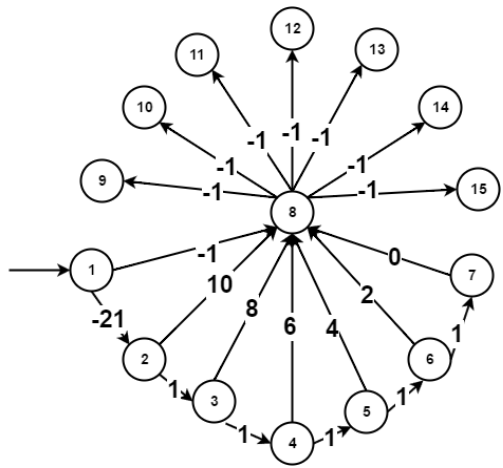}
\caption{Bad-GoR family structure ($k=7$) }\label{f3}
\end{figure}

Table \ref{t2} presents the results of Bad-GoR family with different values of $k$. 
The number of relaxation checks are improved significantly by our algorithm compared to other algorithms.
Figure \ref{f4} presents the running time for different algorithms on this family. Pal, GoR and Tar algorithms lose the competition. Their running times are greater than 100 seconds for large graph sizes, while our algorithm solves these instances more efficiently. 
The ZDO-Bits implementation is about three orders of magnitude faster than the other algorithms.
In contrast to the later figures, we use a linear scale for the abscissa axis to gradually monitor the wide variation in performance between our algorithm compared to the other algorithms.

%\FloatBarrier
\begin{table}[!ht]
\centering
\caption{\\ Number of relaxation checks per arc for different algorithms on Bad-GoR family\label{t2}}{%
\begin{tabular}{|c||c|c|c|c|c|c|c|c|} \hline 
\multirow{2}{*}{$K/10^3$}  & \multirow{2}{*}{Pal} & \multicolumn{2}{c|} {GoR} & \multirow{2}{*}{Tar} & \multicolumn{2}{c|}  {ZDO} & \multicolumn{2}{c|}  {ZDO-Bits} \\ \cline{3-4} \cline{6-7} \cline{8-9}
 &  & aux	& main &   &  aux	& main &  aux	& main \\ \hline \hline

10 & 3333.778 & 3335.028	& 1667.389 & 1667.389  & \cellcolor{lightgray} 834.361	& \cellcolor{lightgray} 1 & \cellcolor{lightgray} 1.167	& \cellcolor{lightgray} 1\\ \hline

20 & 6667.111 & 6668.361	& 3334.056 & 3334.056 & \cellcolor{lightgray} 1667.694	& \cellcolor{lightgray} 1 & \cellcolor{lightgray} 1.167	& \cellcolor{lightgray} 1\\ \hline

30 & 10000.444 & 10001.694	& 5000.722 & 5000.722 & \cellcolor{lightgray} 2501.028	& \cellcolor{lightgray} 1 & \cellcolor{lightgray} 1.167	& \cellcolor{lightgray} 1\\ \hline

40 & 13333.778 & 13335.028	& 6667.389 & 6667.389 & \cellcolor{lightgray} 3334.361	& \cellcolor{lightgray} 1 & \cellcolor{lightgray} 1.167	& \cellcolor{lightgray} 1\\ \hline

50 & & 16668.361	& 8334.056 & 8334.056 & \cellcolor{lightgray} 4167.694	& \cellcolor{lightgray} 1 & \cellcolor{lightgray} 1.167	& \cellcolor{lightgray} 1\\ \hline

60 & & 20001.694	& 10000.722 & 10000.722 & \cellcolor{lightgray} 5001.028	& \cellcolor{lightgray} 1 & \cellcolor{lightgray} 1.167	& \cellcolor{lightgray} 1\\ \hline

70 & & 	&  &  & \cellcolor{lightgray} 5834.361	& \cellcolor{lightgray} 1 & \cellcolor{lightgray} 1.167	& \cellcolor{lightgray} 1\\ \hline

80 &&& &  & \cellcolor{lightgray} 6667.694	& \cellcolor{lightgray} 1  & \cellcolor{lightgray} 1.167	& \cellcolor{lightgray} 1 \\ \hline

90 &&& &  & \cellcolor{lightgray} 7501.028	& \cellcolor{lightgray} 1 & \cellcolor{lightgray} 1.167	& \cellcolor{lightgray} 1\\ \hline

100 &&& &  & \cellcolor{lightgray} 8334.361	& \cellcolor{lightgray} 1 & \cellcolor{lightgray} 1.167	& \cellcolor{lightgray} 1 \\ \hline
\end{tabular}}
\end{table}
\begin{figure}[!ht]
\centerline{\includegraphics[width=0.9\textwidth]{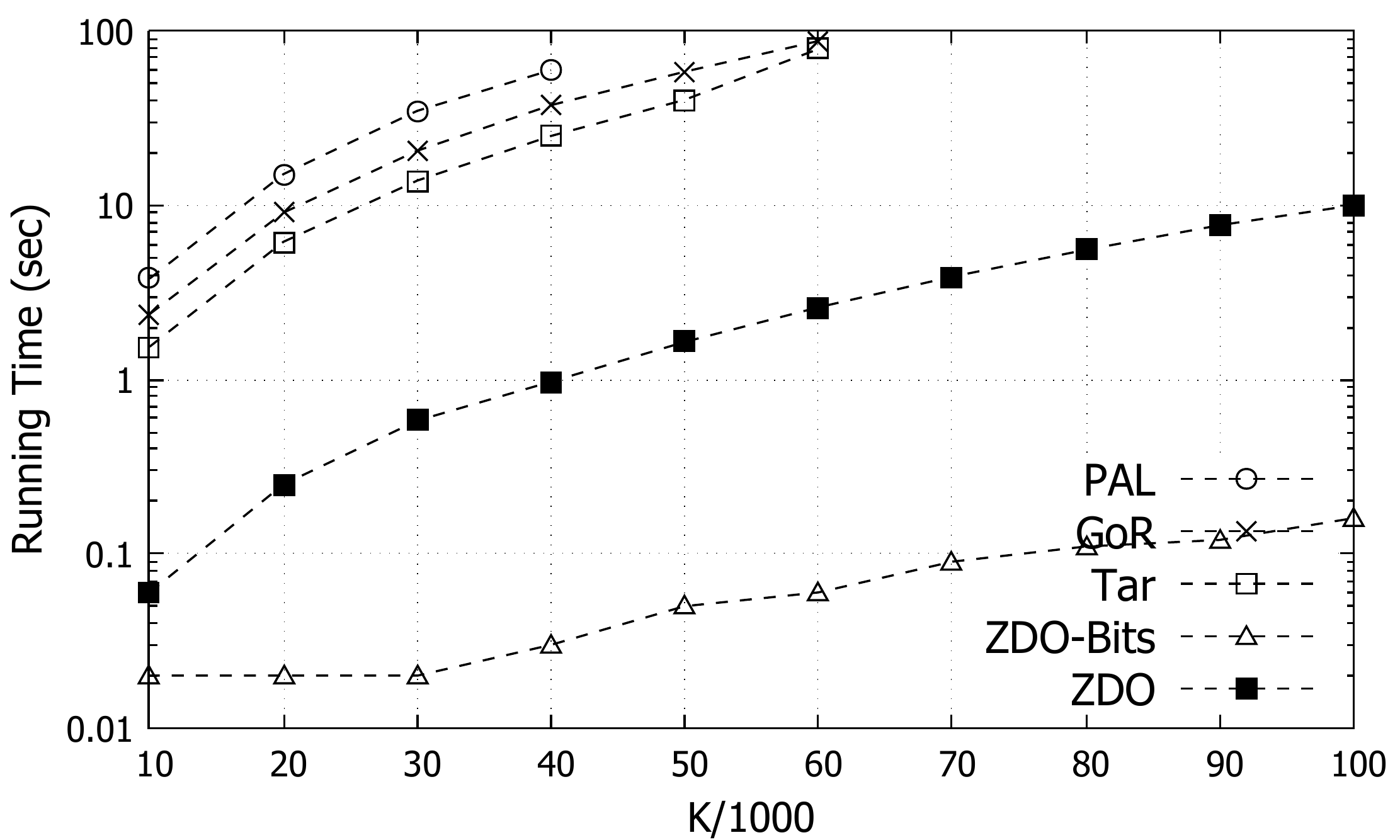}}
\caption{Running times for different algorithms on Bad-GoR family}
\label{f4}
\end{figure}
%
%\FloatBarrier
%--------------------------------------------------------------------------------------------------------------------------
\clearpage
\subsection{Simple grids}
SPGRID generator \cite{cherkassky1996shortest} was used to produce these grid families (square grids, long grids, wide grids). Vertices of these graphs correspond to points in the plane with integer coordinates $[x,y]$, where $1 \leq x \leq X$ and $1 \leq y \leq Y$. These points are connected "forward" by arcs
$([x,y], [x+1,y])$, "up" by arcs $([x,y], [x,(y+1) $~\mbox{mod}~$ Y)])$, and "down" by arcs $([x,y],[x,(y-1) $~\mbox{mod}~$ Y])$, where $1 \leq x \leq X$ and $1 \leq y \leq Y$. 
Thus, "up" and "down" arcs with the same $x$ value form a doubly connected cycle called a layer.
There is a source vertex connected to all vertices with coordinates $[1,y]$, where $1 \leq y \leq Y$. 
Arc lengths are selected uniformly at random from the interval [0, 10000].
An artificial source is connected to the source vertex with zero-length arc, and to other vertices with a fixed arc length equals to $10^8$.

%\FloatBarrier
\paragraph{Square grids.}

This family represents the square grids (S-grids). Vertices of these grid networks correspond to points in the plane where $X = Y$. 
Table \ref{t3} presents results for different algorithms on this family. 
All algorithms except Pal perform few relaxation checks.
ZDO-Bits performs the fewest number of main relaxation checks.
Figure \ref{f5} presents the running time for different algorithms on this family.
ZDO and Tar are the fastest among others. The worst performance in this family is that of Pal.

\begin{table}[!htb]
\centering
\caption{\\ Number of relaxation checks per arc for different algorithms on S-grids ($X=Y$)\label{t3}}{
\begin{tabular}{|c||c|c|c|c|c|c|c|c|} \hline 
\multirow{2}{*}{$X=Y$}  & \multirow{2}{*}{Pal} & \multicolumn{2}{c|} {GoR} & \multirow{2}{*}{Tar} & \multicolumn{2}{c|}  {ZDO} & \multicolumn{2}{c|}  {ZDO-Bits} \\ \cline{3-4} \cline{6-7} \cline{8-9}
 &  & aux	& main &   &  aux	& main &  aux	& main \\ \hline \hline

64 & 21.852 &  4.028	& 1.936 & 2.023 & \cellcolor{lightgray}2.315	& \cellcolor{lightgray}1.938 & \cellcolor{lightgray}2.034 &\cellcolor{lightgray}1.183   \\ \hline

128 & 29.518 & 4.067	& 1.960 & 2.104 & \cellcolor{lightgray}2.406	& \cellcolor{lightgray}2.014 & \cellcolor{lightgray}2.120 &\cellcolor{lightgray}1.298    \\ \hline

256 & 53.707 & 4.084	& 1.969 & 2.136 & \cellcolor{lightgray}2.459	& \cellcolor{lightgray}2.058 & \cellcolor{lightgray}2.167 &\cellcolor{lightgray}1.253   \\ \hline

512 & 93.394 & 4.110	& 1.983 & 2.173 & \cellcolor{lightgray}2.503	& \cellcolor{lightgray}2.094 & \cellcolor{lightgray}2.203 &\cellcolor{lightgray}1.354  \\ \hline

1024& 331.492  & 4.111	& 1.982 & 2.211 & \cellcolor{lightgray}2.553	& \cellcolor{lightgray}2.133 & \cellcolor{lightgray}2.249 &\cellcolor{lightgray}1.296  \\ \hline

\end{tabular}}
\end{table}
\begin{figure}[!ht]
\centerline{\includegraphics[width=0.9\textwidth]{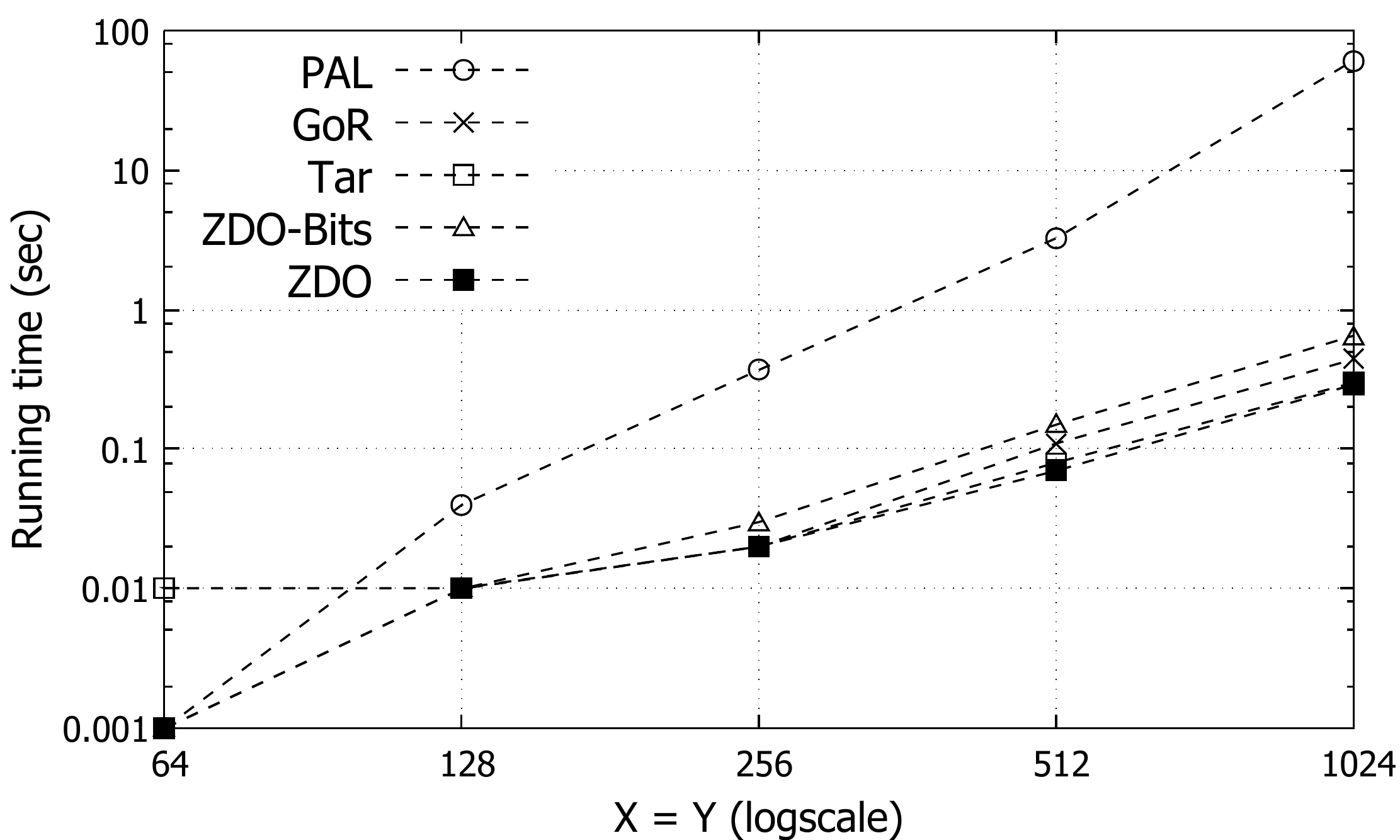}}
\caption{Running times for different algorithms on S-grids family}
\label{f5}
\end{figure}
%\FloatBarrier
%
%------------------------------------------------------------------------------------
\clearpage
\paragraph{Wide grids.}
The grids in this family (W-grids) have a fixed length $X=16$ and the width $Y$ grows with the problem size. Table \ref{t4} presents the number of relaxation checks and Figure \ref{f6} presents the running time for different algorithms on this family. ZDO-Bits achieves the fewest number of main relaxation checks. ZDO is the fastest among others. Although the total relaxation checks for ZDO is more than Tar,
the main checks for ZDO are less.

\begin{table}[!htb]
\centering
\caption{\\ Number of relaxation checks per arc for different algorithms on W-grids ($X=16$) \label{t4}}{
\begin{tabular}{|c||c|c|c|c|c|c|c|c|} \hline 
\multirow{2}{*}{Y}  & \multirow{2}{*}{Pal} & \multicolumn{2}{c|} {GoR} & \multirow{2}{*}{Tar} & \multicolumn{2}{c|}  {ZDO} & \multicolumn{2}{c|}  {ZDO-Bits} \\ \cline{3-4} \cline{6-7} \cline{8-9}
 &  & aux	& main &   &  aux	& main &  aux	& main \\ \hline \hline

512  & 3.049 & 3.868	& 1.850  & 1.939  & \cellcolor{lightgray}2.198	& \cellcolor{lightgray}1.797 &\cellcolor{lightgray}1.886 & \cellcolor{lightgray}1.165\\ \hline

1024 & 5.428 & 3.911	& 1.867  & 1.922 & \cellcolor{lightgray}2.168	& \cellcolor{lightgray}1.781 &\cellcolor{lightgray}1.877 & \cellcolor{lightgray}1.105\\ \hline

2048 & 3.987 & 3.909	& 1.867  & 1.931 & \cellcolor{lightgray}2.187	& \cellcolor{lightgray}1.792 &\cellcolor{lightgray}1.883 & \cellcolor{lightgray}1.152\\ \hline

4096 & 5.709 & 3.946	& 1.886  & 1.911 & \cellcolor{lightgray}2.139	& \cellcolor{lightgray}1.764 &\cellcolor{lightgray}1.876 & \cellcolor{lightgray}1.103\\ \hline

8192 & 2.411 & 3.868	& 1.849  & 1.944 & \cellcolor{lightgray}2.207	& \cellcolor{lightgray}1.803 &\cellcolor{lightgray}1.888 & \cellcolor{lightgray}1.165\\ \hline

16384 & 3.941 & 3.915	& 1.869  & 1.936 & \cellcolor{lightgray}2.193	& \cellcolor{lightgray}1.797 &\cellcolor{lightgray}1.888 & \cellcolor{lightgray}1.120\\ \hline

32768 & 2.427 & 3.901	& 1.865  & 1.946 & \cellcolor{lightgray}2.213	& \cellcolor{lightgray}1.808 &\cellcolor{lightgray}1.893 & \cellcolor{lightgray}1.170\\ \hline

\end{tabular}}
\end{table}

%\FloatBarrier
\begin{figure}[!ht]
\centerline{\includegraphics[width=0.9\textwidth]{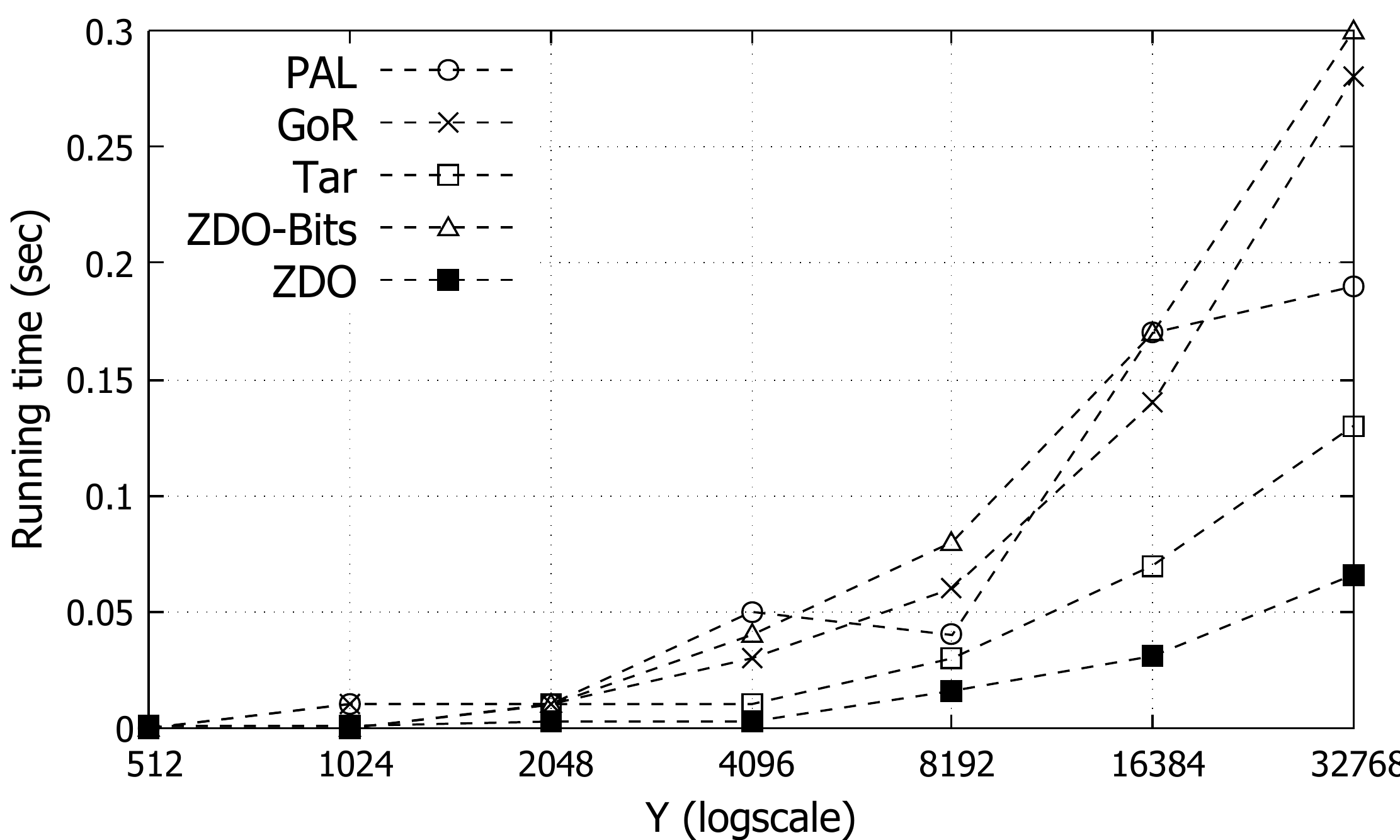}}
\caption{Running times for different algorithms on W-grids family}
\label{f6}
\end{figure}
%\FloatBarrier
%
%------------------------------------------------------------------------------------
\clearpage
\paragraph{Long grids.}
The grids in this family (L-grids) have a fixed width $Y=16$ and their length $X$ grows with the problem size. Table \ref{t5} presents the number of relaxation checks and Figure \ref{f7} presents the running time for different algorithms on this family. 
The best performance on this family is achieved by ZDO then Tar. The worst performance on this family concerning the number of relaxation checks and the running time is that of Pal, which solves six cases out of seven in less than 100 seconds.
%\FloatBarrier
\begin{table}[!htb]
\centering
\caption{\\ Number of relaxation checks per arc for different algorithms on L-grids ($Y=16$) \label{t5}}{
\begin{tabular}{|c||c|c|c|c|c|c|c|c|} \hline 
\multirow{2}{*}{X}  & \multirow{2}{*}{Pal} & \multicolumn{2}{c|} {GoR} & \multirow{2}{*}{Tar} & \multicolumn{2}{c|}  {ZDO} & \multicolumn{2}{c|}  {ZDO-Bits} \\ \cline{3-4} \cline{6-7} \cline{8-9}
 &  & aux	& main &   &  aux	& main &  aux	& main \\ \hline \hline

512 & 46.143 & 4.034	& 1.946 & 2.080  & \cellcolor{lightgray}2.387	& \cellcolor{lightgray}2.005 & \cellcolor{lightgray}2.108 &\cellcolor{lightgray}1.218 \\ \hline

1024 & 270.638 &  4.056	& 1.956  & 2.078 & \cellcolor{lightgray}2.381	& \cellcolor{lightgray}2.003 & \cellcolor{lightgray}2.107  &\cellcolor{lightgray}1.220 \\ \hline

2048 & 548.552 & 4.044	& 1.952  & 2.067 & \cellcolor{lightgray}2.370	& \cellcolor{lightgray}1.996 & \cellcolor{lightgray}2.093  &\cellcolor{lightgray}1.214 \\ \hline

4096 & 1087.042 & 4.062	& 1.959  & 2.082 & \cellcolor{lightgray}2.387	& \cellcolor{lightgray}2.009 & \cellcolor{lightgray}2.108  &\cellcolor{lightgray}1.221 \\ \hline

8192 & 280.250 & 4.050	& 1.955  & 2.077 & \cellcolor{lightgray}2.382	& \cellcolor{lightgray}2.005 & \cellcolor{lightgray}2.106  &\cellcolor{lightgray}1.220 \\ \hline

16384 & 2450.735 & 4.028& 1.944  & 2.081 & \cellcolor{lightgray}2.387	& \cellcolor{lightgray}2.009 & \cellcolor{lightgray}2.110  &\cellcolor{lightgray}1.222 \\ \hline
 
32768 &  &3.521	& 1.709 & 1.828 & \cellcolor{lightgray}2.062	& \cellcolor{lightgray}1.773 & \cellcolor{lightgray}1.791  &\cellcolor{lightgray}1.072\\ \hline

\end{tabular}}
\end{table}
%

%\FloatBarrier
\begin{figure}[!ht]
\centerline{\includegraphics[width=.9\textwidth]{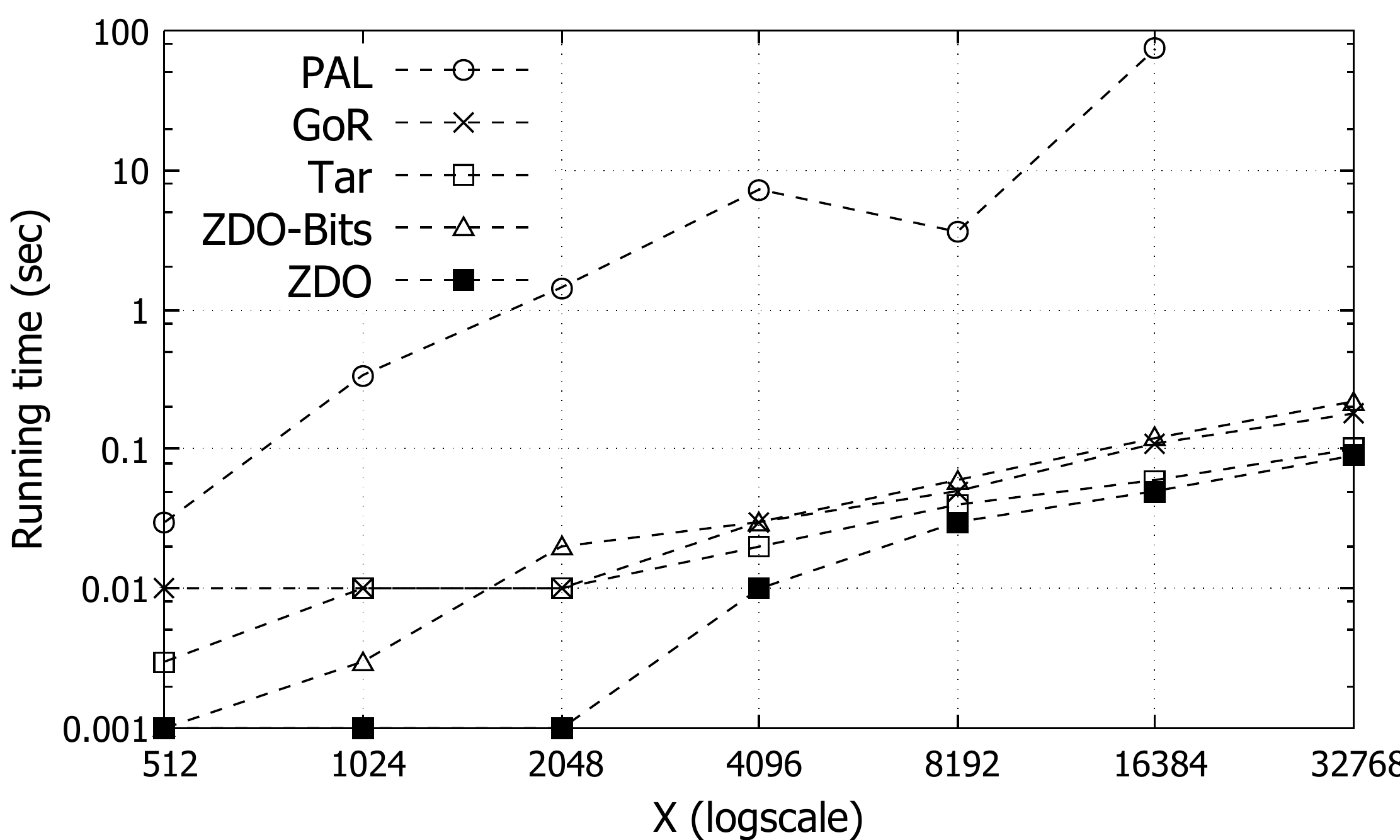}}
\caption{Running time for different algorithms on L-grids family}
\label{f7}
\end{figure}
%\FloatBarrier
%
%-----------------------------------------------------------------------------------
\clearpage
\subsection{Hard grids}

These families are also produced by SPGRID generator.  As for the simple grids, the networks in this family 
consist of layers and the source connected to the vertices of the first layer. 
Each layer is a simple cycle plus a collection of arcs connecting randomly selected pairs
of vertices on the cycle. The length of the arcs inside a layer is small and non-negative. 
As for the simple grids, there are arcs from one layer to the next one. 
However, here, there are in addition arcs from lower to higher numbered layers. 
There is also artificial source connected to all vertices.

\paragraph{Positive hard grids.}
In this family (PH-grids), the inter-layer arcs have non-negative length 
selected uniformly at random from a wide range of integers. 
Additionally, the arc length from layer $x_1$ to layer $x_2$ is multiplied by $(x_2-x_1)^2$. 
Table \ref{t6} presents the number of relaxation checks and Figure \ref{f8} presents the running time for different algorithms on this family. ZDO achieves the best performance then Tar. The worst performance is that of Pal, which solves only four case out of six cases in less than 100 seconds.

%\FloatBarrier
\begin{table}[!htb]
\centering
\caption{\\ Number of relaxation checks per arc for different algorithms on PH-grids ($Y=32$)\label{t6}}{
\begin{tabular}{|c||c|c|c|c|c|c|c|c|} \hline 
\multirow{2}{*}{X}  & \multirow{2}{*}{Pal} & \multicolumn{2}{c|} {GoR} & \multirow{2}{*}{Tar} & \multicolumn{2}{c|}  {ZDO} & \multicolumn{2}{c|}  {ZDO-Bits} \\ \cline{3-4} \cline{6-7} \cline{8-9}
 &  & aux	& main &   &  aux	& main &  aux	& main \\ \hline \hline

256  & 341.936 & 14.092	 & 6.969 & 8.895  & \cellcolor{lightgray}8.373	& \cellcolor{lightgray}5.610 &\cellcolor{lightgray}11.309 &\cellcolor{lightgray}2.902 \\ \hline

512  & 650.544 & 14.331	 & 7.144 & 9.213 & \cellcolor{lightgray}8.232	& \cellcolor{lightgray}5.575 &\cellcolor{lightgray}11.213 &\cellcolor{lightgray}2.873 \\ \hline

1024  & 1156.602 & 15.633	 & 7.800 & 9.252 & \cellcolor{lightgray}8.242	& \cellcolor{lightgray}5.598 &\cellcolor{lightgray}11.294 &\cellcolor{lightgray}2.892 \\ \hline

2048  & 2358.277 & 14.443	 & 7.212 & 9.299 & \cellcolor{lightgray}8.313	& \cellcolor{lightgray}5.651 &\cellcolor{lightgray}11.405 &\cellcolor{lightgray}2.909 \\ \hline

4096  & & 14.467	 & 7.225 & 9.340 & \cellcolor{lightgray}8.304	& \cellcolor{lightgray}5.651 &\cellcolor{lightgray}11.410 &\cellcolor{lightgray}2.907 \\ \hline

8192  & & 14.194	 & 7.088 & 9.316 & \cellcolor{lightgray}8.297	& \cellcolor{lightgray}5.647 &\cellcolor{lightgray}11.396 &\cellcolor{lightgray}2.908 \\ \hline
 
\end{tabular}}
\end{table}
\begin{figure}[!ht]
\centerline{\includegraphics[width=0.9\textwidth]{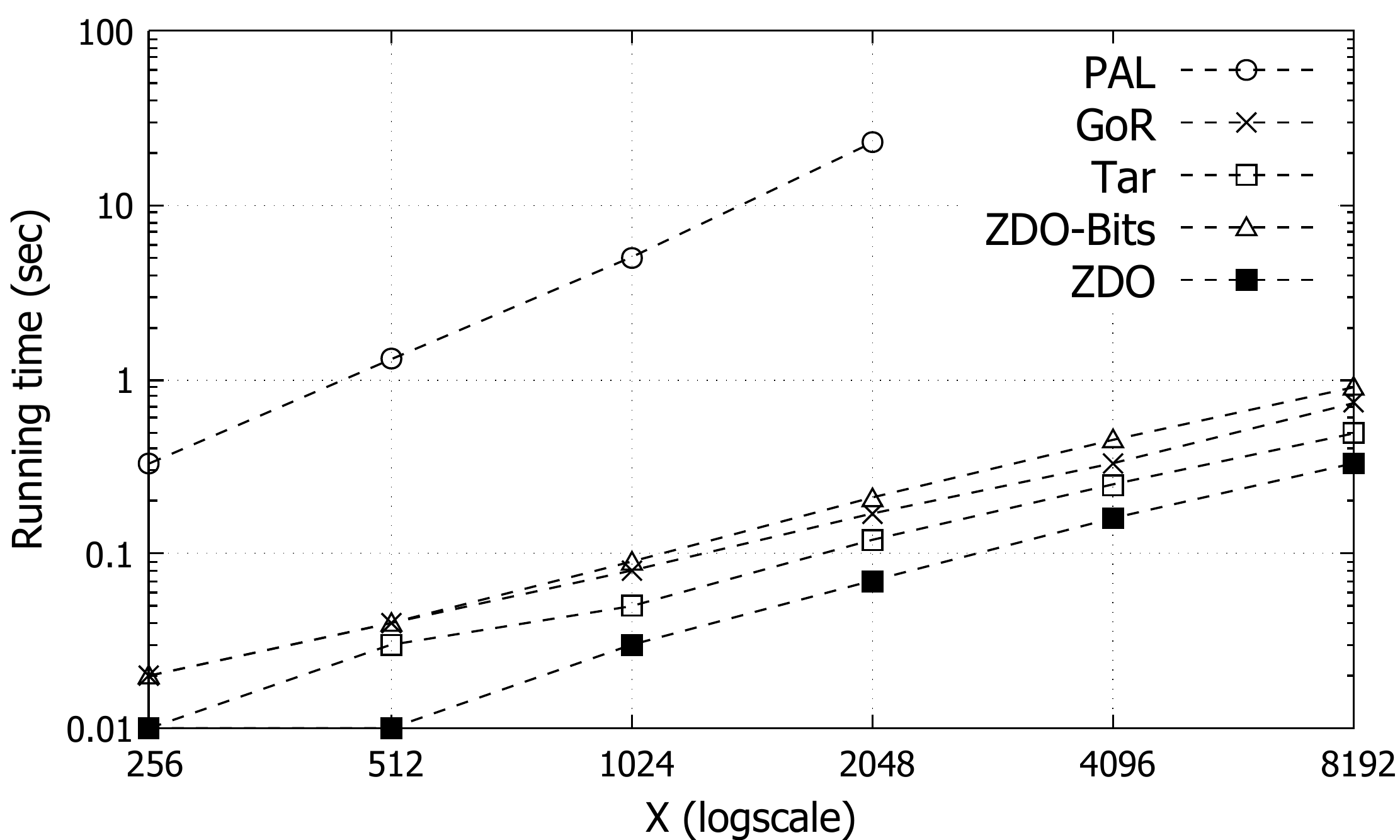}}
\caption{Running time for different algorithms on PH-grids family}
\label{f8}
\end{figure}
%\FloatBarrier
%

%-----------------------------------------------------------------------------------
\clearpage
\paragraph{Negative hard grids.}
In this family (NH-grids), the inter-layer arcs have non-positive length
selected uniformly at random from a wide range of integers. 
Table \ref{t7} presents the number of relaxation checks and Figure \ref{f9} presents the running time for different algorithms on this family. GoR and ZDO are superior in this family. The worst performance is that of Pal, which solves only four out of six cases in less than 100 seconds.

%\FloatBarrier
\begin{table}[!htb]
\centering
\caption{\\ Number of relaxation checks per arc for different algorithms on NH-grids ($Y=32$)\label{t7}}{
\begin{tabular}{|c||c|c|c|c|c|c|c|c|} \hline 
\multirow{2}{*}{X}  & \multirow{2}{*}{Pal} & \multicolumn{2}{c|} {GoR} & \multirow{2}{*}{Tar} & \multicolumn{2}{c|}  {ZDO} & \multicolumn{2}{c|}  {ZDO-Bits} \\ \cline{3-4} \cline{6-7} \cline{8-9}
 &  & aux	& main &   &  aux	& main &  aux	& main \\ \hline \hline

256  & 336.765 & 12.778	 & 6.269 & 19.974  & \cellcolor{lightgray}17.614	& \cellcolor{lightgray}10.525 &\cellcolor{lightgray}24.546 &\cellcolor{lightgray}7.328 \\ \hline

512  & 642.307 & 13.669	 & 6.705 & 23.263 & \cellcolor{lightgray}18.694	& \cellcolor{lightgray}11.272 &\cellcolor{lightgray}26.488 &\cellcolor{lightgray}7.948 \\ \hline

1024  & 1257.516 & 14.212	 & 6.978 & 28.991 & \cellcolor{lightgray}18.781	& \cellcolor{lightgray}11.387 &\cellcolor{lightgray}26.756 &\cellcolor{lightgray}8.063 \\ \hline

2048  & 2582.655 & 15.434	 & 7.589 & 33.855 & \cellcolor{lightgray}19.157	& \cellcolor{lightgray}11.632 &\cellcolor{lightgray}27.394 &\cellcolor{lightgray}8.263 \\ \hline

4096  & & 16.043	 & 7.893 & 44.509 & \cellcolor{lightgray}19.263	& \cellcolor{lightgray}11.709 &\cellcolor{lightgray}27.582 &\cellcolor{lightgray}8.325 \\ \hline

8192  & & 16.544	 & 8.142 & 49.035 & \cellcolor{lightgray}19.271	& \cellcolor{lightgray}11.723 &\cellcolor{lightgray}27.619 &\cellcolor{lightgray}8.335 \\ \hline

\end{tabular}}
\end{table}
\begin{figure}[!ht]
\centerline{\includegraphics[width=0.9\textwidth]{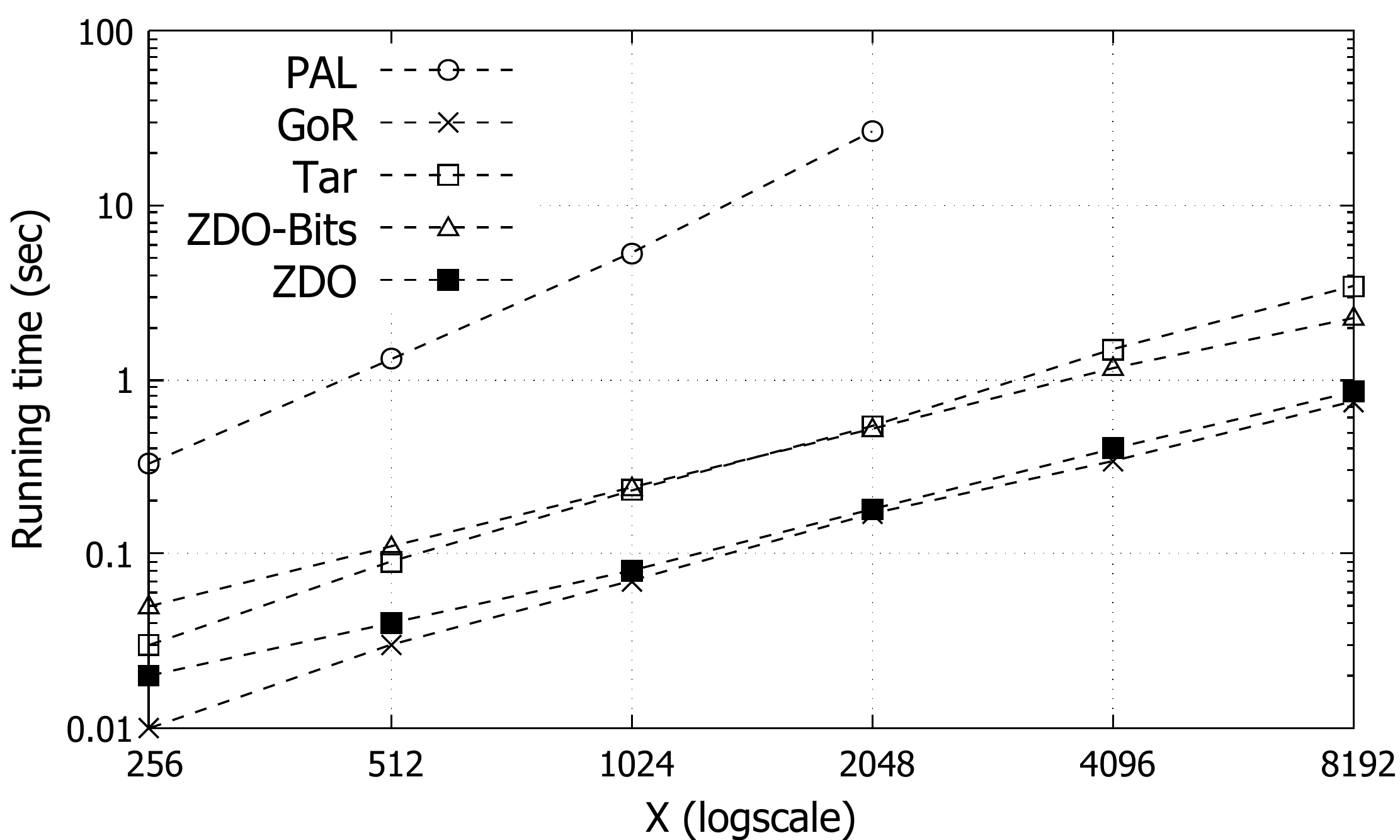}}
\caption{Running time for different algorithms on NH-grids family}
\label{f9}
\end{figure}
%\FloatBarrier
%
%---------------------------------------------------------------------------------------
\clearpage
\subsection{Random graphs}
These families are produced by the SPRAND generator. The graphs are constructed as a Hamiltonian cycle in addition to arcs with distinct random end points. The lengths of these arcs are selected uniformly at random from the interval $[0, 10000]$.
%---------------------------------------------------------------------------------------

\paragraph{Sparse random graphs.}
The graphs in this family (S-rand) are sparse random  graphs with $m = 4n$. Table \ref{t8} presents the number of relaxation checks and Figure \ref{f10} presents the running time for different algorithms on this family. ZDO-Bits performs the fewest number of main relaxation checks. ZDO is the fastest algorithm in this family.

%\FloatBarrier
\begin{table}[!htb]
\centering
\caption{\\ Number of relaxation checks per arc for different algorithms on S-rand family\label{t8}}{
\begin{tabular}{|c||c|c|c|c|c|c|c|c|} \hline 
\multirow{2}{*}{n}  & \multirow{2}{*}{Pal} & \multicolumn{2}{c|} {GoR} & \multirow{2}{*}{Tar} & \multicolumn{2}{c|}  {ZDO} & \multicolumn{2}{c|}  {ZDO-Bits} \\ \cline{3-4} \cline{6-7} \cline{8-9}
 &  & aux	& main &   &  aux	& main &  aux	& main \\ \hline \hline

8192 & 10.560 & 11.358	 & 5.450 & 5.766  & \cellcolor{lightgray}5.611	& \cellcolor{lightgray}5.090 &\cellcolor{lightgray}6.659 &\cellcolor{lightgray}1.971 \\ \hline

16384 & 11.966 & 12.160	& 5.835	& 6.297	& \cellcolor{lightgray}6.143	&\cellcolor{lightgray}5.575	&\cellcolor{lightgray}7.259	&\cellcolor{lightgray}2.136 \\ \hline

32768 & 12.406 & 12.454	 & 5.979 & 6.474 & \cellcolor{lightgray}6.336	& \cellcolor{lightgray}5.738 &\cellcolor{lightgray}7.511 &\cellcolor{lightgray}2.211 \\ \hline

65536 & 14.520 & 13.685	 & 6.583 & 7.094 & \cellcolor{lightgray}6.941	&  \cellcolor{lightgray}6.308 &\cellcolor{lightgray}8.193 &\cellcolor{lightgray}2.394 \\ \hline
 
131072 & 16.425 & 14.815 & 7.131 & 7.709 & \cellcolor{lightgray}7.509	& \cellcolor{lightgray}6.821 &\cellcolor{lightgray}8.889 &\cellcolor{lightgray}2.595 \\ \hline

262144 & 17.057 & 15.326 & 7.377 & 7.871 & \cellcolor{lightgray}7.693	& \cellcolor{lightgray}6.971 &\cellcolor{lightgray}9.135 &\cellcolor{lightgray}2.671 \\ \hline

524288 & 18.065 & 15.882 & 7.646 & 8.265 &  \cellcolor{lightgray}8.082	& \cellcolor{lightgray}7.332 &\cellcolor{lightgray}9.577 &\cellcolor{lightgray}2.794 \\ \hline
 
1048576 & 19.335 & 16.564 & 7.980 & 8.549 &  \cellcolor{lightgray}8.353	& \cellcolor{lightgray}7.584 &\cellcolor{lightgray}9.884 &\cellcolor{lightgray}2.876 \\ \hline
\end{tabular}}
\end{table}
\begin{figure}[!ht]
\centerline{\includegraphics[width=0.9\textwidth]{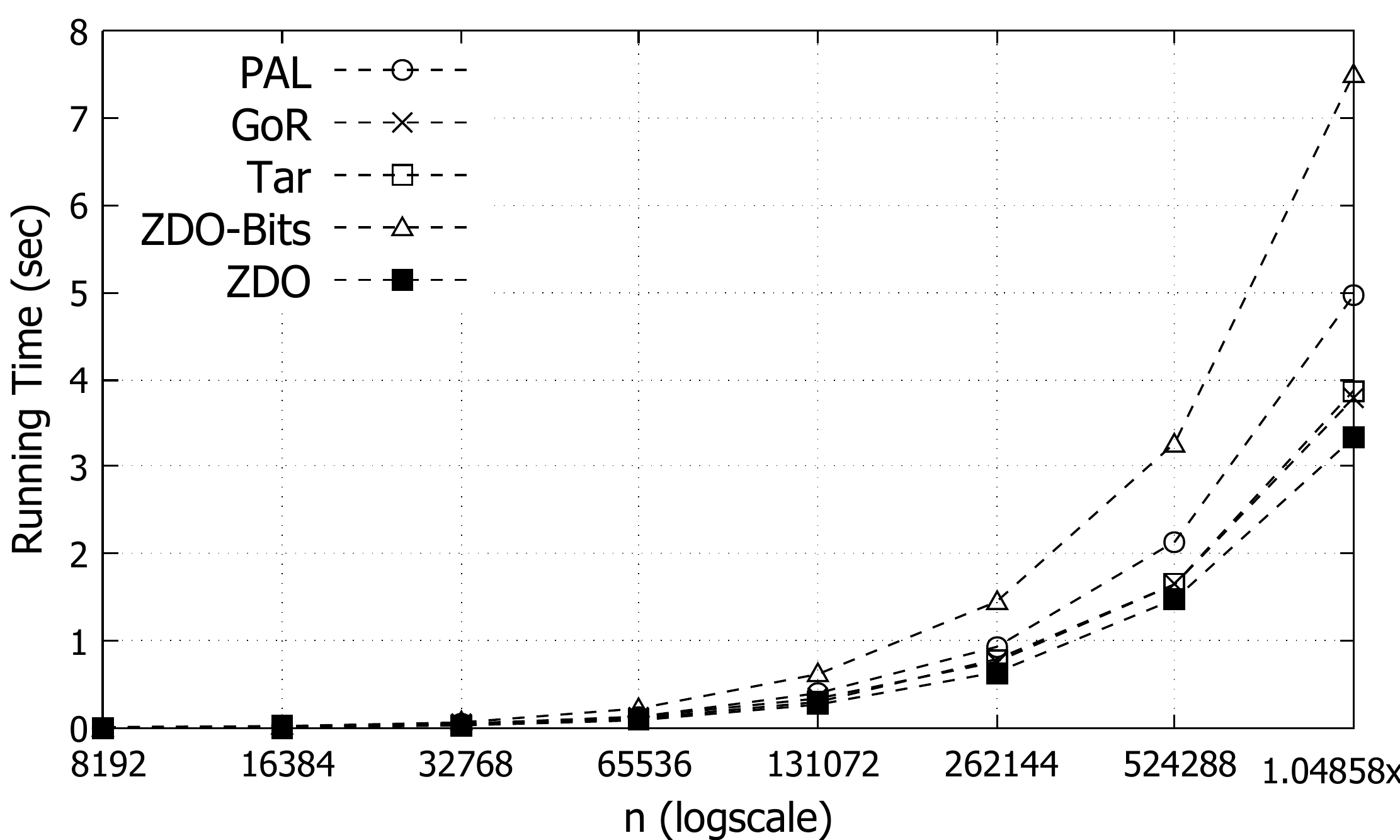}}
\caption{Running time for different algorithms on S-rand family}
\label{f10}
\end{figure}
%\FloatBarrier
%
%---------------------------------------------------------------------------------------
\clearpage
\paragraph{Dense random graphs.}
The graphs in this family (D-rand) are dense random graphs with $m=n^2/4$. 
Table \ref{t9} presents the number of relaxation checks for different algorithms on this family
and Figure \ref{f11} presents their running times. 
ZDO-Bits achieves the fewest number of main relaxation checks. 
ZDO is also the fastest algorithm in this family, Tar comes next, GoR is slower than Pal, and ZDO-Bits is the slowest.

%\FloatBarrier
\begin{table}[!htb]
\centering
\caption{\\ Number of relaxation checks per arc for different algorithms on D-rand family\label{t9}}{
\begin{tabular}{|c||c|c|c|c|c|c|c|c|} \hline 
\multirow{2}{*}{n}  & \multirow{2}{*}{Pal} & \multicolumn{2}{c|} {GoR} & \multirow{2}{*}{Tar} & \multicolumn{2}{c|}  {ZDO} & \multicolumn{2}{c|}  {ZDO-Bits} \\ \cline{3-4} \cline{6-7} \cline{8-9}
 &  & aux	& main &   &  aux	& main &  aux	& main \\ \hline \hline

512  & 5.133 & 5.542	 & 2.632 & 2.992  &  \cellcolor{lightgray}2.351	&  \cellcolor{lightgray}1.973 &\cellcolor{lightgray}5.333 &\cellcolor{lightgray}0.306 \\ \hline

1024 & 5.250 & 5.681	 & 2.702 & 2.924 &  \cellcolor{lightgray}2.308	&  \cellcolor{lightgray}1.911 &\cellcolor{lightgray}5.660 &\cellcolor{lightgray}0.282 \\ \hline

2048  & 4.970 & 5.168	 & 2.467 & 2.894 &  \cellcolor{lightgray}2.231&  \cellcolor{lightgray}1.863 &\cellcolor{lightgray}5.806 &\cellcolor{lightgray}0.268 \\ \hline

4096 & 5.196 & 5.666	 & 2.765 & 3.344 &  \cellcolor{lightgray}2.641 &  \cellcolor{lightgray}2.278 &\cellcolor{lightgray}6.556 &\cellcolor{lightgray}0.263 \\ \hline
\end{tabular}}
\end{table}
\begin{figure}[!ht]
\centerline{\includegraphics[width=0.9\textwidth]{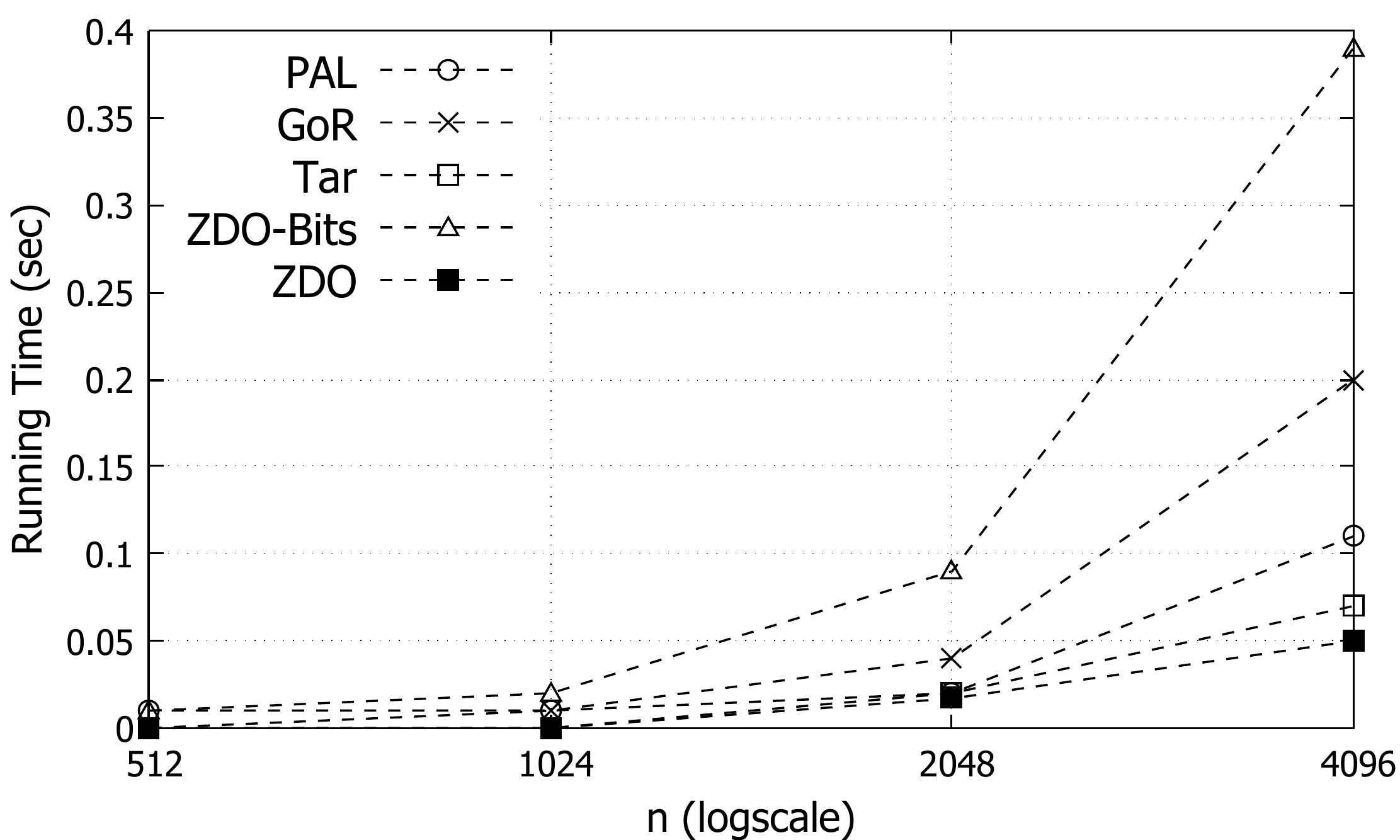}}
\caption{Running time for different algorithms on D-rand family}
\label{f11}
\end{figure}
%\FloatBarrier
%
%---------------------------------------------------------------------------------------
%---------------------------------------------------------------------------------------
\clearpage
\paragraph{Random graphs with potentials.}
Graphs in this family are also produced by SPRAND generator. They are random graphs with negative arcs but without negative cycles. To do this, there are two variables that control an arc length ($l'$): the length function ($l$) that is produced uniformly from the interval $[0,10000]$ for each arc, and the potential value $p$ that is produced uniformly from the interval $[0,P]$ for each vertex. 
The length of an arc $(u,v)$ is determined as $l'(u,v) = l(u,v) + p(u) - p(v)$.
The graphs in this family (P-rand) have fixed $n= 131072$ and $m=524288$. The value of $P$ is picked uniformly from  $0$ to $10^6$; changing the potential limit $P$ controls the number of negative arcs in the network. For the same seed, when we use $P=0$ the percentage of negative arcs is $0\%$, when $P=10^3$ the percentage of negative arcs is about $25\%$, and when $P=10^6$ the percentage of negative arcs is about $50\%$. Table \ref{pt1} presents the number of relaxation checks for different algorithms on this family.
For GoR, the number of checks slightly varies as $P$ changes. For the remaining algorithms, the number of operations does not change when changing $P$. Figure \ref{p1} presents the running time for different algorithms on this family. 
ZDO is also the fastest algorithm, while ZDO-Bits is the slowest in this family.
%

%\FloatBarrier
\begin{table}[!htb]
\centering
\caption{\\ Number of relaxation checks per arc for different algorithms on P-rand family (n=131072 , m=524288)\label{pt1}}{
\begin{tabular}{|c||c|c|c|c|c|c|c|c|} \hline 
\multirow{2}{*}{$P$}  & \multirow{2}{*}{Pal} & \multicolumn{2}{c|} {GoR} & \multirow{2}{*}{Tar} & \multicolumn{2}{c|}  {ZDO} & \multicolumn{2}{c|}  {ZDO-Bits} \\ \cline{3-4} \cline{6-7} \cline{8-9}
 &  & aux	& main &   &  aux	& main &  aux	& main \\ \hline \hline

0  & 16.425 & 14.815	 & 7.131 & 7.709  & \cellcolor{lightgray}7.509	& \cellcolor{lightgray}6.821 &\cellcolor{lightgray}8.889	&\cellcolor{lightgray}2.595 \\ \hline

1000  & 16.425 & 14.714	 & 7.089 & 7.709 & \cellcolor{lightgray}7.509	& \cellcolor{lightgray}6.821 &\cellcolor{lightgray}8.889	&\cellcolor{lightgray}2.595 \\ \hline

5000  & 16.425 & 15.091	 & 7.271 & 7.709 & \cellcolor{lightgray}7.509	& \cellcolor{lightgray}6.821 &\cellcolor{lightgray}8.889	&\cellcolor{lightgray}2.595 \\ \hline

10000  & 16.425 & 15.033	 & 7.245 & 7.709 & \cellcolor{lightgray}7.509	& \cellcolor{lightgray}6.821 &\cellcolor{lightgray}8.889	&\cellcolor{lightgray}2.595 \\ \hline

100000  & 16.425 & 15.297	 & 7.375 & 7.709 & \cellcolor{lightgray}7.509	& \cellcolor{lightgray}6.821 &\cellcolor{lightgray}8.889	&\cellcolor{lightgray}2.595 \\ \hline

1000000  & 16.425 & 15.006	 & 7.232 & 7.709 & \cellcolor{lightgray}7.509	& \cellcolor{lightgray}6.821 &\cellcolor{lightgray}8.889	&\cellcolor{lightgray}2.595 \\ \hline

\end{tabular}}
\end{table}

\begin{figure}[!ht]
\centerline{\includegraphics[width=0.9\textwidth]{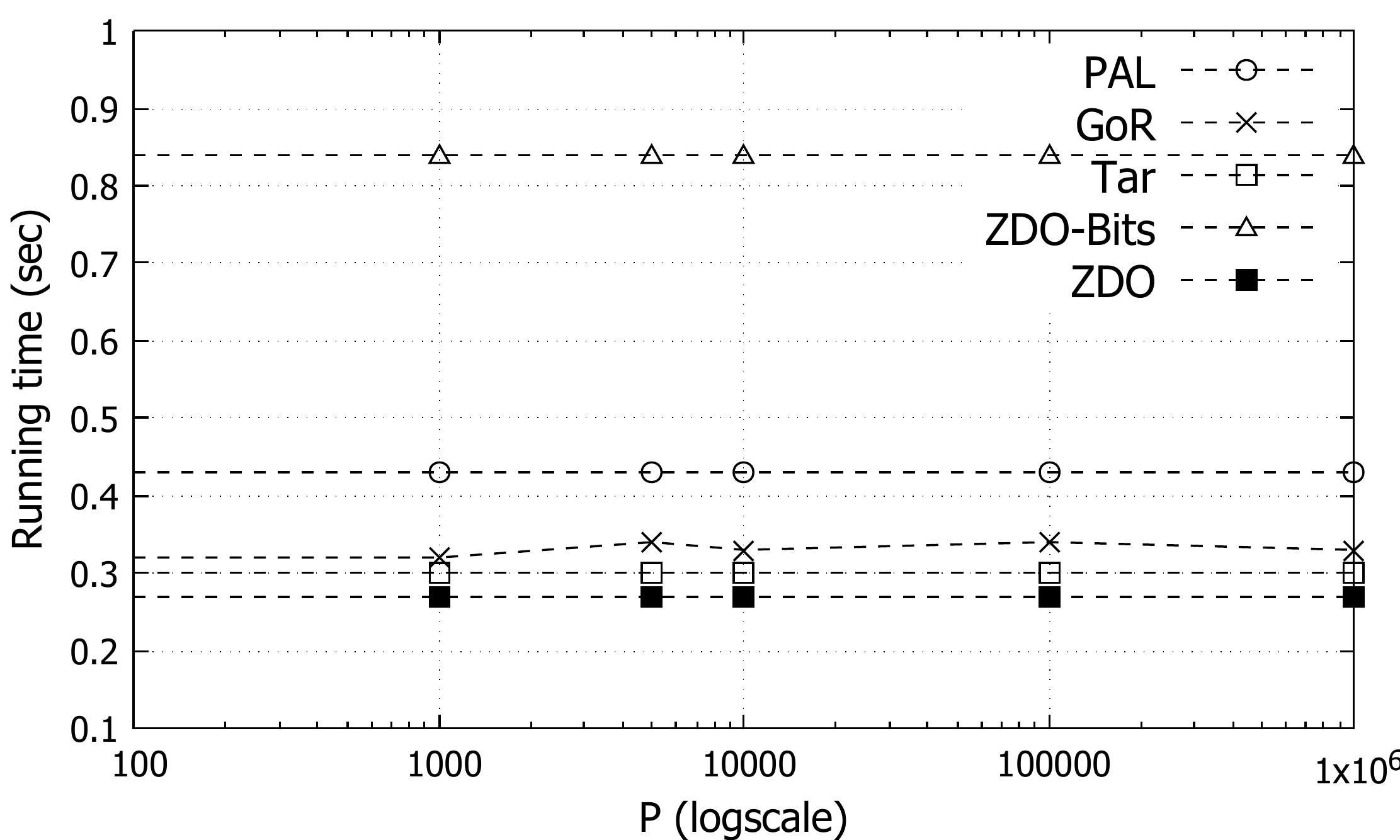}}
\caption{Effect of changing P on P-rand family (n=131072, m=524288)}
\label{p1}
\end{figure} 
%\FloatBarrier

%---------------------------------------------------------------------------------------------------------------------------
\clearpage
\paragraph{Dense-to-sparse random graphs with potentials.}
This part studies the effect of changing the graphs from dense to sparse while maintaining the percentage of the negative arcs using potential values.
The graphs in this family (PD2S-rand) have a fixed number of arcs $m=10^7$ but the number of vertices $n$ changes from $10^2$ to $10^5$. 
The length function $l$ is produced uniformly from the interval $[0,10000]$ for each arc, and the potential value $p$ is produced uniformly from the interval $[0,10^6]$ for each vertex. This provides graphs with about $50\%$ of their arcs negative.
Table \ref{pt2} presents the number of relaxation checks per arc for different algorithms on this family and Figure \ref{p2} presents their running times. 
ZDO-Bits performs the fewest main relaxation checks. 
ZDO is the fastest then Tar, GoR is slower than Pal, while ZDO-Bits is the slowest.%

%\FloatBarrier
\begin{table}[!ht]
\centering
\caption{\\ Number of relaxation checks per arc for different algorithms on PD2S-rand family (m=$10^7$)\label{pt2}}{
\begin{tabular}{|c||c|c|c|c|c|c|c|c|} \hline 
\multirow{2}{*}{n}  & \multirow{2}{*}{Pal} & \multicolumn{2}{c|} {GoR} & \multirow{2}{*}{Tar} & \multicolumn{2}{c|}  {ZDO} & \multicolumn{2}{c|}  {ZDO-Bits} \\ \cline{3-4} \cline{6-7} \cline{8-9}
 &  & aux	& main &   &  aux	& main &  aux	& main \\ \hline \hline

100 & 1.006 & 2.002 & 1.000 & 1.008  & \cellcolor{lightgray}1.000	& \cellcolor{lightgray}1.000 &\cellcolor{lightgray}3.791 &\cellcolor{lightgray}0.495 \\ \hline

1000  & 1.576 & 4.025 & 2.000 & 1.344 & \cellcolor{lightgray}1.075	& \cellcolor{lightgray}1.026 &\cellcolor{lightgray}4.300 &\cellcolor{lightgray}0.304 \\ \hline

10000  & 6.900 & 8.312	 & 4.003 & 4.043 & \cellcolor{lightgray}3.259	& \cellcolor{lightgray}2.810 &\cellcolor{lightgray}7.678 &\cellcolor{lightgray}0.260 \\ \hline

100000  & 6.911 & 7.602	 & 3.567 & 3.575 & \cellcolor{lightgray}2.659	& \cellcolor{lightgray}2.174 &\cellcolor{lightgray}6.044 &\cellcolor{lightgray}0.331  \\ \hline
\end{tabular}}
\end{table}
\begin{figure}[!ht]
\centerline{\includegraphics[width=0.9\textwidth]{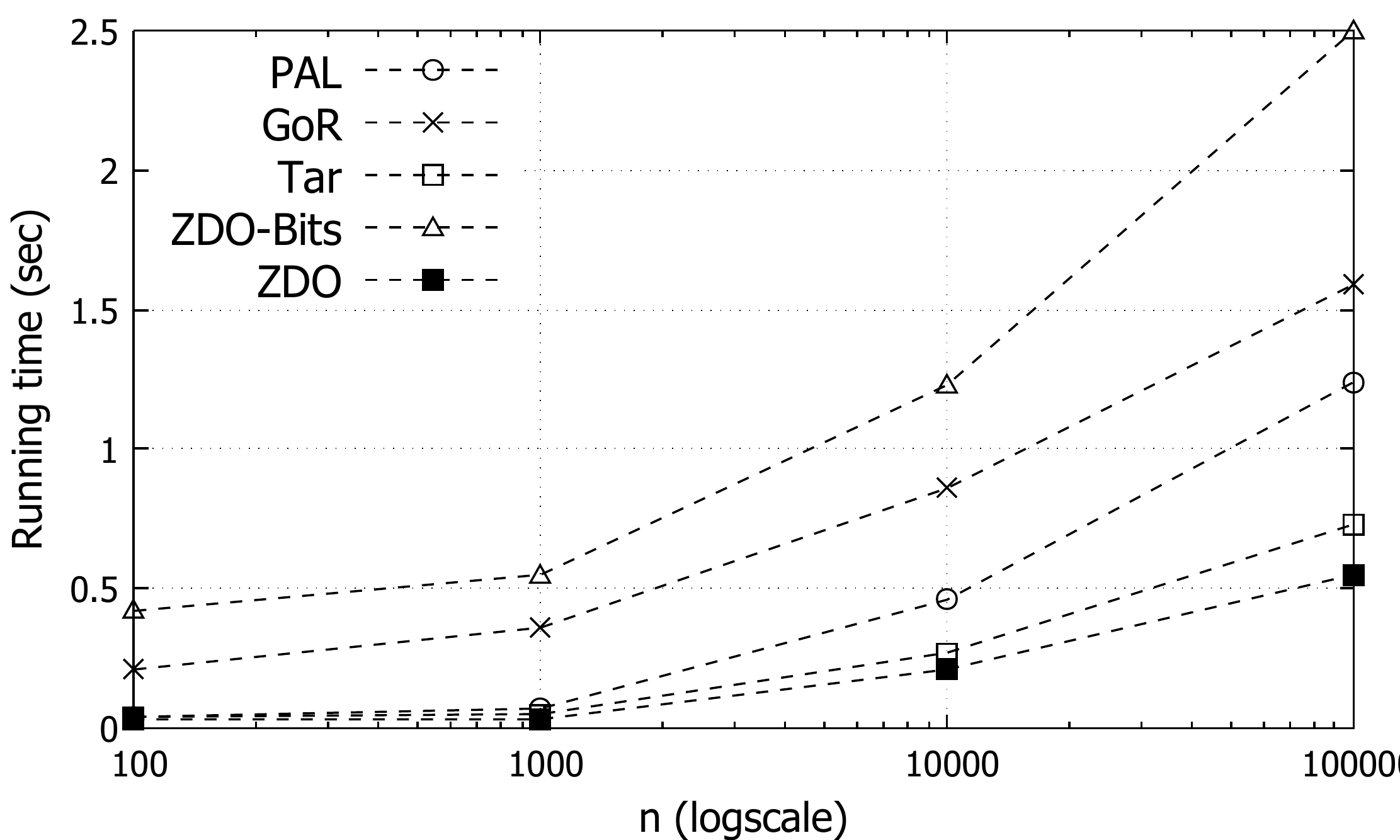}}
\caption{Running time for different algorithms on PD2S-rand family (m=$10^7$) }
\label{p2}
\end{figure}
%\FloatBarrier
%
%---------------------------------------------------------------------------------------
\clearpage
\paragraph{Random graphs with artificial source and potentials.}

The graphs in this family (PS-rand) are the same as the graphs in PD2S-rand family but with an additional artificial source. The artificial source is connected to the original source by a zero-length arc and to the other vertices by very long arcs ($10^8$). 
Table \ref{pt3} presents the number of relaxation checks per arc and Figure \ref{p3} presents the running time for different algorithms on this family. 
Consistently, ZDO-Bits is the best in terms of the number of main relaxation checks and ZDO is the fastest.

%\FloatBarrier
\begin{table}[!htb]
\centering
\caption{\\ Number of relaxation checks per arc for different algorithms on PS-rand family \label{pt3}}{
\begin{tabular}{|c||c|c|c|c|c|c|c|c|} \hline 
\multirow{2}{*}{n}  & \multirow{2}{*}{Pal} & \multicolumn{2}{c|} {GoR} & \multirow{2}{*}{Tar} & \multicolumn{2}{c|}  {ZDO} & \multicolumn{2}{c|}  {ZDO-Bits} \\ \cline{3-4} \cline{6-7} \cline{8-9}
 &  & aux	& main &   &  aux	& main &  aux	& main \\ \hline \hline

100  & 2.000 & 2.480	 & 1.000 & 2.000  & \cellcolor{lightgray}1.000	& \cellcolor{lightgray}1.000 &\cellcolor{lightgray}3.203 &\cellcolor{lightgray}0.497 \\ \hline

1000 & 2.324 & 2.575	 & 1.200 & 2.324 & \cellcolor{lightgray}1.069	& \cellcolor{lightgray}1.019 &\cellcolor{lightgray}3.527 &\cellcolor{lightgray}0.388 \\ \hline

10000  & 6.919 & 7.922	 & 3.832 & 4.905 & \cellcolor{lightgray}3.927	& \cellcolor{lightgray}3.460 &\cellcolor{lightgray}8.365 &\cellcolor{lightgray}0.573 \\ \hline

100000  & 7.640 & 8.523	 & 4.018 & 4.533 & \cellcolor{lightgray}3.330	& \cellcolor{lightgray}2.744 &\cellcolor{lightgray}7.400 &\cellcolor{lightgray}0.720 \\ \hline
\end{tabular}}
\end{table}
\begin{figure}[!ht]
\centerline{\includegraphics[width=0.9\textwidth]{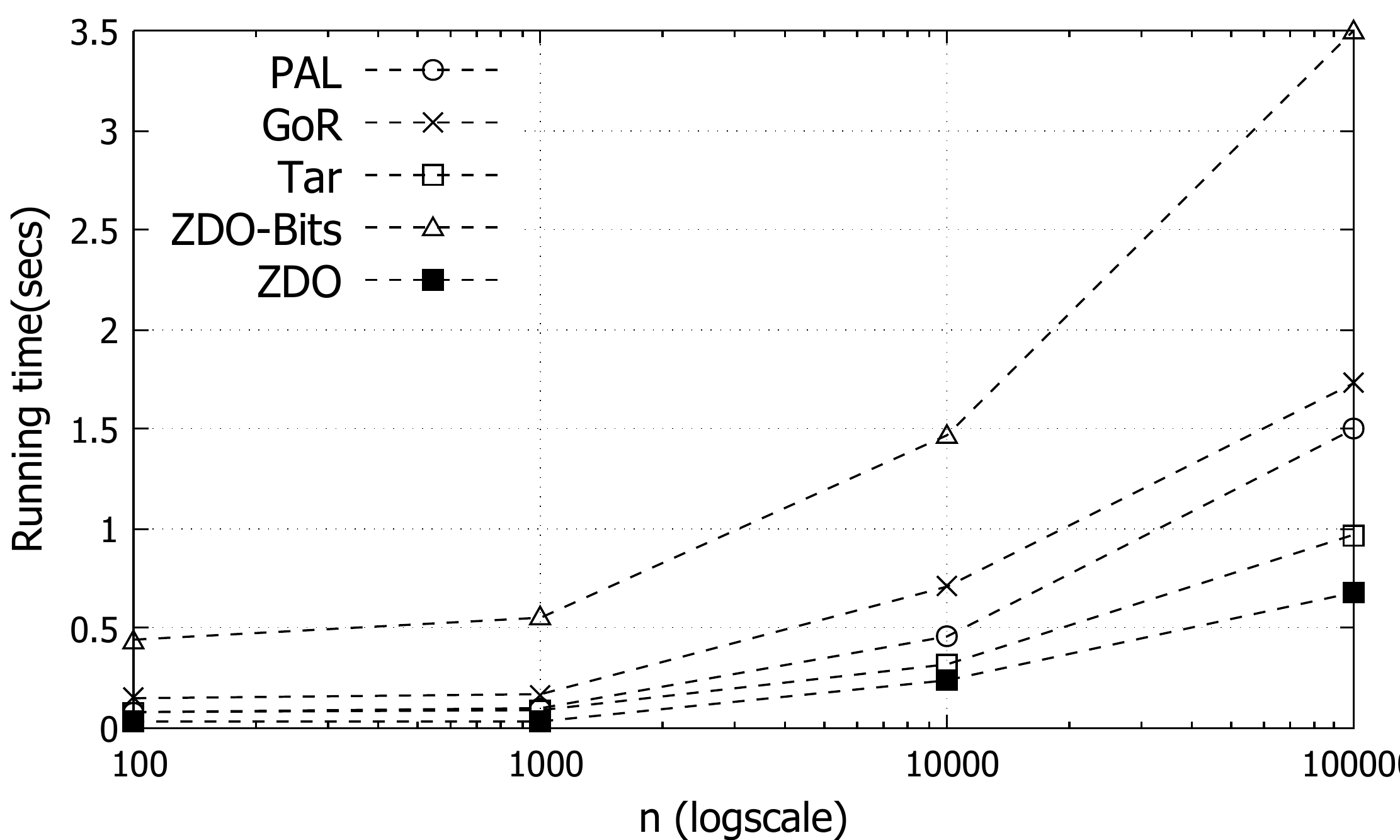}}
\caption{Running time for different algorithms on PS-rand family}
\label{p3}
\end{figure}
%\FloatBarrier
%
%---------------------------------------------------------------------------------------
\clearpage
\paragraph{Random graphs with Hamiltonian cycle and potentials.}

The graphs in this family (PC-rand) are the same as the graphs in PD2S-rand family but with a Hamiltonian cycle that connects all vertices together. In our experiments, we set arc lengths on the cycle to $1$ and pick the others uniformly at random from $[0,10000]$ as before.
Table \ref{pt4} presents the number of relaxation checks per arc and 
Figure \ref{p4} presents the running time for different algorithms on this family. 
ZDO is again the fastest algorithm in this family. 

%\FloatBarrier
\begin{table}[!htb]
\centering
\caption{\\ Number of relaxation checks per arc for different algorithms on PC-rand family (m=$10^7$)\label{pt4}}{
\begin{tabular}{|c||c|c|c|c|c|c|c|c|} \hline 
\multirow{2}{*}{n}  & \multirow{2}{*}{Pal} & \multicolumn{2}{c|} {GoR} & \multirow{2}{*}{Tar} & \multicolumn{2}{c|}  {ZDO} & \multicolumn{2}{c|}  {ZDO-Bits} \\ \cline{3-4} \cline{6-7} \cline{8-9}
 &  & aux	& main &   &  aux	& main &  aux	& main \\ \hline \hline

100  & 1.008 & 2.004	 & 1.000 & 1.008  & \cellcolor{lightgray}1.000	& \cellcolor{lightgray}1.000 &\cellcolor{lightgray}3.820 &\cellcolor{lightgray}0.491 \\ \hline

1000  & 1.566 & 2.044	 & 1.000 & 1.402 & \cellcolor{lightgray}1.114	& \cellcolor{lightgray}1.044 &\cellcolor{lightgray}4.403 &\cellcolor{lightgray}0.291 \\ \hline

10000  & 5.868 & 4.921	 & 2.390 & 3.718 & \cellcolor{lightgray}2.959	& \cellcolor{lightgray}2.576 &\cellcolor{lightgray}7.030 &\cellcolor{lightgray}0.262 \\ \hline

100000  & 11.428 & 9.455	 & 4.570 & 4.795 & \cellcolor{lightgray}3.788	& \cellcolor{lightgray}3.184 &\cellcolor{lightgray}7.900 &\cellcolor{lightgray}0.350 \\ \hline
\end{tabular}}
\end{table}
\begin{figure}[!ht]
\centerline{\includegraphics[width=0.9\textwidth]{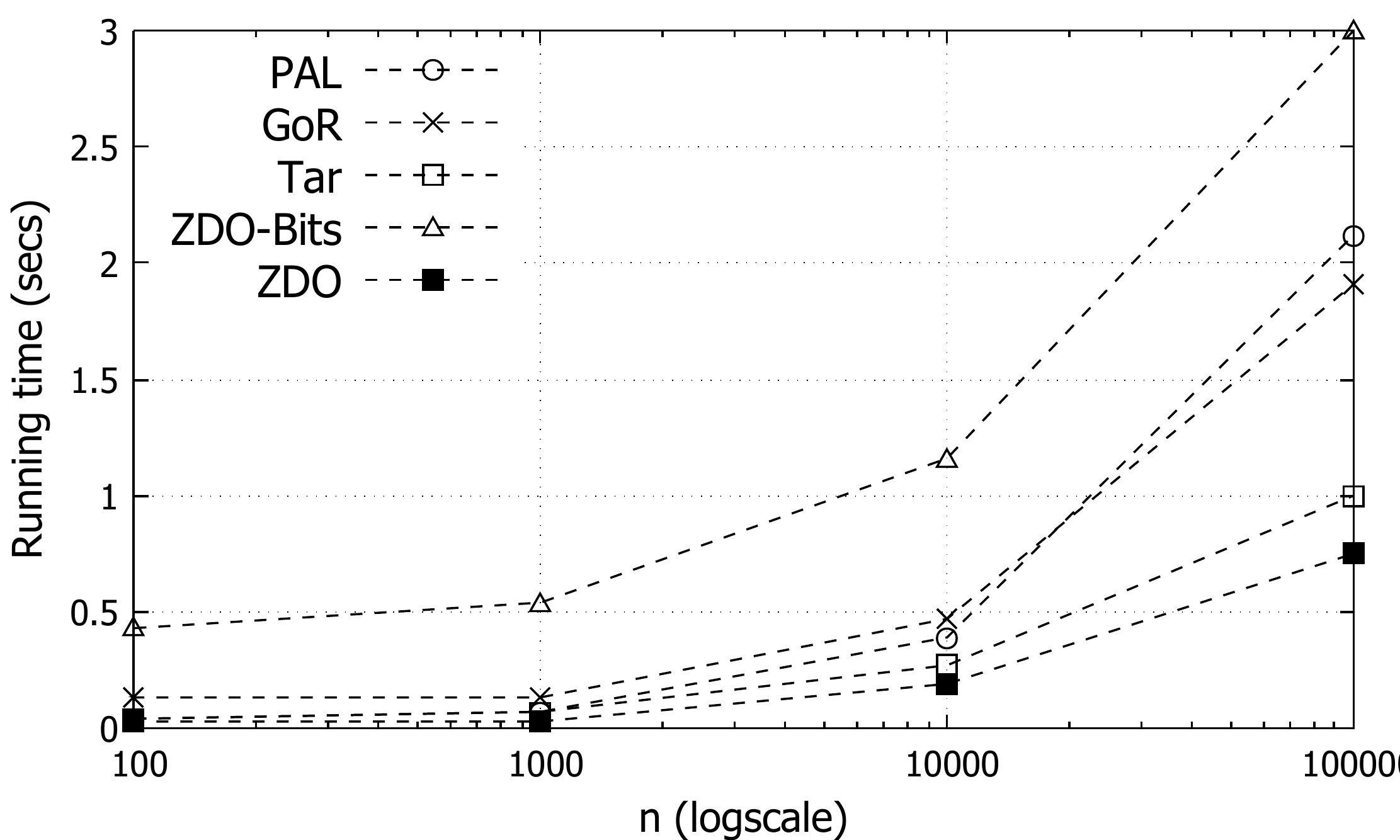}}
\caption{Running time for different algorithms on PC-rand family (m=$10^7$)}
\label{p4}
\end{figure}
%\FloatBarrier
%
%----------------------------------------------------------------------------------------
\clearpage
\subsection{Acyclic graphs}
All graphs in this family are produced by the SPACYC generator. The vertices are numbered from 1 to n, and there is a path of arcs $(i, i + 1)$ for $1 \leq i < n$. These arcs are called the path arcs. Additional arcs are generated by picking two distinct vertcies at random and creating an arc from the lower to the higher numbered vertex. The lengths of the additional arcs are selected uniformly at random from the interval $[L,U]$. 
%---------------------------------------------------------------------------------------

\paragraph{Fully positive arc lengths.}

For the positive acyclic family (FP-acyc) , the length of the path arcs is set to $1$ and the other arc lengths are selected uniformly at random from the interval $[0,10000]$. 
Table \ref{t10} presents the number of relaxation checks per arc and
Figure \ref{f12} presents the running time for different algorithms.
ZDO and Tar are the champions in this family.
 
%\FloatBarrier
\begin{table}[!htb]
\centering
\caption{\\ Number of relaxation checks per arc for different algorithms on FP-acyc family\label{t10}}{
\begin{tabular}{|c||c|c|c|c|c|c|c|c|} \hline 
\multirow{2}{*}{n}  & \multirow{2}{*}{Pal} & \multicolumn{2}{c|} {GoR} & \multirow{2}{*}{Tar} & \multicolumn{2}{c|}  {ZDO} & \multicolumn{2}{c|}  {ZDO-Bits} \\ \cline{3-4} \cline{6-7} \cline{8-9}
 &  & aux	& main &   &  aux	& main &  aux	& main \\ \hline \hline

8192  & 10.236 & 12.684	 & 6.106 & 6.330  & \cellcolor{lightgray}6.651	& \cellcolor{lightgray}5.549 &\cellcolor{lightgray}7.671 &\cellcolor{lightgray}0.694 \\ \hline

16384  & 11.940 & 14.703	 & 7.103 & 6.833 & \cellcolor{lightgray}7.246	& \cellcolor{lightgray}6.041 &\cellcolor{lightgray}8.167 &\cellcolor{lightgray}0.710 \\ \hline

32768  & 12.297 & 14.834	 & 7.168 & 7.126 & \cellcolor{lightgray}7.556	& \cellcolor{lightgray}6.283 &\cellcolor{lightgray}8.526 &\cellcolor{lightgray}0.747 \\ \hline

65536  & 13.813 & 17.719	 & 8.592 & 8.042 & \cellcolor{lightgray}8.473	& \cellcolor{lightgray}7.212 &\cellcolor{lightgray}9.477 &\cellcolor{lightgray}0.791 \\ \hline
 
131072  & 14.123 & 17.032	 & 8.241 & 8.085 & \cellcolor{lightgray}8.600	& \cellcolor{lightgray}7.247 &\cellcolor{lightgray}9.512 &\cellcolor{lightgray}0.791 \\ \hline

\end{tabular}}
\end{table}
\begin{figure}[!ht]
\centerline{\includegraphics[width=0.9\textwidth]{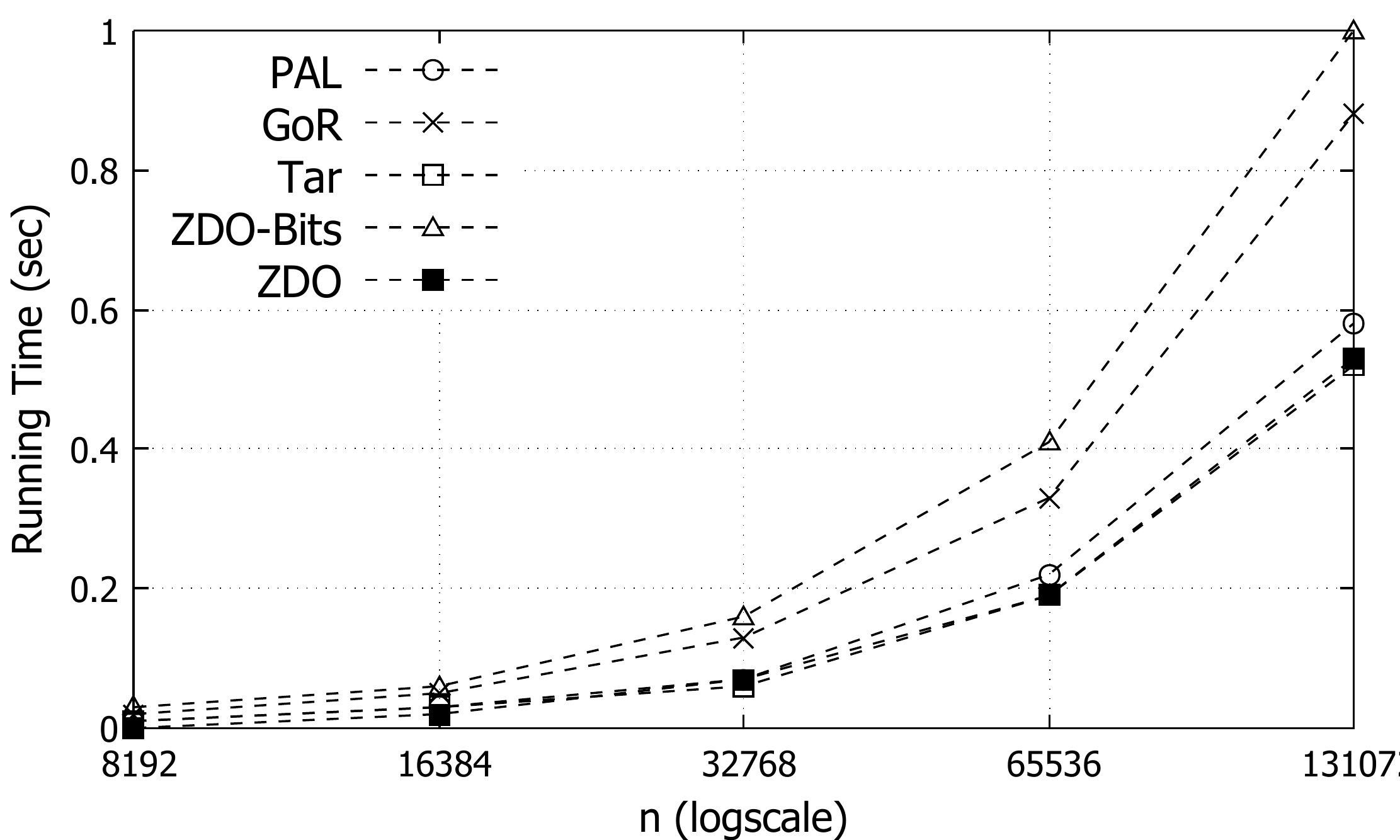}}
\caption{Running time for different algorithms on FP-acyc family}
\label{f12}
\end{figure}
%\FloatBarrier

%---------------------------------------------------------------------------------------
\clearpage
\paragraph{Fully negative arc lengths.}
For the negative acyclic (FN-acyc) family, the length of the path arcs is set to $-1$ and the other arc lengths are selected uniformly at random from the interval $[-10000,0]$. 
Table \ref{t11} presents the number of relaxation checks per arc and 
Figure \ref{f13} presents the running time for different algorithms.
GoR is notably superior in this family then comes ZDO and Tar.
GoR solves the problem in one DFS pass and performs very few relaxation checks.
ZDO-Bits beats Pal in this family.   

%\FloatBarrier
\begin{table}[!htb]
\centering
\caption{\\ Number of relaxation checks per arc for different algorithms on FN-acyc family\label{t11}}{
\begin{tabular}{|c||c|c|c|c|c|c|c|c|} \hline 
\multirow{2}{*}{n}  & \multirow{2}{*}{Pal} & \multicolumn{2}{c|} {GoR} & \multirow{2}{*}{Tar} & \multicolumn{2}{c|}  {ZDO} & \multicolumn{2}{c|}  {ZDO-Bits} \\ \cline{3-4} \cline{6-7} \cline{8-9}
 &  & aux	& main &   &  aux	& main &  aux	& main \\ \hline \hline

8192  & 210.170 & 2	& 1 & 36.386  & \cellcolor{lightgray}50.265	& \cellcolor{lightgray}19.200 &\cellcolor{lightgray}43.225 &\cellcolor{lightgray}7.476 \\ \hline

16384 & 310.362 & 2	& 1 & 51.602 & \cellcolor{lightgray}71.197	& \cellcolor{lightgray}27.912 &\cellcolor{lightgray}62.036 &\cellcolor{lightgray}10.174 \\ \hline

32768 & 452.342 & 2	& 1 & 74.074 & \cellcolor{lightgray}100.041	& \cellcolor{lightgray}41.082 &\cellcolor{lightgray}89.889 &\cellcolor{lightgray}13.944 \\ \hline

65536 & 661.663 & 2	& 1 & 103.269 & \cellcolor{lightgray}139.527	& \cellcolor{lightgray}58.476 &\cellcolor{lightgray}126.346 &\cellcolor{lightgray}18.895 \\ \hline
 
131072 & 958.853 & 2	& 1 & 146.923 & \cellcolor{lightgray}197.771	& \cellcolor{lightgray}85.004 &\cellcolor{lightgray}180.624 &\cellcolor{lightgray}26.093 \\ \hline

\end{tabular}}
\end{table}
\begin{figure}[!ht]
\centerline{\includegraphics[width=0.9\textwidth]{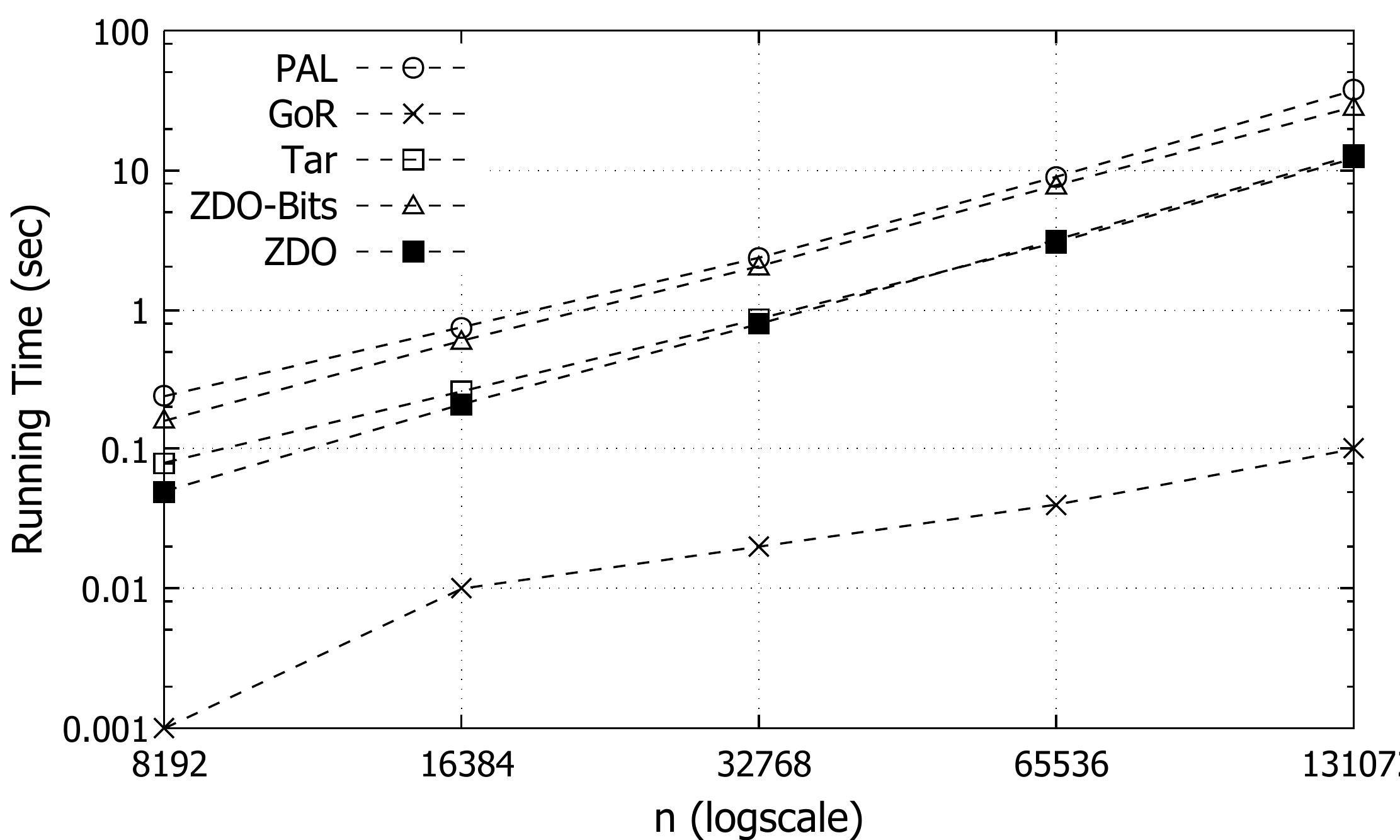}}
\caption{Running time for different algorithms on FN-acyc family}
\label{f13}
\end{figure}
%\FloatBarrier
%
%---------------------------------------------------------------------------------------
\clearpage
\paragraph{Positive to negative arcs.}
This family studies the effect of the percentage of negative arcs on the performance. 
In the positive-to-negative acyclic (P2N-acyc) family, the problem size is fixed $n=16384, m=262144$ and the values of $L$ and $U$ determine the expected fraction {\it f} of negative arcs. The path arc lengths are also selected at random
as for the other arcs.
Table \ref{t12} presents the number of relaxation checks per arc for different algorithms on this family.  
The number of main checks per arc for GoR drops to $1$ and the number of auxiliary checks drops to $2$ when the graph becomes fully negative ($f=100\%$).
Figure \ref{f14} presents the running time for different algorithms on this family. 
We note that the performance of the algorithms, except for GoR, start getting much worse when {\it f} exceeds $40\%$. 

%\FloatBarrier
\begin{table}[!htb]
\centering
\caption{\\ Number of relaxation checks per arc for different algorithms on P2N-acyc family ($n=$16384, $m= $262144)\label{t12}}{
\begin{tabular}{|c||c|c|c|c|c|c|c|c|} \hline 
\multirow{2}{*}{$f\%$}  & \multirow{2}{*}{Pal} & \multicolumn{2}{c|} {GoR} & \multirow{2}{*}{Tar} & \multicolumn{2}{c|}  {ZDO} & \multicolumn{2}{c|}  {ZDO-Bits} \\ \cline{3-4} \cline{6-7} \cline{8-9}
 &  & aux	& main &   &  aux	& main &  aux	& main \\ \hline \hline

0  & 1.631 & 2.793	 & 1.245 & 1.514  & \cellcolor{lightgray}1.645	& \cellcolor{lightgray}1.223 &\cellcolor{lightgray}2.049 &\cellcolor{lightgray}0.292 \\ \hline

10  & 2.112 & 3.077	 & 1.379 & 1.845 & \cellcolor{lightgray}2.298	& \cellcolor{lightgray}1.463 &\cellcolor{lightgray}2.470 &\cellcolor{lightgray}0.345 \\ \hline

20  & 6.106 & 5.229	 & 2.394 & 3.857 & \cellcolor{lightgray}6.586	& \cellcolor{lightgray}2.987 &\cellcolor{lightgray}5.105 &\cellcolor{lightgray}0.684 \\ \hline

30  & 22.864 & 10.426	 & 4.975 & 10.093 & \cellcolor{lightgray}17.803	& \cellcolor{lightgray}7.201 &\cellcolor{lightgray}13.906 &\cellcolor{lightgray}1.860 \\ \hline

40  & 91.084 & 20.597	 & 10.098 & 31.964  & \cellcolor{lightgray}50.506	& \cellcolor{lightgray}19.984 &\cellcolor{lightgray}43.941 &\cellcolor{lightgray}6.074  \\ \hline

50  & 291.489 & 29.030	 & 14.367 & 113.871 & \cellcolor{lightgray}173.114	& \cellcolor{lightgray}60.859 &\cellcolor{lightgray}153.359 &\cellcolor{lightgray}24.620 \\ \hline
							
60  & 446.865 & 32.466	 & 16.121 & 270.853 & \cellcolor{lightgray}418.653	& \cellcolor{lightgray}127.511 &\cellcolor{lightgray}345.463 &\cellcolor{lightgray}64.765 \\ \hline

... & .. & .. &.. & .. & \cellcolor{lightgray}.. & \cellcolor{lightgray}.. &\cellcolor{lightgray}.. &\cellcolor{lightgray}.. \\ \hline
 
100  & 472.062 & 2.000	 & 1.000 & 529.338 & \cellcolor{lightgray}819.731	& \cellcolor{lightgray}196.621 &\cellcolor{lightgray}550.907 &\cellcolor{lightgray}143.408 \\ \hline
 
\end{tabular}}
\end{table}
%\FloatBarrier
%
\begin{figure}[!ht]
\centerline{\includegraphics[width=0.9\textwidth]{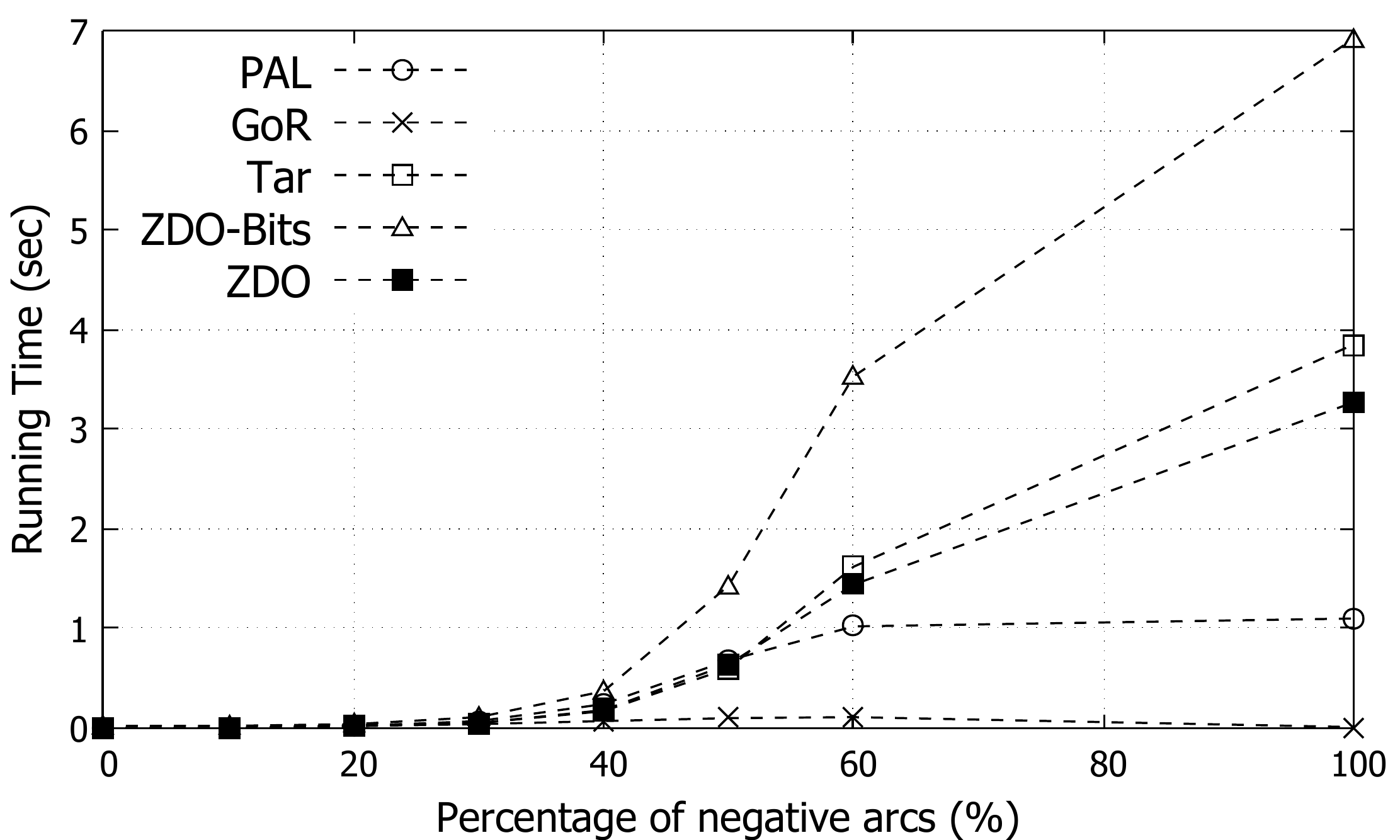}}
\caption{Effect of negative arcs on P2N-acyc family, n= 16384 m = 262144}
\label{f14}
\end{figure}
%\FloatBarrier
%---------------------------------------------------------------------------------------
\clearpage
\subsection{Negative-cycle detection}
This subsection presents the performance of different algorithms in detecting negative cycles. 

\paragraph{RAND05.}
This family is a fixed network with $n = 2 \cdot 10^6$ and $m = 10^7$. The maximum arc length $U$ is fixed at $32000$ and the minimum arc length $L$ varies from $0$ to $-64000$. The running times of different algorithms keep increasing as $L$ decreases until negative cycles start appearing in the graph. Afterward, as $L$ decreases, more short negative cycles appear and their existence can be discovered faster. It is clear that short negative cycles start appearing in $G_d$ before $G_p$, and as GoR traverses $G_d$ so its chance to promptly discover negative cycles with very few relaxation checks is more than others. This is illustrated in Table \ref{t145} and Figure \ref{f145}. 

%\FloatBarrier
\begin{table}[!htb]
\centering
\caption{\\ Number of relaxation checks per arc for different algorithms on RAND05 \label{t145}}{
\begin{tabular}{|c||c|c|c|c|c|c|c|c|} \hline 
\multirow{2}{*}{$- L * 10^3$}  & \multirow{2}{*}{Pal} & \multicolumn{2}{c|} {GoR} & \multirow{2}{*}{Tar} & \multicolumn{2}{c|}  {ZDO} & \multicolumn{2}{c|}  {ZDO-Bits} \\ \cline{3-4} \cline{6-7} \cline{8-9}
 &  & aux	& main &   &  aux	& main &  aux	& main \\ \hline \hline

0  & 2.601 & 4.281 & 1.946	    & 2.285       & \cellcolor{lightgray}2.116 & \cellcolor{lightgray}1.830 &\cellcolor{lightgray}2.812 &\cellcolor{lightgray}0.802 \\ \hline

1  & 3.333 & 4.925 & 2.251	    & 2.743       & \cellcolor{lightgray}2.546 & \cellcolor{lightgray}2.200 &\cellcolor{lightgray}3.378 &\cellcolor{lightgray}0.938 \\ \hline

2  & 4.552 & 5.859 & 2.687	   & 3.397        & \cellcolor{lightgray}3.174 & \cellcolor{lightgray}2.757 &\cellcolor{lightgray}4.190 &\cellcolor{lightgray}1.126 \\ \hline
 
4  & 2.973 & 2.030 & 0.525	  & 1.430          & \cellcolor{lightgray}1.131	& \cellcolor{lightgray}0.849 &\cellcolor{lightgray}1.994 &\cellcolor{lightgray}0.592 \\ \hline

8  & 0.685 & 0.053 & 0.003	     & 0.849             & \cellcolor{lightgray}0.878	& \cellcolor{lightgray}0.619 &\cellcolor{lightgray}1.734 &\cellcolor{lightgray}0.519 \\ \hline

16  & 0.138 & 0.001 & 0.000	    & 0.125             & \cellcolor{lightgray}0.116	& \cellcolor{lightgray}0.086 &\cellcolor{lightgray}0.349 &\cellcolor{lightgray}0.083 \\ \hline

32  & 0.016 & 0.000 & 0.000	    & 0.044             & \cellcolor{lightgray}0.050	& \cellcolor{lightgray}0.038 &\cellcolor{lightgray}0.169 &\cellcolor{lightgray}0.038 \\ \hline

64  & 0.025 & 0.000 & 0.000	    & 0.020             & \cellcolor{lightgray}0.023	& \cellcolor{lightgray}0.018 &\cellcolor{lightgray}0.088 &\cellcolor{lightgray}0.018 \\ \hline

\end{tabular}}
\end{table}
%\FloatBarrier

\begin{figure}[!htb]
\centerline{\includegraphics[width=0.9\textwidth]{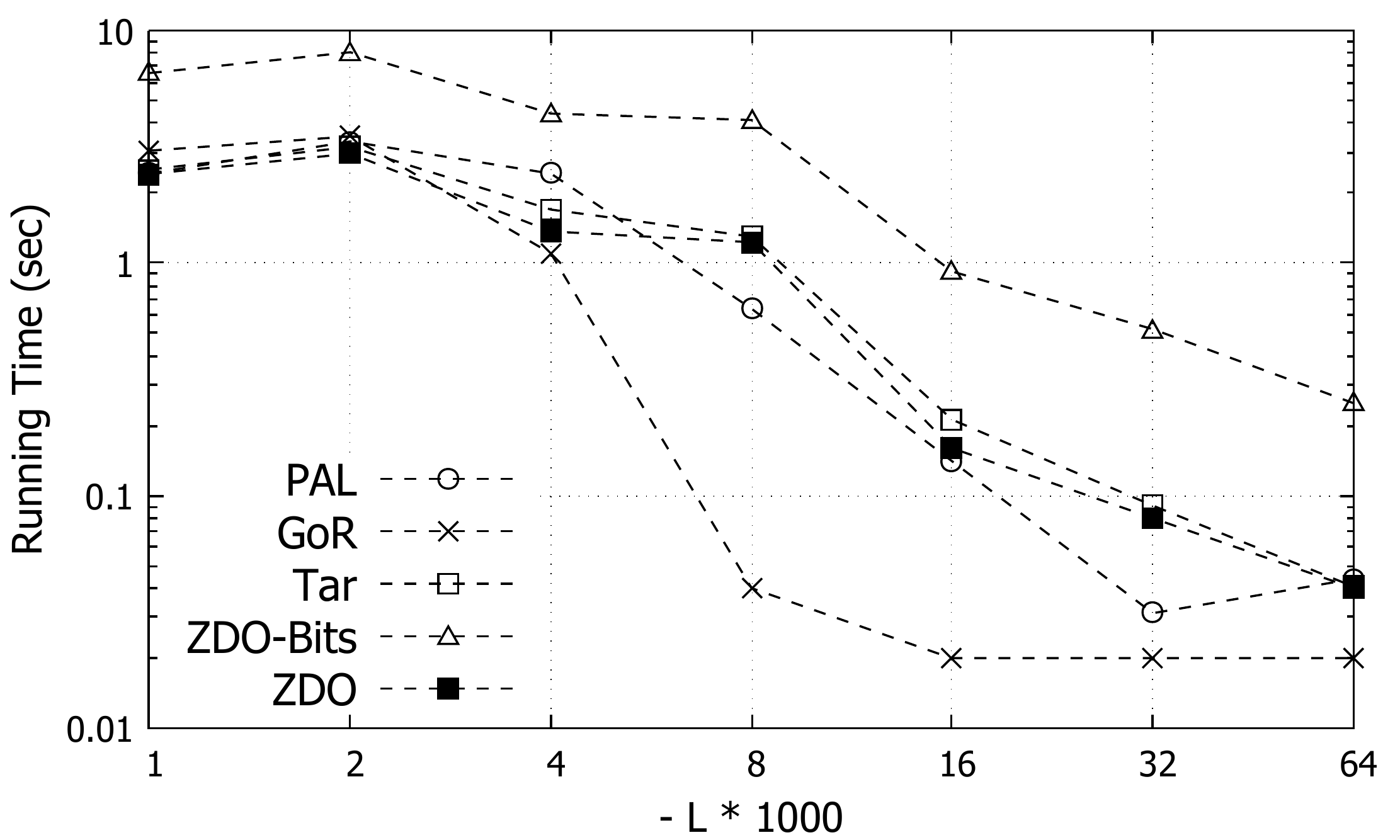}}
\caption{Negative-cycle detection results on Rand05}
\label{f145}
\end{figure}
%\FloatBarrier

\clearpage
\paragraph{SQNC05.}
This family is a layered square grid with a Hamilton negative cycle.  
Layer arc lengths are chosen uniformly at random from the interval [1000, 10000]. 
Inter-layer arc lengths, including those from the source, are chosen uniformly at random from the interval [1, 100]. 
We used Tor generator \cite{cherkassky1999negative} to generate this family.
It is reported that Tar is the best algorithm in detecting negative cycles specially for the networks that have long negative cycles 
\cite{cherkassky1999negative}.  
As indicated by Table \ref{t150} and Figure \ref{f150}, to detect the negative Hamiltonian cycle, ZDO-Bits and ZDO outperform Tar and other algorithms with respect to the number of main relaxation checks and the running time respectively. 

%\FloatBarrier
\begin{table}[!htb]
\centering
\caption{\\ Number of relaxation checks per arc for different algorithms on SQNC05 \label{t150}}{
\begin{tabular}{|c||c|c|c|c|c|c|c|c|} \hline 
\multirow{2}{*}{n}  & \multirow{2}{*}{Pal} & \multicolumn{2}{c|} {GoR} & \multirow{2}{*}{Tar} & \multicolumn{2}{c|}  {ZDO} & \multicolumn{2}{c|}  {ZDO-Bits} \\ \cline{3-4} \cline{6-7} \cline{8-9}
 &  & aux	& main &   &  aux	& main &  aux	& main \\ \hline \hline

64 & 18.996 & 21.524	& 9.686 & 10.109  & \cellcolor{lightgray}9.943	& \cellcolor{lightgray}9.545 &\cellcolor{lightgray}10.921 &\cellcolor{lightgray}3.992 \\ \hline

128 & 23.154 & 24.478	& 11.141 & 12.000 & \cellcolor{lightgray}11.778	& \cellcolor{lightgray}11.385 &\cellcolor{lightgray}12.863 &\cellcolor{lightgray}4.653 \\ \hline

256 & 28.234 & 28.322	& 13.048 & 14.154 & \cellcolor{lightgray}13.914	& \cellcolor{lightgray}13.512 &\cellcolor{lightgray}15.094 &\cellcolor{lightgray}5.416 \\ \hline

512 & 34.071 & 33.011	& 15.384 & 16.383 & \cellcolor{lightgray}16.119	& \cellcolor{lightgray}15.709 &\cellcolor{lightgray}17.391 &\cellcolor{lightgray}6.196 \\ \hline
 
1024& 37.101  & 35.130	& 16.423 & 17.864 & \cellcolor{lightgray}17.581	& \cellcolor{lightgray}17.159 &\cellcolor{lightgray}18.942 &\cellcolor{lightgray}6.731 \\ \hline

\end{tabular}}
\end{table}
%\FloatBarrier

\begin{figure}[!ht]
\centerline{\includegraphics[width=0.9\textwidth]{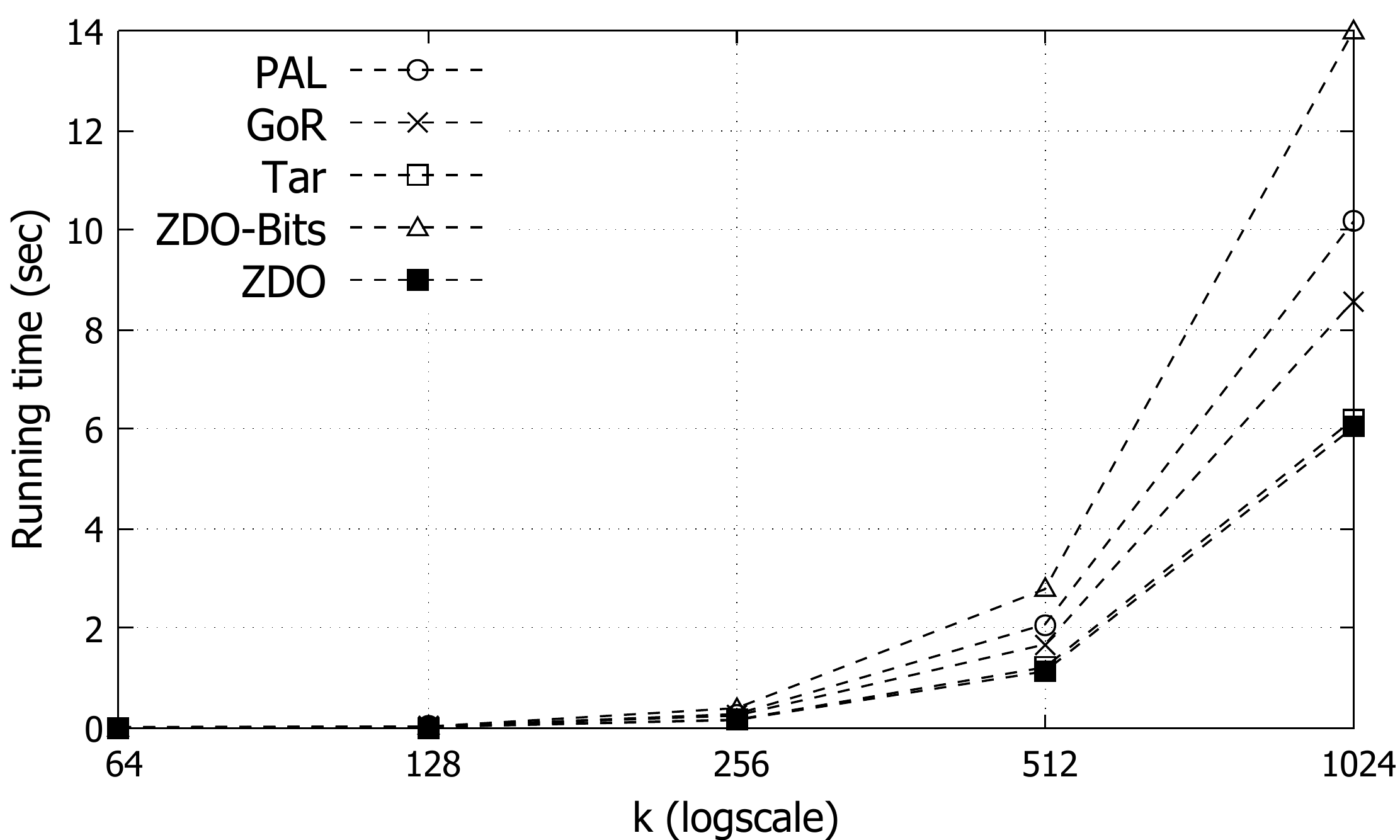}}
\caption{Negative cycle detection results in SQNC05}
\label{f150}
\end{figure}
%\FloatBarrier
%-------------------------------------------------
\clearpage
\section{Summary and comments on the results}

This section summarizes our observations as follows.

\begin{itemize}

\item ZDO achieves the best running time on eleven out of fourteen well-known families. 
It comes second after GoR on two families, and third after GoR and Pal on one family.

\item ZDO is consistently superior over Tar, the most efficient state-of-the-art algorithm, on all the families.

\item GoR is the fastest algorithm on the FN-acyc and P2N-acyc families, 
and is generally very efficient for acyclic graphs as it runs in linear time. 
GoR is also the champion and is slightly faster than ZDO on the NH-grid family.  

\item Pal achieves a comparable performance with ZDO and Tar in FP-acyc family. Also, it has the best performance after GoR in P2N-acyc family. In negative cycle detection, Pal can detect the short negative cycles in RAND05 family as fast as ZDO.

\item Both ZDO and ZDO-Bits are superior on the Bad-GoR family. ZDO-bits is more than three orders of magnitude faster than the other algorithms. This illustrates that addressing the relaxable graph $G'_d$ by our technique would possibly improve the performance significantly when compared to addressing the admissible graph $G_d$ by other algorithms.

\item ZDO can detect long negative cycles on SQNC05 family faster than the other algorithms. 
However, for short negative cycles on RAND05 family, ZDO is the second best after GoR.

\end{itemize}

\noindent The following comments justify the performance of our algorithm.

\begin{itemize}

\item The main advantage of our algorithm is that it deals with and scans a fewer subset of the vertices in each round compared to the other algorithms. The subtree-disassembly heuristic, in a sense, does that by excluding some vertices so as not to be promptly scanned.

\item Even though in several cases the ZDO implementation performs a number of relaxation checks comparable to that of Tar,
the new proposed algorithms are still faster. One of the reasons is that Tar uses the subtree-disassembly heuristic to decrease the number of candidate vertices
by removing them from the queue after being inserted, while for ZDO most of these vertices are not inserted in the queue in the first place.

\item ZDO-Bits is faster than ZDO for the Star family as it encodes a huge number of checks in the bit vectors. Also,
it is the second fastest algorithm after GoR in solving the Star family. If there are many positive arcs on the chain, the problem becomes even hard for GoR. ZDO-Bits is more than three orders of magnitude faster than the other algorithms.

\item It is obvious that the cost of a main relaxation check is significantly more than the cost of an auxiliary check. This is due to the overhead of the potential update and parent update in addition to the call for the subtree-disassembly procedure in the case of the main relaxation check.

\item Although ZDO-Bits always performs the fewest number of main relaxation checks, it loses the competition when it comes to the running time because of the high cost of setting and clearing the bits. 
When we needed a large number of bits (e.g. in dense graphs), we split the bits among multiple words, each with 64 bits (max bit word in our machine). This increased the overhead for fetching the designated word before setting or clearing the bits.
%Finally, if we have a machine with longer bit words and faster bit operations, and given number of relaxations in the tables, we believe that the Bit-vec implementation will be faster. Another advantage is that ZDO-Bits is stable over all families.}
 
\end{itemize}

\chapter{Conclusion}
\label{conclusions}
In this thesis, a review of the single source shortest path problem and the negative cycle detection problem is proposed.
In addition, a new algorithm for the shortest-path problem and for detecting negative cycles is discussed.
The main idea behind the proposed algorithm is to simultaneously consider both the relaxable graph and the parent graph and select the fewest most-effective vertices to scan.
The proposed algorithm achieves the same $O(n \cdot m)$ time bound as the traditional Bellman-Ford-Moore algorithm but outperforms it and other state-of-the-art algorithms in practice.

In addition to the running time, the results were presented using a new performance metric that is more realistic than the number of scans. 
Experiments show that the proposed algorithm outperforms the state-of-the-art algorithms in practice while maintaining the same $O(n \cdot m)$ time bound as the other algorithms.

%% ----------------------------------------------------------------
% Now begin the Appendices, including them as separate files

\addtocontents{toc}{\vspace{2em}} % Add a gap in the Contents, for aesthetics

\appendix % Cue to tell LaTeX that the following 'chapters' are Appendices

\addtocontents{toc}{\vspace{2em}}  % Add a gap in the Contents, for aesthetics
\backmatter

%% ----------------------------------------------------------------
\label{Bibliography}

\bibliographystyle{plain}

\end{document}